\documentclass[11pt]{article} 

\usepackage[utf8]{inputenc}
\usepackage[pdftex,left=1in,top=1in,bottom=1in,right=1in]{geometry}

\usepackage{amsmath, amssymb, dsfont, mathrsfs,amsthm}

\usepackage{graphicx}
\usepackage{caption}
\usepackage{bm}
\usepackage{tikz}
\usepackage{pgfplots}
\usepackage{xcolor}
\usetikzlibrary{positioning}
\usepackage[colorlinks=true,citecolor=blue,linkcolor=blue]{hyperref}
\hypersetup{
    colorlinks,
    linkcolor={red!50!black},
    citecolor={blue!50!black},
    urlcolor={blue!80!black}
}
\usepackage[nameinlink, noabbrev, capitalize]{cleveref}

\newcommand{\ap}{\Phi}

\newcommand{\cC}{\mathcal{C}}

\newcommand{\cB}{\mathcal{B}}

\newcommand{\cS}{\mathcal{S}}

\newtheorem{thm}{Theorem}

\newtheorem{lem}{Lemma}

\newtheorem{defn}{Definition}

\newcommand{\bits}{\{0,1\}}

\newcommand{\Enc}{\mathsf{Enc}}
\newcommand{\Dec}{\mathsf{Dec}}

\definecolor{mygreen}{HTML}{588D6A}
\definecolor{myred}{HTML}{C86733}
\definecolor{myblue}{HTML}{5B68FF}
\definecolor{myshadow}{HTML}{E6C5B4}

\let\originalleft\left
\let\originalright\right
\renewcommand{\left}{\mathopen{}\mathclose\bgroup\originalleft}
\renewcommand{\right}{\aftergroup\egroup\originalright}

\title{Beyond Single-Deletion Correcting Codes:\\ Substitutions and Transpositions
}

\author{Ryan Gabrys\thanks{Department of Electrical and Computer Engineering, University of California, San Diego, USA. Email: \texttt{ryan.gabrys@gmail.com}.}\and Venkatesan Guruswami\thanks{Computer Science Department, Carnegie Mellon University, Pittsburgh, USA. Email: \texttt{venkatg@cs.cmu.edu}. Research supported  in part by the NSF grants CCF-1814603 and CCF-2107347.}\and Jo\~ao Ribeiro\thanks{Computer Science Department, Carnegie Mellon University, Pittsburgh, USA. Email: \texttt{jlourenc@cs.cmu.edu}. Research supported in part by the NSF grants CCF-1814603 and CCF-2107347 and by the following grants of Vipul Goyal: the NSF award 1916939, DARPA SIEVE program, a gift from Ripple, a DoE NETL award, a JP Morgan Faculty Fellowship, a PNC center for financial services innovation award, and a Cylab seed funding award.}\and Ke Wu\thanks{Computer Science Department, Carnegie Mellon University, Pittsburgh, USA. Email: \texttt{kew2@cs.cmu.edu}. Research supported in part by a DARPA SIEVE award, SRI Subcontract Number 53978, and DARPA Prime Contract Number HR00110C0086.}
}

\hyphenation{op-tical net-works semi-conduc-tor}

\date{}

\begin{document}
	
\maketitle
	
\begin{abstract}
We consider the problem of designing low-redundancy codes in settings where one must correct deletions in conjunction with substitutions or adjacent transpositions; a combination of errors that is usually observed in DNA-based data storage.
One of the most basic versions of this problem was settled more than 50 years ago by Levenshtein, who proved that binary Varshamov-Tenengolts codes correct one arbitrary edit error, i.e., one deletion \emph{or} one substitution, with nearly optimal redundancy.
However, this approach fails to extend to many simple and natural variations of the binary single-edit error setting.
In this work, we make progress on the code design problem above in three such variations:
\begin{itemize}
    \item We construct linear-time encodable and decodable length-$n$ non-binary codes correcting a single edit error with nearly optimal redundancy $\log n+O(\log\log n)$, providing an alternative simpler proof of a result by Cai, Chee, Gabrys, Kiah, and Nguyen (IEEE Trans.\ Inf.\ Theory 2021).
    This is achieved by employing what we call \emph{weighted VT sketches}, a notion that may be of independent interest.
    
    \item We construct linear-time encodable and list-decodable binary codes with list-size $2$ for one deletion \emph{and} one substitution with redundancy $4\log n+O(\log\log n)$. This matches the existential bound up to an $O(\log\log n)$ additive term.
    
    \item We show the existence of a binary code correcting one deletion \emph{or} one adjacent transposition with nearly optimal redundancy $\log n+O(\log\log n)$.
\end{itemize}
\end{abstract}

\section{Introduction}\label{sec:intro}

Deletions, substitutions, and transpositions are some of the most common types of errors jointly affecting information encoded in DNA-based data storage systems~\cite{YGM17,OAC+18}.
Therefore, it is natural to consider models capturing the interplay between these types of errors, along with the best possible codes for these settings.
More concretely, one usually seeks to pin down the optimal redundancy required to correct such errors, and also to design fast encoding and decoding procedures for low-redundancy codes.
It is well-known that deletions are challenging to handle even in isolation, since they cause a loss of synchronization between sender and receiver.
The situation where one aims to correct deletions in conjunction with other reasonable types of errors is even direr.
In fact, our understanding of this interplay remains scarce even in basic settings where only one or two such worst-case errors may occur.

One of the most fundamental settings where deletions interact with the other types of errors mentioned above is that of correcting a single \emph{edit} error (i.e., a deletion, insertion, or substitution) over a \emph{binary} alphabet.
In this case, linear-time encodable and decodable binary codes correcting a single edit error with nearly optimal redundancy have been known for more than 50 years.
Levenshtein~\cite{Lev65} showed that the binary Varshamov-Tenengolts (VT) code~\cite{VT65} defined as
\begin{equation}\label{eq:VTcode}
    \cC=\left\{x\in\bits^n:\sum_{i=1}^n i\cdot x_i = a \mod (2n+1)\right\}
\end{equation}
corrects one arbitrary edit error.
For appropriate choices of $a$ and $b$, this code has redundancy at most $\log n+2$, and it is not hard to see that at least $\log n$ bits of redundancy are required to correct one edit error.
Remarkably, a greedy Gilbert-Varshamov-type argument only guarantees the existence of single-edit correcting codes with redundancy $2\log n$ -- much higher than what can be achieved with the VT code.
We recommend Sloane's excellent survey~\cite{Slo02} for a more in-depth overview of binary VT codes and their connections to combinatorics.

Although the questions of determining the optimal redundancy and giving nearly-optimal explicit constructions of codes in the binary single-edit setting have been settled long ago, the underlying approach fails to extend to many simple, natural variations of this setting combining deletions with substitutions and transpositions.
In this work, we make progress on these questions in three such fundamental variations, which we proceed to describe next.

\subsection{\emph{Non-binary} single-edit correcting codes}

We begin by considering the problem of correcting a single arbitrary edit error over a non-binary alphabet.
This setting is especially relevant due to its connection to DNA-based data storage, which requires coding over a $4$-ary alphabet.
In this case, the standard \emph{VT sketch}
\begin{equation}\label{eq:VTsketch}
    f(x)=\sum_{i=1}^n i\cdot x_i\mod N,
\end{equation}
which allows us to correct one binary edit error in~\eqref{eq:VTcode} with an appropriate choice of $N$, is no longer enough.
Instead, we present a natural extension of the binary VT code to a non-binary alphabet via a new notion of \emph{weighted VT sketches}, which yields the following order-optimal result.
\begin{thm}\label{thm:nonbinedit}
There exists a $4$-ary\footnote{A $4$-ary alphabet is relevant for DNA-based data storage.} single-edit correcting code $\mathcal{C}\subseteq \{0,1,2,3\}^n$ with
\begin{equation*}
    \log n+\log\log n+7+o(1)
\end{equation*}
bits of redundancy, where $o(1)\to 0$ when $n\to\infty$. 
Moreover, there exists a single edit-correcting code $\mathcal{C}\subseteq \{0,1,2,3\}^n$ with $\log n+O(\log\log n)$ redundant bits that supports linear-time encoding and decoding.
The existential result extends to larger alphabet size $q$ with $\log n+O_q(\log\log n)$ redundant bits.
\end{thm}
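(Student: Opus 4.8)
The plan is to generalize the Tenengolts single-deletion construction so that it simultaneously handles deletions, insertions, and substitutions over $\{0,1,2,3\}$, while keeping the total redundancy near $\log n$. First I would exploit the fact that the decoder learns the error type for free from the length of the received word: length $n-1$ signals a deletion, length $n+1$ an insertion, and length $n$ a substitution. This lets me design the sketch so that each case is analyzed separately, and it suffices that the imposed constraints uniquely identify the edit within the relevant case.

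The backbone of the construction is a derived binary \emph{signature} sequence $\alpha \in \bits^n$ that records the coarse shape of $x$, e.g.\ $\alpha_i = 1$ if $x_i \geq x_{i-1}$ and $\alpha_i = 0$ otherwise (with the convention $x_0 = 0$). A single edit on $x$ induces a controlled local change on $\alpha$, so a VT-type constraint $\sum_{i} i \cdot \alpha_i \equiv a \pmod{N}$ with $N = \Theta(n)$ localizes the edit to a short window; this contributes the dominant $\log n$ term of the redundancy. To then pin down the exact symbol that was deleted, inserted, or substituted, I would impose a value-recovery constraint $\sum_i x_i \equiv b \pmod q$, costing only $O(1)$ bits. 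The new ingredient -- the \emph{weighted VT sketch} -- is what ties these together: rather than weighting positions linearly as in~\eqref{eq:VTsketch}, I would weight them by quantities derived from $\alpha$ (so that the sketch value changes predictably under the combined shift-and-value perturbation of a single edit), and take it modulo a small modulus. This disambiguates the residual uncertainty left by the signature VT constraint at an additional cost of only $O(\log\log n)$ bits.

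For the existential $\log n + \log\log n + 7 + o(1)$ bound, I would not fix the syndrome targets $a,b$ in advance: an averaging argument over all admissible target tuples shows that some choice yields a code of size at least $q^n$ divided by the number of tuples, and a careful accounting of the moduli gives the stated constant. For the linear-time statement, I would instead make the encoding systematic -- appending redundancy symbols that realize the prescribed syndromes, which can be computed in a single linear pass -- and give a decoder that (i) recomputes the signature of the received word, (ii) uses the signature VT constraint to localize, and (iii) uses the value and weighted sketches to correct, each step running in linear time. The extension to alphabet size $q$ is immediate: only the value constraint and the weighted-sketch modulus scale with $q$, yielding $O_q(\log\log n)$ lower-order redundancy.

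The step I expect to be the main obstacle is the substitution case. Unlike a binary substitution, which flips a single signature bit, a non-binary substitution at position $i$ can simultaneously flip $\alpha_i$ and $\alpha_{i+1}$ (since both the comparison of $x_i$ with $x_{i-1}$ and the comparison of $x_{i+1}$ with $x_i$ change), so the induced perturbation on $\alpha$ is an \emph{adjacent double flip} rather than a single bit edit. A plain VT constraint on $\alpha$ is not guaranteed to resolve such double flips, and several distinct substitutions can produce the same signature change while altering the value sum identically. The crux of the argument is to show that the weighted VT sketch, together with the value constraint, breaks all these ties at once -- across deletions, insertions, and substitutions -- while keeping every modulus small enough that the redundancy stays at $\log n + O(\log\log n)$. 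Establishing this uniqueness, by a case analysis over the finitely many shapes a single edit can take on the pair $(x,\alpha)$, is where the real work lies.
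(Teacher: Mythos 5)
There is a genuine gap, and it is precisely at the point you flag as the ``main obstacle'': the substitution case does not just require more careful case analysis under your architecture --- the architecture itself cannot meet the redundancy budget there. Your dominant $\log n$-bit constraint is a VT sum on the signature $\alpha$, but a substitution can leave $\alpha$ entirely unchanged: e.g.\ replacing $\dots 0\,2\,0\dots$ by $\dots 0\,1\,0\dots$ preserves both comparisons $x_i\geq x_{i-1}$ and $x_{i+1}\geq x_i$, so $\alpha$ is untouched and the signature VT sketch provides \emph{zero} localization (similarly, the adjacent double flip $(+1,-1)$ changes the sketch by a position-independent constant). A typical codeword has $\Omega(n)$ positions admitting a signature-preserving substitution with the same value change mod $q$, all of which are consistent with the signature VT constraint and the value constraint. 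Whatever weighted sketch you add must then single out one position among $\Omega(n)$ candidates, which information-theoretically requires about $\log n$ further bits --- but you have allotted it only $O(\log\log n)$ (a ``small modulus''). Either the modulus is $\Omega(n)$ and your total redundancy becomes $2\log n+\dots$, or it is small and the code fails to correct these substitutions. This is not a fixable detail of the case analysis; the information simply is not there.

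The paper avoids this by putting the weights in a different place: it uses a single sketch $\sum_i i\cdot w(x_i) \bmod N$ with $N=\Theta(n\log n)$, where $w$ maps \emph{symbols} to well-separated integers ($w(0)=0$, $w(1)=1$, $w(2)=2\log n+11$, $w(3)=2\log n+12$), together with symbol-count parities $h_c(x)=|\{i:x_i=c\}|\bmod 2$. For a substitution $a\to b$ at position $i$, the parities reveal the pair $(a,b)$, and the sketch difference is exactly $i\,(w(b)-w(a))$ with $w(b)\neq w(a)$, so the position is recovered by division --- there is no ``signature-preserving'' blind spot. For deletions and insertions, a runlength (``regularity'') constraint on two interleaved subsequences of $x$, enforceable with $O(1)$ existential cost or via linear-time runlength replacement, guarantees that the partial sums $\sum_\ell (w(x_\ell)-w(a))$ are nonzero outside the correct run, which localizes the error. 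So the same $\log n+O(\log\log n)$-bit sketch does useful work in all three cases, which is exactly what your two-tier signature-based design fails to achieve in the substitution case.
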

This problem was previously considered by Cai, Chee, Gabrys, Kiah, and Nguyen~\cite{CCGKN21}, who proved an analogous result.
Our code construction requires slightly less redundancy and supports more efficient encoding and decoding procedures than the construction from~\cite{CCGKN21}.
However, we believe that our more significant contribution in this setting is the simpler approach we employ to prove Theorem~\ref{thm:nonbinedit} via weighted VT sketches. The technique of weighted VT sketches seems quite natural and powerful and may be of independent interest.
More details can be found in Section~\ref{sec:nonbinedit}, where we also present a more in-depth discussion on why the standard VT sketch~\eqref{eq:VTsketch} does not suffice in the non-binary case.

\subsection{Binary codes for one deletion \emph{and} one substitution}\label{sec:list2}

As our second contribution, we make progress on the study of single-deletion single-substitution correcting codes.
Recent work by Smagloy, Welter, Wachter-Zeh, and Yaakobi~\cite{SWWY20} constructed efficiently encodable and decodable binary single-deletion single-substitution correcting codes with redundancy close to $6\log n$.
On the other hand, it is known that $2\log n$ redundant bits are required, and a greedy approach shows the \emph{existence} of a single-deletion single-substitution correcting code with redundancy $4\log n+O(1)$.

In this setting, we ask what improvements are possible if we relax the unique decoding requirement slightly and instead require that the code be \emph{list-decodable with list-size $2$}.
In other words, our goal is to design a low-redundancy code $\cC\subseteq\bits^n$ such that for any corrupted string $y\in\bits^{n-1}\cup\bits^n$ there are at most two codewords $x,x'\in\cC$ that can be transformed into $y$ via some combination of at most one deletion and one substitution.
This is the strongest possible requirement after unique decoding, which corresponds to lists of size $1$.

The best known \emph{existential} upper bound on the optimal redundancy in the list-decoding setting is still $4\log n+O(1)$ via the Gilbert-Varshamov-type greedy algorithm.
In this paper, we give an explicit list-decodable code with list-size $2$ correcting one deletion and one substitution with redundancy matching the existential bound up to an $O(\log\log n)$ additive term.
At a high level, this code is obtained by combining the standard VT sketch~\eqref{eq:VTsketch} with \emph{run-based sketches}, which have been recently used in the design of two-deletion correcting codes~\cite{guruswami2021explicit}.
More precisely, we have the following result.
\begin{thm}\label{thm:delandsub}
There exists a linear-time encodable and decodable binary \emph{list-size $2$} single-deletion single-substitution correcting code $\cC\subseteq\bits^n$ with $4\log n+O(\log\log n)$ bits of redundancy.
\end{thm}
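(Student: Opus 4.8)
The plan is to mirror the structure of the two-deletion code of~\cite{guruswami2021explicit}, combining the standard VT sketch~\eqref{eq:VTsketch} with a constant number of \emph{run-based} sketches, and to exploit the list-size-$2$ relaxation to make this succeed within a $4\log n$ budget. First I would split the decoding according to the length of the received word $y$. If $|y|=n$ then no deletion occurred, so $x$ differs from $y$ in at most one coordinate; single-substitution correction is essentially free, since the weight $\sum_i x_i$ modulo a small constant reveals the sign of the flip and the VT sketch $\sum_i i\,x_i \bmod N$ then pins down its location (in fact giving list size $1$ here). The entire difficulty is therefore the case $|y|=n-1$, in which exactly one deletion occurred together with at most one substitution, and where I must show that the sketches leave at most two consistent codewords.

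For the hard case, I would first characterize the set of candidate codewords. Every $x\in\bits^n$ consistent with $y$ is obtained from $y$ by inserting one bit and then flipping at most one bit (equivalently, undoing the deletion and the substitution), so a priori there are $O(n^2)$ candidates. Two reductions cut this down. The VT sketch controls the insertion exactly as in Levenshtein's single-deletion argument: conditioned on the location of the (undone) substitution, the insertion is forced up to the usual run ambiguity, which the VT sketch resolves. It thus remains to control the substitution, and here I would pass to the \emph{derivative} (run-boundary) representation $d$, where $d_i = x_i \oplus x_{i-1}$: a substitution flipping $x_i$ is exactly a pair of adjacent flips of $d$ at positions $i,i+1$, while a deletion becomes a local ``merge-deletion'' of two adjacent symbols of $d$. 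In this representation the substitution is a clean, \emph{local} operation, and I would impose run-based VT sketches (first- and, if needed, second-order moments of the run-length sequence, each modulo $\Theta(n)$) to localize it.

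The main obstacle is precisely the interaction between the deletion and the substitution in the run structure: unlike a deletion, which only shortens a single run, a flip can split one run into three or merge neighboring runs, so the run-length sequence can change non-locally in \emph{length} even though the change is local in the derivative domain. The heart of the argument is a structural lemma showing that, once the VT sketch has fixed the deletion, any two surviving candidates must differ only within a single bounded window and in one of a few stereotyped patterns (a localized shift, or a transposition-type $01\leftrightarrow10$ exchange) to which a single number-theoretic sketch is necessarily blind; this is exactly the phenomenon that forces list size $2$ rather than $1$. I would then verify that the combined constraints --- weight modulo a constant, the VT sketch on $x$, and the run-based sketches on $d$, together $4\log n+O(\log\log n)$ bits --- separate all but (at most) two of these candidates, with the $O(\log\log n)$ overhead absorbing the cost of protecting the short redundancy payload against the errors and of a few auxiliary small congruences.

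Finally, for efficiency I would store the sketch values in a short reserved block, placed so that the payload bits carry free information and the redundancy bits are then solved for, making the sketches computable and invertible in linear time. For decoding, rather than testing all $O(n^2)$ candidates I would use the VT sketch to localize the deletion to a short window and the run-based sketches to localize the flip, reducing the explicit candidate enumeration to a near-linear search and then outputting the (at most two) candidates that satisfy every congruence.
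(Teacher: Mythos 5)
Your high-level architecture matches the paper's: split on $|y|$, handle $|y|=n$ with the VT sketch alone (list size $1$), and for $|y|=n-1$ combine the VT sketch on $x$ with low-order moment sketches of the run structure plus $O(1)$-bit count sketches to detect the error pattern, paying $4\log n+O(\log\log n)$ total and protecting the sketch payload with an auxiliary $O(\log\log n)$-redundancy code. Your derivative representation $d_i=x_i\oplus x_{i-1}$ is essentially the paper's run string $r^x$ (its prefix sums), so the objects are the same.

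However, there are two genuine gaps. The first is quantitative: you propose taking the first- and second-order run moments ``each modulo $\Theta(n)$.'' A single deletion or substitution shifts the ranks of up to $\Theta(n)$ positions, so $\sum_i r^x_i(r^x_i-1)$ can change by $\Theta(n^2)$; reduced mod $\Theta(n)$ that sketch carries no usable information about the error location. The paper must take it mod $16n^2+1$, and that is precisely where the fourth $\log n$ of redundancy comes from --- with both moments mod $\Theta(n)$ your accounting gives $3\log n$, which would beat the existential bound and should have been a warning sign. The second and more serious gap is that the entire content of the theorem --- that the sketches leave at most two consistent candidates --- is delegated to an asserted ``structural lemma'' that you never prove, and your stated intuition for where list size $2$ comes from (a localized shift or $01\leftrightarrow 10$ exchange invisible to a single congruence) does not match what actually happens. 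In the paper the candidates form a one-parameter family of ``valid pairs'' $(\widetilde{d},\widetilde{e})$ consistent with the VT and count sketches; as $\widetilde{d}$ sweeps rightward, either the first run moment is strictly monotone (unique candidate) or, in the cases where the run count drops by four or the deletion and substitution cancel in run count with $x_d=1-x_e$, the second run moment is increasing while $\widetilde{d}<\widetilde{e}$ and decreasing after the order flips --- a convexity argument (Lemma~4.1 of the two-deletion paper) that caps the survivors at two. Establishing this monotonicity/convexity requires a case analysis over how the run count changes ($+2$, $0$, $-2$, $-4$) and over the values of the deleted and flipped bits, including the degenerate ``pure deletion'' case, and none of that is present in your proposal. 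Without it you have a plausible plan, not a proof.
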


More details can be found in Section~\ref{sec:delandsub2}.

\subsection{Binary codes correcting one deletion \emph{or} one adjacent transposition}

Finally, we consider the interplay between deletions and adjacent transpositions, which map $01$ to $10$ and vice-versa.
An adjacent transposition may be seen as a special case of a burst of two substitutions.
Besides its relevance to DNA-based storage, the interplay between deletions and transpositions is an interesting follow-up to the single-edit setting discussed above because the VT sketch is highly ineffective when dealing with transpositions, while it is the staple technique for correcting deletions and substitutions.
The issue is that, if $y,y'\in\bits^n$ are obtained from $x\in\bits^n$ via any two adjacent transpositions of the form $01\mapsto 10$, then $f(y)=f(y')=f(x)-1$, where we recall $f(z)=\sum_{i=1}^n i\cdot z_i\mod N$ is the VT sketch.
This implies that knowing the VT sketch $f(x)$ reveals almost no information about the adjacent transposition, since correcting an adjacent transposition is equivalent to finding its location.

In this setting, the best known redundancy lower bound is $\log n$ (the same as for single-deletion correcting codes), while the the best known existential upper bound is $2\log n$, obtained by naively intersecting a single-deletion correcting code and a single-transposition correcting code.
A code with redundancy $\log n+O(1)$ for this setting was claimed in~\cite[Section III]{GYM18}, but the argument there is flawed.
In this work, we determine the optimal redundancy of codes in this setting up to an $O(\log\log n)$ additive term via a novel marker-based approach.
More precisely, we prove the following result.
\begin{thm}\label{thm:deltrans}
There exists a binary code $\cC\subseteq\bits^n$ correcting one deletion or one transposition with redundancy $\log n+O(\log\log n)$.
\end{thm}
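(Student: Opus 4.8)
The first observation is that the decoder can distinguish the two error types from the length alone: a deletion yields a word of length $n-1$, whereas a transposition (or no error at all) keeps the length at $n$. Thus it suffices to construct a single $\cC \subseteq \bits^n$ that is at once single-deletion correcting and single-transposition correcting, and the redundancy we must beat is the $2\log n$ coming from naively intersecting one code of each type. Because any subcode of a deletion-correcting code is still deletion-correcting, my plan is to start from a code that already handles deletions with $\log n + O(1)$ redundancy and to add only $O(\log\log n)$ further redundancy to also take care of transpositions — so the deletion and the transposition budgets must essentially share the same $\log n$ bits.

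Concretely, I would try to route both tasks through a single \emph{weighted VT sketch} $S(x) = \sum_{i=1}^n w_i x_i \pmod N$ with $N = \Theta(n)$, using the technique already developed for Theorem~\ref{thm:nonbinedit}. On one hand, for a suitable weight sequence this sketch corrects one deletion exactly as a VT code does. On the other hand, a transposition that flips a $01$ into a $10$ at position $p$ changes $S$ by $\pm(w_{p+1}-w_p)$; if the consecutive differences $w_{i+1}-w_i$ are \emph{distinct} modulo $N$, this change pins down the location $p$ and hence corrects the transposition. The difficulty is that these two demands pull the weights in opposite directions: deletion correction \`a la Levenshtein wants nearly linear weights (regular, tiny differences), while transposition location wants the $n-1$ consecutive differences to be all distinct and therefore spread across $\{0,\dots,N-1\}$.

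This is exactly where the markers enter. I would partition $[n]$ into blocks of length $L = \operatorname{polylog} n$ and use weights that are linear \emph{within} each block — so that Levenshtein-style decoding and a transposition can be resolved locally, once the relevant block is known — but carry block-dependent offsets chosen so that the cross-block differences are distinct; the global sketch $S$ then reveals the block of a transposition, and a short auxiliary within-block sketch (e.g.\ a reduced second moment modulo $\operatorname{poly}(L)$), costing only $O(\log\log n)$, finishes the job inside the block. Cheap, data-dependent landmarks inserted at block boundaries let the decoder re-synchronise the block structure of the received word. With the pair $(S,\text{marker syndrome})$ fixed, a code of size $2^{\,n-\log n-O(\log\log n)}$ exists by a pigeonhole argument over the $O(\log\log n)$ auxiliary syndrome values, refining the Gilbert--Varshamov count \emph{inside} a single sketch class rather than over the whole error ball.

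The crux, and the step I expect to be hardest, is the marker layer. I must exhibit landmarks and a block/weight structure that simultaneously (a) let the single $\Theta(n)$-modulus sketch both correct a deletion and coarsely locate a transposition, reconciling the conflicting linear-versus-distinct-difference requirements on the weights; (b) survive \emph{both} one deletion and one transposition, since the same codeword must be decodable under either error, so the block boundaries have to be recoverable from the received word after a shift caused by a deletion as well as after a local swap; and (c) cost only $O(\log\log n)$ redundancy, which rules out fixed-position markers (whose $\Theta(n/L)$ cost is too large) and forces a data-dependent design. By contrast, the weighted-VT and reduced-moment ingredients, once the block structure is in place, should be routine.
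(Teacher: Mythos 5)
Your high-level architecture (distinguish the error type by length, locate the error coarsely with one $\Theta(n\cdot\mathrm{polylog}\,n)$-modulus sketch, then finish locally with $O(\log\log n)$ extra redundancy, using markers to resynchronize) matches the paper's strategy, and the final step of XOR-ing local sketches over interval families is essentially what the paper does. But the central mechanism --- a single position-weighted sketch $S(x)=\sum_i w_i x_i \bmod N$ with block-structured weights --- does not work, and this is exactly the step you flag as the crux. If the weights are linear with the \emph{same} slope inside every block and only the intercepts vary, then a transposition strictly inside a block changes $S$ by $\pm(w_{p+1}-w_p)=\pm(\text{common slope})$ no matter which block it lies in, so $S$ reveals nothing about the block; this is precisely the failure mode of the plain VT sketch that motivates the whole construction. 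If instead each block gets a distinct slope $s_b$ so that the change $\pm s_b$ identifies the block, then Levenshtein-style deletion decoding needs $N$ to exceed the range of sketch values over candidate insertions, which is $\Theta(\max_i w_i)=\Omega\left(\sum_b L\cdot s_b\right)=\Omega(n^2/\mathrm{polylog}\,n)$ when there are $n/\mathrm{polylog}\,n$ distinct slopes --- giving redundancy $2\log n-O(\log\log n)$ and destroying the shared budget. The paper escapes this dilemma by abandoning position weights entirely: it segments $x$ at natural occurrences of the substring $0011$ (no inserted landmarks), and uses the sketch $\sum_j j\left(|z^x_j|\cdot m+h(z^x_j)\right)$, where the index $j$ multiplies a quantity bounded by $O(\mathrm{polylog}\,n)$ (segment length plus a hash of the segment). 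The modulus is then $O(n\,\mathrm{polylog}\,n)$, and the corrupted segment is recovered from the factor $j$ in the sketch difference.

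Two further ingredients are missing from your outline and are where most of the paper's work lies. First, the error itself can create or destroy a marker (a deletion or transposition can change the number of segments by up to $2$), desynchronizing every downstream segment index; the paper handles this with a count sketch on the number of segments, a case analysis over $\ell_y-\ell_x\in\{-2,\dots,2\}$, and a potential-function argument that still localizes the error when segments merge or split. Your plan needs block boundaries that are recoverable after the error, but does not address errors that hit the boundaries themselves. Second, to extract the hash difference $h(z^x_i)-h(z'_i)$ from the sketch difference, the decoder must first recover the multiset of segment hashes of the \emph{original} codeword from that of the corrupted word; the paper achieves this by expurgating the code so that hash multisets of distinct codewords are either identical or differ in at least $10$ elements, costing another $O(\log\log n)$ bits. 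Neither step has a counterpart in your proposal, and without them even the coarse location cannot be carried out.
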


In fact, since we know that every code that corrects one deletion also corrects one insertion~\cite{Lev65}, we can also conclude from Theorem~\ref{thm:deltrans} that there exists a binary code correcting one deletion, one insertion, or one transposition with nearly optimal redundancy $\log n+O(\log\log n)$.
More details can be found in Section~\ref{sec:deltrans}.

\subsection{Other related work}

Recently, there has been a flurry of works making progress in coding-theoretic questions analogous to the ones we consider here in other extensions of the binary single-edit error setting.

A line of work culminating in~\cite{BGZ18,guruswami2021explicit,SB21} has succeeded in constructing explicit low-redundancy codes correcting a constant number of worst-case deletions. Constructions focused on the two-deletion case have also been given, e.g., in \cite{SB21,GabrysS19,guruswami2021explicit}.
Explicit binary codes correcting a sublinear number of edit errors with redundancy optimal up to a constant factor have also been constructed recently~\cite{CJLW18,Hae19}.
Other works have considered the related setting where one wishes to correct a burst of deletions or insertions~\cite{SWGY17,LP20,WSF21}.
Following up on~\cite{SWWY20}, codes correcting a combination of more than one deletion and one substitution were given in~\cite{SPCH21} with sub-optimal redundancy.

List-decodable codes in settings with indel errors have also been considered before.
For example, Wachter-Zeh~\cite{Wac18} and Guruswami, Haeupler, and Shahrasbi~\cite{GHS21} study list-decodability from a linear fraction of deletions and insertions.
Most relevant to our result in Section~\ref{sec:list2}, Guruswami and H{\aa}stad~\cite{guruswami2021explicit} considered 
constructed an explicit list-size two code correcting two deletions with redundancy $3\log n$, thus beating the greedy existential bound in this setting.

The interplay between deletions and transpositions has also been considered before.
Gabrys, Yaakobi, and Milenkovic~\cite{GYM18} construct codes correcting a single deletion \emph{and} many adjacent transpositions.
In an incomparable regime, Schulman and Zuckerman~\cite{SZ99}, Cheng, Jin, Li, and Wu~\cite{CJLW19}, and Haeupler and Shahrasbi~\cite{HS18} construct explicit codes with good redundancy correcting a linear fraction of deletions and insertions and a nearly-linear fraction of transpositions.

\section{Preliminaries}

\subsection{Notation and conventions}

We denote sets by uppercase letters such as $S$ and $T$ or uppercase calligraphic letters such as $\cC$, and define $[n]=\{0,1,\dots,n-1\}$ and $S^{\leq k}=\bigcup_{i=0}^k S^i$ for any set $S$.
The symmetric difference between two sets $S$ and $T$ is denoted by $S\triangle T$.
We use the notation $\{\{a,a,b\}\}$ for multisets, which may contain several copies of each element.
Given two strings $x$ and $y$ over a common alphabet $\Sigma$, we denote their concatenation by $x\|y$ and write $x[i:j]=(x_i,x_{i+1},\dots,x_j)$.
We say $y\in\Sigma^k$ is a \emph{$k$-subsequence} of $x\in\Sigma^n$ if there are $k$ indices $1\leq i_1<i_2<\cdots<i_k\leq n$ such that $x_{i_j}=y_j$ for $j=1,\dots,k$, in which case we also call $x$ an \emph{$n$-supersequence} of $y$.
Moreover, we say $x[i:j]$ is an \emph{$a$-run} of $x$ if $x[i:j]=a^{j-i+1}$ for a symbol $a\in\Sigma$.
We denote the base-$2$ logarithm by $\log$.

A length-$n$ \emph{code} $\cC$ is a subset of $\Sigma^n$ for some alphabet $\Sigma$ which will be clear from context.
In this work, we are interested in the \emph{redundancy} of certain codes (measured in bits), which we define as
\begin{equation*}
    n\log |\Sigma|-\log|\cC|.
\end{equation*}

\subsection{Error models and codes}

Since we will be dealing with three distinct but related models of worst-case errors, we begin by defining the relevant standard concepts in a more general way.
We may define a worst-case error model over some alphabet $\Sigma$ by specifying a family of \emph{error balls}
\begin{equation*}
    \cB=\{B(y)\subseteq\Sigma^*:y\in\Sigma^*\}.
\end{equation*}
Intuitively, $B(y)$ contains all strings that can be corrupted into $y$ by applying an allowed error pattern.
We proceed to define unique decodability of a code $\cC\subseteq\Sigma^n$ with respect to an error model.
\begin{defn}[Uniquely decodable code]
We say a code $\cC\subseteq\Sigma^n$ is \emph{uniquely decodable (with respect to $\cB$)} if
\begin{equation*}
    |B(y)\cap \cC|\leq 1
\end{equation*}
for all $y\in\Sigma^*$.
\end{defn}
Throughout this work the underlying error model will always be clear from context, so we do not mention it explicitly.
We will also consider \emph{list-decodable} codes with small list size in Section~\ref{sec:delandsub2}, and so require the following more general definition.
\begin{defn}[List-size $t$ decodable code]
We say a code $\cC\subseteq\Sigma^n$ is \emph{list-size $t$ decodable (with respect to $\cB$)} if
\begin{equation*}
    |B(y)\cap \cC|\leq t
\end{equation*}
for all $y\in\Sigma^*$.
\end{defn}

Note that uniquely decodable codes correspond exactly to list-size-$1$ codes.
Moreover, we remark that for the error models considered in this work and constant $t$, the best existential for list-size-$t$ codes coincides with the best existential bound for uniquely decodable codes up to a constant additive term.

We proceed to describe the type of errors we consider.
A deletion transforms a string $x\in\Sigma^n$ into one of its $(n-1)$-subsequences.
An insertion transforms a string $x\in\Sigma^n$ into one of its $(n+1)$-supersequences.
A substitution transforms $x\in\Sigma^n$ into a string $x'\in\Sigma^n$ that differs from $x$ in exactly one coordinate.
An adjacent transposition transforms strings of the form $ab$ into $ba$. More formally, a string $x\in\Sigma^n$ is tranformed into a string $x'\in\Sigma^n$ with the property that $x'_k=x_{k+1}$ and $x'_{k+1}=x_k$ for some $k$, and $x'_i=x_i$ for $i\neq k,k+1$.

We can now instantiate the above general definitions under the specific error models considered in this paper.
In the case of a single edit, $B(y)$ contains all strings which can be transformed into $y$ via at most one deletion, one insertion, or one substitution.
In the case of one deletion \emph{and} one substitution, $B(y)$ contains all strings that can be transformed into $y$ by applying at most one deletion and at most one substitution.
Finally, in the case of one deletion or one adjacent transposition, $B(y)$ contains all strings that can be transformed into $y$ by applying either at most one deletion or at most one transposition.

\section{Non-binary single-edit correcting codes}\label{sec:nonbinedit}

In this section, we describe and analyze the code construction used to prove Theorem~\ref{thm:nonbinedit}.
Before we do so, we provide some intuition behind our approach.

\subsection{The binary alphabet case as a motivating example}\label{sec:bin}

It is instructive to start off with the binary alphabet case and the VT code described in~\eqref{eq:VTcode}, which motivates our approach for non-binary alphabets.
More concretely, we may wonder whether a direct generalization of $\cC$ to larger alphabets also corrects a single edit error, say
\begin{equation*}
    \cC'=\left\{x\in[q]^n \bigg| \sum_{i=1}^n i x_i=s\mod (1+2qn),\quad \forall c\in[q]:|\{i:x_i=c\}|=s_c\mod 2\right\},
\end{equation*}
where $[q]=\{0,1,\dots,q-1\}$.
However, this approach fails already over a ternary alphabet $\{0,1,2\}$.
In fact, $\cC'$ cannot correct worst-case deletions of $1$'s because it does not allow us to distinguish between
\begin{equation*}
    \dots \underline{1}02\dots \qquad \textrm{and}\qquad \dots 02\underline{1}\dots\;,
\end{equation*}
which can be obtained one from the other by deleting and inserting a $1$ in the underlined positions.
More generally, there exist codewords $x\in\cC'$ with substrings $(x_j=1,x_{j+1},\dots,x_k)$ not consisting solely of $1$'s satisfying
\begin{equation}\label{eq:badsub}
    \sum_{i=j+1}^k (x_i-1)=0.
\end{equation}
This implies that the string $x'$ obtained by deleting $x_j=1$ from $x$ and inserting a $1$ between $x_k$ and $x_{k+1}$ is also in $\cC'$.

In order to avoid the problem encountered by $\cC'$ above, we instead consider a \emph{weighted VT sketch} of the form
\begin{equation}\label{eq:weightedVT}
    f_w(x) = \sum_{i=1}^n i\cdot w(x_i) \mod N
\end{equation}
for some weight function $w:[q]\to\mathbb{Z}$ and an appropriate modulus $N$.
Using $f_w$ instead of the standard VT sketch $f(x)=\sum_{i=1}^n ix_i\mod N$ in the argument above causes the condition~\eqref{eq:badsub} for an uncorrectable $1$-deletion to be replaced by
\begin{equation*}
    \sum_{i=j+1}^k (w(x_{i})-w(1)) = 0.
\end{equation*}
Then, choosing $0\leq w(0)<w(1) < w(2)<\cdots<w(q-1)$ appropriately allows us to correct the deletion of a $1$ in $x$ given knowledge of $f_w(x)$ provided that $x$ satisfies a simple runlength constraint.
In turn, encoding an arbitrary message $z$ into a string $x$ satisfying this constraint can be done very efficiently via a direct application of the simple \emph{runlength replacement} technique from~\cite{SWGY17} using few redundant bits.
Theorem~\ref{thm:nonbinedit} is then obtained by instantiating the weighted VT sketch~\eqref{eq:weightedVT} with an appropriate weight function and modulus.

\subsection{Code construction}

In this section, we present our construction of a $4$-ary single-edit correcting code which leads to Theorem~\ref{thm:nonbinedit}.
As discussed in Section~\ref{sec:bin},
given an arbitrary string $x\in\{0,1,2,3\}^n$ we consider a weighted VT sketch
\begin{equation*}
    f(x)=\sum_{i=1}^n i\cdot w(x_i)\mod [1+2n\cdot(2\log n + 12)],
\end{equation*}
where $w(0)=0$, $w(1)=1$, $w(2)=2\log n + 11$, and $w(3)=2\log n + 12$,  along with the count sketches
\begin{equation*}
    h_c(x)= |\{i:x_i=c\}|\mod 2
\end{equation*}
for $c\in\{0,1,2\}$.
Intuitively, the count sketches allow us to cheaply narrow down exactly what type of deletion or substitution occurred (but not its position).
As we shall prove later on, successfully correcting the deletion of an $a$ boils down to ensuring that
\begin{equation}\label{eq:cond}
    \sum_{i=j}^k (w(x_i)-w(a))\neq 0
\end{equation}
for all $1\leq j\leq k\leq n$ such that there is $i\in[j,k]$ with $x_i\neq a$.
We call strings $x$ that satisfy this property for every $a$ \emph{regular}, and proceed to show that enforcing a simple runlength constraint on $x$ is sufficient to guarantee that it is regular.
\begin{lem}\label{lem:RLLregular4}
Suppose $x\in\{0,1,2,3\}^n$ satisfies the following property: \emph{If $x'$ denotes the subsequence of $x$ obtained by deleting all $1$'s and $3$'s and $x''$ denotes the subsequence obtained by deleting all $0$'s and $2$'s, it holds that all $0$-runs of $x'$ and all $3$-runs of $x''$ have length at most $\log n + 3$.}
Then, $x$ is regular.
\end{lem}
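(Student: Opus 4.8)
The plan is to verify the defining inequality of regularity, namely $\sum_{i=j}^k (w(x_i)-w(a))\neq 0$ for every window $[j,k]$ containing at least one symbol different from $a$, separately for each of the four values $a\in\{0,1,2,3\}$. The two \emph{extreme} symbols are immediate from the monotonicity of $w$ and require no hypothesis on $x$. Since $w(0)=0$ is the unique minimum, for $a=0$ we have $\sum_{i=j}^k(w(x_i)-w(0))=\sum_{i=j}^k w(x_i)>0$ as soon as the window contains a nonzero symbol. Dually, since $w(3)$ is the unique maximum, for $a=3$ every term $w(x_i)-w(3)$ is $\le 0$ and is strictly negative whenever $x_i\neq 3$, so the sum is strictly negative. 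Hence the whole content of the lemma lies in the two \emph{interior} symbols $a=1$ and $a=2$, and this is where the run-length hypothesis must be used.

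The crux is therefore $a=1$. First I would record the residual weights $w(0)-w(1)=-1$, $w(1)-w(1)=0$, $w(2)-w(1)=2\log n+10$, and $w(3)-w(1)=2\log n+11$, so that the window sum equals $(2\log n+10)\,n_2+(2\log n+11)\,n_3-n_0$, where $n_c$ counts the occurrences of $c$ in $x[j:k]$. If the window contains no $2$ or $3$, the sum is just $-n_0$, which is strictly negative as soon as the window has a non-$1$ symbol (forcing $n_0\ge 1$); so we may assume $n_2+n_3\ge 1$, in which case a vanishing sum would require the large identity $n_0=(2\log n+10)\,n_2+(2\log n+11)\,n_3$. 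The key structural observation is that the positions of $x[j:k]$ carrying a $0$ or a $2$ form a \emph{contiguous} segment of $x'$ (deletions preserve order, and restricting to a window of $x$ restricts to a sub-block of $x'$); consequently every maximal $0$-run inside this segment is contained in a maximal $0$-run of $x'$ and so has length at most $\log n+3$ by hypothesis.

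It then remains to count these $0$-runs: within that segment the $n_0$ zeros are split by the $n_2$ twos into at most $n_2+1$ runs, giving $n_0\le (n_2+1)(\log n+3)$. Comparing with the identity, which in particular forces $n_0\ge (2\log n+10)\,n_2$, produces a contradiction. If $n_2\ge 1$ then $(2\log n+10)\,n_2-(n_2+1)(\log n+3)=(\log n+7)\,n_2-(\log n+3)\ge 4>0$, so $n_0\ge(2\log n+10)\,n_2>(n_2+1)(\log n+3)\ge n_0$; and if $n_2=0$ (so $n_3\ge 1$) the zeros form a single run of length at most $\log n+3$, whereas the identity forces $n_0\ge 2\log n+11>\log n+3$. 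I expect this run-counting/arithmetic balance to be the only real obstacle: everything hinges on the gap $w(2)-w(1)=2\log n+10$ being large enough that the per-run budget $\log n+3$ stays strictly below the cost $\log n+7$ of introducing one more separating $2$, together with the correct handling of the boundary subcase $n_2=0$. Finally, the case $a=2$ is obtained without rework via the involution $c\mapsto 3-c$, which satisfies $w(c)+w(3-c)=2\log n+12$ (so the window sum merely negates), sends $x'$ to $x''$, and turns the $0$-run hypothesis into the $3$-run hypothesis on $x''$; thus $a=2$ reduces verbatim to the already-established case $a=1$.
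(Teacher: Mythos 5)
Your proof is correct and follows essentially the same route as the paper: reduce to the interior symbols $a\in\{1,2\}$, use the run-length hypothesis to bound the number of $0$'s by roughly $(n_2+1)(\log n+3)$, and compare against the weight gap $w(2)-w(1)=2\log n+10$. You spell out the run-counting arithmetic and the $n_2=0$ boundary subcase more carefully than the paper's terse two-sentence argument, and your involution $c\mapsto 3-c$ is a clean way to make precise the paper's claim that $a=2$ is ``analogous.''
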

\begin{proof}
First, note that when $a=0,3$ it follows that \eqref{eq:cond} holds trivially for all $x$.
Thus, it suffices to consider $a=1,2$.
Fix any $x$ satisfying the property outlined in the lemma statement and $1\leq j\leq k\leq n$ such that there is $i\in[j,k]$ with $x_i\neq 1$.
The runlength constraint on $x'$ implies that there must be at least one $2$ in $x[j,k]$ for every consecutive subsequence of $2\lceil\log n+3\rceil$ $0$'s that appears in $x[j,k]$. 
Since $w(0)-w(1)=-1$ and $w(2)-1=2\log n+10>2\lceil\log n+3\rceil$, it follows that \eqref{eq:cond} holds.
The argument for the case $a=2$ is analogous using the fact that $w(3)-w(2)=1$ and $w(1)-w(2)=-(2\log n+11)<-2\lceil \log n+3\rceil$.
\end{proof}

Let $\mathcal{G}\subseteq\{0,1,2,3\}^n$ denote the set of regular strings. 
Given the above definitions, we set our code to be
\begin{equation}\label{eq:4arycode}
    \mathcal{C}=\mathcal{G}\cap \{x\in\{0,1,2,3\}^n:f(x)=s, h_c(x)=s_c, c\in\{0,1,2\}\}
\end{equation}
for appropriate choices of $s\in\{0,\dots,1+2n\cdot (2\log n+12)\}$ and $s_c\in\{0,1\}$ for $c=0,1,2$.
A straightforward application of the probabilistic method shows that most strings are regular.
\begin{lem}
Let $X$ be sampled uniformly at random from $\{0,1,2,3\}^n$. 
Then,
\begin{equation*}
    \Pr[\textnormal{$X$ is regular}]\geq 7/8.
\end{equation*}
\end{lem}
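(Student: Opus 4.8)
The plan is to apply \Cref{lem:RLLregular4}, which tells us that the run-length condition it describes is \emph{sufficient} for regularity. Hence it suffices to show that a uniformly random $X$ satisfies that run-length condition with probability at least $7/8$, i.e., that the complementary ``bad'' event has probability at most $1/8$. Writing $X'$ for the subsequence of $X$ obtained by deleting all $1$'s and $3$'s and $X''$ for the subsequence obtained by deleting all $0$'s and $2$'s, let $A$ be the event that $X'$ contains a $0$-run of length exceeding $\log n+3$ and let $B$ be the event that $X''$ contains a $3$-run of length exceeding $\log n+3$. The bad event is exactly $A\cup B$, so by a union bound it is enough to show $\Pr[A]\le 1/16$ and $\Pr[B]\le 1/16$; the two bounds are symmetric (swapping the roles of $\{0,2\}$ and $\{1,3\}$, and of $0$ and $3$), so I will focus on $A$.

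First I would reduce event $A$ to a statement about where a long run \emph{begins}. Set $m=\lfloor \log n+3\rfloor+1$, so that any $0$-run of $X'$ of length exceeding $\log n+3$ has length at least $m$. If such a run exists, let $a$ be the position in $X$ of its first symbol; then $X_a=0$ and the next $m-1$ symbols of $X$ lying in $\{0,2\}$ are all equal to $0$. Taking a union bound over the at most $n$ choices of $a$, it remains to bound, for each fixed $a$, the probability that $X_a=0$ and the following $m-1$ symbols of the $\{0,2\}$-subsequence of $X[a+1:n]$ are all $0$.

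The key step---and the one deserving the most care---is the conditional-independence argument that controls this per-position probability. Scanning the coordinates after position $a$ one at a time and discarding those that fall in $\{1,3\}$ leaves an i.i.d.\ sequence of symbols each uniform on $\{0,2\}$, since conditioning a uniform symbol on lying in $\{0,2\}$ makes it $0$ or $2$ with equal probability, independently across coordinates. Thus the event that the first $m-1$ retained symbols are all $0$ has probability at most $2^{-(m-1)}$, irrespective of how the $\{1,3\}$ symbols are interleaved, and multiplying by $\Pr[X_a=0]=1/4$ gives a per-position bound of $\tfrac14\cdot 2^{-(m-1)}$. Summing over $a$ yields
\begin{equation*}
    \Pr[A]\le n\cdot\frac14\cdot 2^{-(m-1)}.
\end{equation*}
Finally, since $m-1=\lfloor \log n+3\rfloor\ge \log n+2$, we have $2^{-(m-1)}\le \tfrac{1}{4n}$, whence $\Pr[A]\le \tfrac{1}{16}$, and by symmetry $\Pr[B]\le\tfrac{1}{16}$; together these give $\Pr[A\cup B]\le 1/8$ and therefore $\Pr[X\text{ is regular}]\ge 7/8$. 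The only genuinely delicate point is the filtering-independence claim above; everything else is a routine union bound and arithmetic, and the constant $3$ in the run-length threshold is precisely what makes the final bound land at $7/8$.
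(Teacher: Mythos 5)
Your proposal is correct and follows essentially the same route as the paper: invoke Lemma~\ref{lem:RLLregular4}, bound the probability of a too-long $0$-run in $X'$ (resp.\ $3$-run in $X''$) by a union bound over starting positions using the fact that the retained symbols are i.i.d.\ uniform on a two-element set, and combine the two $1/16$ bounds into $1/8$. The only difference is cosmetic—you index the union bound by positions in $X$ and make the filtering-independence step explicit, whereas the paper indexes by positions in $X'$ and states the per-position probability $\tfrac{1}{16n}$ directly.
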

\begin{proof}
Let $X'$ and $X''$ be the subsequences of $X$ obtained by deleting all $1$'s and $3$'s or all $0$'s and $2$'s, respectively.
Then, the probability that $X'$ has a $0$-run of length $\log n+4$ starting at $i$ is $\frac{1}{16n}$.
By a union bound over the fewer than $n$ choices for $i$, it follows that $X'$ has at least one such $0$-run with probability at most $1/16$.
Since the same argument applies to $3$-runs in $X''$, a final union bound over the two events yields the desired result by Lemma~\ref{lem:RLLregular4}.
\end{proof}
As a result, by the pigeonhole principle there exist choices of $s,s_0,s_1,s_2$ such that
\begin{equation*}
    |\mathcal{C}|\geq \frac{7\cdot 4^n}{8\cdot 2^3\cdot (1+2n\cdot(2\log n+12))}.
\end{equation*}
This implies that we can make it so that $\mathcal{C}$ has $\log n+\log\log n+6+o(1)$ bits of redundancy, where $o(1)\to 0$ when $n\to\infty$, as desired.
If $n$ is not a power of two, then taking ceilings yields at most one extra bit of redundancy for a total of $\log n+\log\log n+7+o(1)$ bits, as claimed.

It remains to show that $\cC$ corrects a single edit in linear time and that a standard modification of $\cC$ admits a linear time encoder.
Observe that if a codeword $x\in\cC$ is corrupted into a string $y$ by a single edit error, we can tell whether it was a deletion, insertion, or substitution by computing $|y|$.
Therefore, we treat each such case separately below.

\subsection{Correcting one substitution}

Suppose that $y$ is obtained from some $x\in\mathcal{C}$ by changing an $a$ to a $b$ at position $i$.
Then, we can find $|w(a)-w(b)|$ by computing $h_a(y)-h_a(x)$ for $a=0,1,2$.
In particular, note that we can correctly detect whether no substitution was introduced, since this happens if and only if $h_a(y)=h_a(x)$ for $a=0,1,2$.
It also holds that
\begin{equation*}
    f(y)-f(x) = i\cdot(w(b)-w(a)).
\end{equation*}
Since $|i\cdot(w(b)-w(a))|\leq n\cdot (2\log n+12)< \frac{1+2n\cdot(2\log n+12)}{2}$, we can recover the position $i$ by computing
\begin{equation*}
    i = \frac{|f(y)-f(x)|}{|w(b)-w(a)|}.
\end{equation*}
Note that these steps can be implemented in time $O(n)$.

\subsection{Correcting one deletion}

Suppose that $y$ is obtained from $x\in\mathcal{C}$ by deleting an $a$ at position $i$.
First, note that we can find $a$ by computing $h_c(y)-h_c(x)$ for $c=0,1,2$.
Now, let $y^{(j)}$ denote the string obtained by inserting an $a$ to the left of $y_j$ (when $j=n$ this means we insert an $a$ at the end of $y$).
We have $x=y^{(i)}$ and our goal is to find $i$.
Consider $n\geq j\geq i$ and observe that
\begin{align*}
    f(x)-f(y^{(j)})&=f(y^{(i)})-f(y^{(j)})\\
    &=\sum_{\ell=i+1}^j (w(x_\ell)-w(a)),
\end{align*}
because $y_{\ell-1}=x_\ell$ for $\ell>i$.
Since $x$ is regular, it follows that $\sum_{\ell=i+1}^j (w(x_\ell)-w(a))\neq 0$ unless $x_{i+1}=\cdots=x_j=a$.
This suggests the following decoding algorithm: Successively compute $f(x)-f(y^{(j)})$ for $j=n,n-1,\dots, 1$ until  $f(x)-f(y^{(j)})=0$, in which case the above argument ensures that $y^{(j)}=x$ since we must be inserting $a$ into the same $a$-run of $x$ from which an $a$ was deleted.
This procedure runs in time $O(n)$.

\subsection{Correcting one insertion}

The procedure for correcting one insertion is very similar to that used to correct one deletion.\footnote{It is well known that every code that corrects one deletion also corrects one insertion~\cite{Lev65}. However, this implication does not hold in general if we require efficient decoding too.}
We present the argument for completeness.
Suppose $y$ is obtained from $x$ by inserting an $a$ between $x_{i-1}$ and $x_i$ (when $i=1$ or $i=n+1$ this means we insert an $a$ at the beginning or end of $x$, respectively).
First, observe that we can find $a$ by computing $h_c(x)-h_c(y)$ for $c=0,1,2$.
Let $y^{(j)}$ denote the string obtained from $y$ by deleting $y_j=a$.
Then, it holds that $y^{(i)}=x$ and for $j\geq i$ we have
\begin{align*}
    f(x)-f(y^{(j)})&=f(y^{(i)})-f(y^{(j)})\\
    &= -\sum_{\ell=i}^{j-2} (w(x_\ell)-w(a)),
\end{align*}
because $y_\ell=x_{\ell-1}$ when $j>i$.
As before, using the fact that $x$ is regular allows us to conclude that $f(x)-f(y^{(j)})=0$ if and only if $x_i=\cdots=x_{j-2}=a$, in which case we are deleting an $a$ from the correct $a$-run of $x$.
Therefore, we can correct an insertion of an $a$ in $x$ by successively computing $f(x)-f(y^{(j)})$ for all $j$ such that $y_j=a$ starting at $j={n+1}$ and deleting $y_j$ for the first $j$ such that $f(x)-f(y^{(j)})=0$, in which case the argument above ensures that $y^{(j)}=x$.
This procedure runs in time $O(n)$.

\subsection{A linear-time encoder}\label{sec:linenc}

In the previous sections we described a linear-time decoder that corrects a single edit error in regular strings $x$ assuming knowledge of the weighted VT sketch $f(x)$ and the count sketches $h_c(x)$ for $c=0,1,2$.
It remains to describe a low-redundancy linear-time encoding procedure for a slightly modified version of our code $\cC$ defined in~\eqref{eq:4arycode}.
Fix an arbitrary message $z\in\{0,1,2,3\}^m$.
We proceed in two steps:
\begin{enumerate}
    \item We describe a simple linear-time procedure based on runlength replacement that encodes $z$ into a regular string $x\in\{0,1,2,3\}^{m+4}$;
    
    \item We append an appropriate encoding of the sketches $f(x)\|h_0(x)\|h_1(x)\|h_2(x)$ (which we now see as binary strings) to $x$ that can be recovered even if the final string is corrupted by an edit error. This adds $O(\log\log n)$ bits of redundancy.
\end{enumerate}

We begin by considering the first step.
We can encode $z$ into a regular string $x\in\{0,1,2,3\}^{m+4}$ by enforcing a runlength constraint using a simple runlength replacement technique~\cite[Appendix B]{SWGY17}.
\begin{lem}\label{lem:linenc}
There is a linear-time procedure $\Enc$ that given $z\in\{0,1,2,3\}^m$ outputs $x=\Enc(z)\in\{0,1,2,3\}^{m+4}$ with the following property: If $x'$ is obtained by deleting all $1$'s and $3$'s from $x$ and $x''$ is obtained by deleting all $0$'s and $2$'s, it holds that all $0$-runs of $x'$ and all $3$-runs of $x''$ have length at most $\lceil\log n+2\rceil$.
Moreover, there is a linear-time procedure $\Dec$ such that $\Dec(x)=z$.
In particular, $x$ is regular by Lemma~\ref{lem:RLLregular4}.
\end{lem}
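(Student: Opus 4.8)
The plan is to reduce the problem to two independent applications of the standard binary \emph{runlength replacement} technique of \cite[Appendix~B]{SWGY17}, exploiting the fact that the two constraints in the lemma — bounded $0$-runs of $x'$ and bounded $3$-runs of $x''$ — decouple cleanly. The key observation is that each symbol of $x$ carries two independent pieces of information: its \emph{class}, namely whether it lies in $\{0,2\}$ or in $\{1,3\}$, and, within that class, which of the two symbols it is. Crucially, the constraint on $x'$ depends only on the within-class choices of the $\{0,2\}$-symbols — a $0$-run of $x'$ is exactly a maximal block of consecutive $\{0,2\}$-symbols all equal to $0$ — and symmetrically the constraint on $x''$ depends only on the within-class choices of the $\{1,3\}$-symbols, while the class pattern itself is completely unconstrained. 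I would therefore carry the class information verbatim and enforce the two runlength constraints separately on the two binary content strings.

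Concretely, I would first split $z\in\{0,1,2,3\}^m$ into its class pattern $u\in\bits^m$, where $u_i=1$ iff $z_i\in\{1,3\}$, together with two binary content strings: $v$, recording for each $\{0,2\}$-position (in order) whether the symbol is a $2$, and $w$, recording for each $\{1,3\}$-position whether the symbol is a $3$. Next I would apply the binary runlength replacement map of \cite{SWGY17} to $v$ to obtain $v'$ whose $0$-runs all have length at most $\lceil\log n+2\rceil$, and its symmetric version to $w$ to obtain $w'$ whose $1$-runs are bounded likewise; each map is linear-time, invertible, and costs a constant number of redundant bits, which I arrange to total $4$. Finally I would reassemble a $4$-ary string of length $m+4$ from the class pattern $u'=u\,\|\,0^{2}\,\|\,1^{2}$ by writing $v'$ into the class-$0$ positions and $w'$ into the class-$1$ positions in order; the appended tail supplies exactly the extra class-$0$ and class-$1$ slots needed to absorb the redundancy introduced in $v'$ and $w'$. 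By construction the $0$-runs of $x'$ coincide with the $0$-runs of $v'$ and the $3$-runs of $x''$ with the $1$-runs of $w'$, so the runlength bounds hold, and hence $x$ is regular by Lemma~\ref{lem:RLLregular4}. The decoder $\Dec$ reverses each step: recover $u'$ and the two content strings from $x$, strip the fixed tail $0^2\|1^2$, invert the two runlength replacement maps, and recombine into $z$ — all in linear time.

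The routine part is invoking \cite{SWGY17}: the only care needed is to instantiate it with a threshold whose $\log$-type run bound is at most $\lceil\log n+2\rceil$ for content strings of length at most $n$, and with a constant redundancy so that the two copies sum to $4$. The step I expect to require the most care is the reassembly and its invertibility: one must check that the number of class-$0$ (resp.\ class-$1$) positions in $u'$ equals $|v'|$ (resp.\ $|w'|$) exactly, so that the interleaving is well-defined, and that the fixed tail lets the decoder unambiguously recover the original length-$m$ class pattern and hence the split between content belonging to the two classes. The main \emph{conceptual} obstacle, however, is recognizing the decomposition itself: once one sees that the two constraints live on disjoint binary content coordinates while the class pattern is free, the $4$-ary problem collapses to two textbook binary runlength-limited encodings.
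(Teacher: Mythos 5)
Your proposal is correct and follows essentially the same route as the paper: the paper likewise extracts the $\{0,2\}$-subsequence and the $\{1,3\}$-subsequence of $z$ (your content strings $v$ and $w$, with the positional information playing the role of your class pattern $u$), applies the runlength replacement of \cite{SWGY17} to each separately at a cost of two appended symbols apiece, and re-interleaves the results into the original positions plus four trailing slots. The bookkeeping you flag as delicate (matching $|v'|$ and $|w'|$ to the class-$0$ and class-$1$ slots of $u\|0^2\|1^2$, and invertibility of the merge) works out exactly as you describe.
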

\begin{proof}
Let $z'\in\{0,2\}^{m'}$ with $m'\leq m$ denote the subsequence at positions $1\leq i_1<\cdots<i_{m'}\leq m$ of $z$ obtained by deleting all $1$'s and $3$'s, and let $z''\in\{0,1,2,3\}^{m-m'}$ denote the leftover subsequence.
We may apply the runlength replacement technique from~\cite{SWGY17} to $z'$ and $z''$ separately in order to obtain strings $x'$ and $x''$ with the desired properties.
For completeness, we describe it below.
The final encoding $x$ is obtained by inserting the symbols of $x'$ into the positions $i_1,\dots,i_{m'},m+1,m+2$ of $z$ and the symbols of $x''$ into the remaining positions.

The encoding of $z'$ into $x'$ proceeds as follows:
First, append the string $20$ to $z'$.
Then, scan $z'$ from left to right.
If a $0$-run of length $\lceil\log m' + 2\rceil$ is found starting at $i$, then we remove it from $z'$ and append the marker $\mathsf{bin}(i)\|22$ to $z'$, where $\mathsf{bin}(i)$ denotes the binary expansion of $i$ (over $\{0,2\}$ instead of $\{0,1\}$) to $\lceil \log m'\rceil$ bits.
Note that the length of $z'$ stays the same after each such operation, and the addition of a marker does not introduce new $0$-runs of length $\lceil \log m'+2\rceil$.
Repeating this procedure until no more $0$-runs of length $\lceil \log m'+2\rceil$ are found yields a string $x'\in\{0,2\}^{m'+2}$ without $0$-runs of length $\lceil \log m'+2\rceil\leq \lceil \log m+2\rceil$.
This procedure, along with the transformation from $x'$ to $x$, runs in time $O(n)$.
The encoding of $z''$ into $x''\in\{1,3\}^{m-m'+2}$ is analogous with $1$ in place of $2$ and $3$ in place of $0$.

It remains to describe how to recover $z$ from $x$.
It suffices to describe how to recover $z'$ and $z''$ from $x'$ and $x''$, respectively, in time $O(n)$.
By the encoding procedure above, we know that if $x'$ ends in a $0$ then it follows that $z'=x'[1:m'=|x'|-2]$.
If $x'$ ends in a $2$, it means that $x'$ has suffix $\mathsf{bin}(i)\|22$ for some $i$.
Then, we recover $i$ from this suffix and insert a $0$-run of length $\lceil \log m' +2\rceil$ in the appropriate position of $x'$.
We repeat this until $x'$ ends in a $0$.
This procedure also runs in time $O(n)$.
The approach for $x''$ is analogous and yields $z''$.
Finally, we can merge $z'$ and $z''$ correctly to obtain $z$ since we know that $z'$ should be inserted into the positions occupied by $x'$ in $x$ (disregarding the last two symbols of $x'$).
\end{proof}

To finalize the description of the overall encoding procedure, let $x=\Enc(z)$ and define $(\overline{\Enc},\overline{\Dec})$ to be an explicit coding scheme for strings of length $\ell=|f(x)\|h_0(x)\|h_1(x)\|h_2(x)|$ correcting a single edit error (a naive construction has redundancy $2\log\ell+O(1)=O(\log\log m)$).
If
\begin{equation*}
    u=\overline{\Enc}(f(x)\|h_0(x)\|h_1(x)\|h_2(x)),
\end{equation*}
the final encoding procedure is
\begin{equation*}
    z \mapsto x\|u\in\{0,1,2,3\}^n,
\end{equation*}
which runs in time $O(m)=O(n)$ and has overall redundancy $\log m+O(\log\log m)=\log n+O(\log\log n)$.

Now, suppose $y$ is obtained from $x\|u$ by introducing one edit error.
We show how to recover $z$ in time $O(n)$ from $y$.
First, we can recover $u$ by running $\overline{\Dec}$ on the last $|u|-1$, $|u|$, or $|u|+1$ symbols of $y$ depending on whether a deletion, substitution, or insertion occurred, respectively.
If the last $|u|$ symbols of $y$ are not equal to $u$, we know that the edit error occurred in that part of $y$. Therefore, we have $x=y[1:m+4]$ and can compute $z=\Dec(y[1:m+4])$.
Else, if the last $|u|$ symbols of $y$ are equal to $u$, it follows that $y[1:|y|-|u|]$ can be obtained from $x$ via one edit error. This means we can recover $x$ from $y[1:|y|-|u|]$ and in turn compute $z=\Dec(x)$.

\section{Binary list-size two code for one deletion and one substitution}\label{sec:delandsub2}

In this section, we describe and analyze a binary list-size two decodable code for one deletion and one substitution, which yields Theorem~\ref{thm:delandsub}.
Departing from the approach of~\cite{SWWY20}, our construction makes use of \emph{run-based sketches} combined with the standard VT sketch.
Run-based sketches have thus far been exploited in the construction of multiple-deletion correcting codes, including list-decodable codes with small list size~\cite{guruswami2021explicit}.

We proceed to describe the required concepts:
Given a string $x=(x_1,\dots,x_n)\in\bits^n$, we define its \emph{run string} $r^x$ by first setting $r^x_0=0$ along with $x_0=0$ and $x_{n+1}=1$, and then iteratively computing $r^x_{i}=r^x_{i-1}$ if $x_{i}=x_{i-1}$ and $r^x_{i}=r^x_{i-1}+1$ otherwise for $i=1,\dots,n,n+1$.
Note that every string $x$ is uniquely determined by its run string $r^x$ and vice-versa.
Moreover, it holds that $r^x$ defines a non-decreasing sequence and $0\leq r^x_i\leq i$ for every $i=1,\dots,n,n+1$.
As an example, the run string corresponding to $x = 011101000$ is $r^x = 0111234445$.
We call $r^x_i$ the \emph{rank} of index $i$ in $x$.
We will denote the total number of runs in $x$ by $r(x)$.
The following simple structural lemma about the number of runs in a corrupted string will prove useful in our case analysis. 
\begin{lem}
    If $x'$ is obtained from $x$ via one deletion, then either $r(x')=r(x)$ or $r(x')=r(x)-2$.
    On the other hand, if $x'$ is obtained from $x$ via one substitution, then either $r(x')=r(x)$, $r(x')=r(x)-2$, or $r(x')=r(x)+2$.
\end{lem}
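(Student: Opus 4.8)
The plan is to reduce both statements to a purely local computation on the number of \emph{transitions} in a padded version of the string. Recall from the definition of the run string that $r(x)=r^x_{n+1}$ equals the number of indices $i\in\{1,\dots,n+1\}$ for which $x_i\neq x_{i-1}$, where we use the boundary conventions $x_0=0$ and $x_{n+1}=1$; that is, $r(x)$ counts the boundaries in the padded sequence $(x_0,x_1,\dots,x_n,x_{n+1})$. The reason for working with this padded count rather than with the maximal blocks of $x$ alone is that every coordinate $x_i$ with $1\le i\le n$ then has both a left and a right neighbor inside the padded sequence, so deletions and substitutions at the two ends of $x$ can be handled on exactly the same footing as interior ones. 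I expect this boundary bookkeeping to be the only real subtlety: without the padding, a deletion of a length-one run at an endpoint changes the number of maximal blocks of $x$ by $1$, and the clean even-change statement would fail.

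With this in place, the deletion case is a three-symbol case analysis. Deleting $x_i$ (for $1\le i\le n$) replaces the two adjacent pairs $(x_{i-1},x_i)$ and $(x_i,x_{i+1})$ by the single pair $(x_{i-1},x_{i+1})$ and leaves every other pair untouched, so
\[
    r(x')-r(x)=[x_{i-1}\neq x_{i+1}]-[x_{i-1}\neq x_i]-[x_i\neq x_{i+1}].
\]
I would then exploit that the alphabet is binary: if $x_{i-1}\neq x_{i+1}$, then exactly one of the two neighbors equals $x_i$, so the right-hand side is $1-1=0$; if instead $x_{i-1}=x_{i+1}$, then either both neighbors equal $x_i$ (difference $0$) or both differ from $x_i$ (difference $-2$). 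Hence $r(x')\in\{r(x),\,r(x)-2\}$, with the drop of $2$ occurring precisely when $x_i$ is an isolated symbol in the padded sequence (i.e.\ a merge of its two neighboring runs).

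The substitution case is analogous and slightly simpler. Flipping $x_i$ affects only the pairs $(x_{i-1},x_i)$ and $(x_i,x_{i+1})$, and since the alphabet is binary each indicator $[x_{i-1}\neq x_i]$ and $[x_i\neq x_{i+1}]$ is replaced by its complement. Each such flip changes the transition count by $+1$ (if that neighbor previously equaled $x_i$) or by $-1$ (if it differed), so
\[
    r(x')-r(x)\in\{-1,+1\}+\{-1,+1\}=\{-2,0,+2\},
\]
which is exactly the claim. I would close by noting that the three outcomes correspond to $x_i$ lying strictly inside a run, at the boundary between two runs, or as an isolated symbol, all read off the padded sequence so that the endpoints $i=1$ and $i=n$ need no separate treatment.
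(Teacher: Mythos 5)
Your proof is correct and follows essentially the same route as the paper's: both are local case analyses on the neighborhood of the affected position, with your three indicator cases (both neighbors equal $x_i$, exactly one differs, both differ) corresponding exactly to the paper's cases of a bit in the middle of a run of length at least $3$, at the end of a run of length at least $2$, or in a run of length $1$. Your explicit use of the padded transition count $r^x_{n+1}$ to handle the endpoints uniformly is a slightly more careful presentation of the same argument.
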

\begin{proof}
The desired statement follows by case analysis.
We have $r(x')=r(x)-2$ when $x'$ is obtained by deleting or flipping a bit in a run of length $1$ in $x$.
Otherwise, we have $r(x')=r(x)$ when $x'$ is obtained by deleting a bit in a run of length at least $2$ or by flipping the leftmost or rightmost bit in a run of length at least $2$.
In the remaining case where the flipped bit is in the middle of a run of length at least $3$, we have $r(x')=r(x)+2$.
\end{proof}

The main component of our code is a combination of the standard VT sketch
\begin{equation}\label{eq:vt1del1sub}
    f(x)=\sum_{i=1}^n ix_i\mod (3n+1)
\end{equation}
with the run-based sketches
\begin{align}
    &f_1^r(x)=\sum_{i=1}^{n} r^x_i\mod 12 n+1,\label{eq:run1}\\
    &f_2^r(x)=\sum_{i=1}^{n} r^x_i(r^x_i-1) \mod 16n^2+1\label{eq:run2}
\end{align}
originally considered in~\cite{guruswami2021explicit}.
Additionally, we also consider the count sketches
\begin{align}
    &h(x)= \sum_{i=1}^n x_i \mod 5,\label{eq:count1delsub}\\
    &h_r(x) = r(x)\mod 13.\label{eq:count2delsub}
\end{align}
The count sketches are used to distinguish different error patterns.
Intuitively, the sketch $h(x)$ is used to determine the value of the bit deleted and the value of the bit flipped, while $h_r(x)$ is used to distinguish among different cases of run changes, i.e., whether the number of runs decreases by four or increases by two due to the errors. 
For each possible error pattern, we use the standard VT-sketch and the run-based sketches to decode.
Given the above, our code is defined to be
\begin{equation}\label{eq:codedelsub}
    \mathcal{C} = \{x\in\{0,1\}^n: f(x) = s, f_1^r(x) = s_1^r, f_2^r(x) = s_2^r, h(x) = u, h_r(x) = u_r\},
\end{equation}
for an appropriate choice of $s\in[3n+1]$, $s_1^r\in[12n+1]$, $s_2^r\in[16n^2+1]$, $u\in[5]$, and $u_r\in[13]$.
By the pigeonhole principle, there is such a choice which ensures $\cC$ has redundancy $4\log n+O(1)$.
In the remainder of this section, we will show that $\cC$ admits linear-time list-size two decoding from one deletion and one substitution.
Moreover, we will also show that a slightly modified version of $\cC$ with redundancy $4\log n+O(\log\log n)$ admits linear-time encoding and inherits the error correction properties of $\cC$.

\subsection{Error correction properties}\label{sec:correct1del1sub}
Let $y$ be the string obtained from $x\in\cC$ after one deletion at index $d$ and one substitution at index $e$. 
We use $x_e$ to denote the bit flipped, and $x_d$ to denote the bit deleted in $x$.
When $d = e$, we simply have one deletion and no substitution. 
Our goal is to recover $x$ from $y$.

For our analysis, it is useful to note that
\begin{align*}
    &- 4n \leq \sum_{i=1}^n r^y_i - \sum_{i=1}^n r^x_i \leq  2n,\\
    &  - 5n^2 \leq \sum_{i=1}^n r^y_i(r^y_i-1) - \sum_{i=1}^n r^x_i(r^x_i-1) \leq + 3n^2.
\end{align*}
In particular, by the choice of modulus in the run-based sketches $f^r_1$ and $f^r_2$, this implies that we can recover the values $\sum_{i=1}^n r^x_i$ and $\sum_{i=1}^n r^x_i(r^x_i-1)$ from knowledge of $y$ and the sketches $f^r_1(x)$ and $f^r_2(x)$.
Therefore, we can safely omit the modulus when we compare the difference between the sketches of $x$ and of $y$ later in our analysis. 

Moreover, it is useful to observe that one deletion and no substitution can be equivalently transformed to one deletion and one substitution, thus we will only consider the case in which we have one deletion and one substitution, i.e., $d \neq e$.
In fact:
\begin{itemize}
    \item If the single deletion of $b$ in the $i$-th run does not change the number of runs, then it is equivalent to one deletion of $1-b$ in the $(i-1)$-th run and one substitution at the beginning of the $i$-th run. See Figure~\ref{fig:single_del_1} for an example.
    
    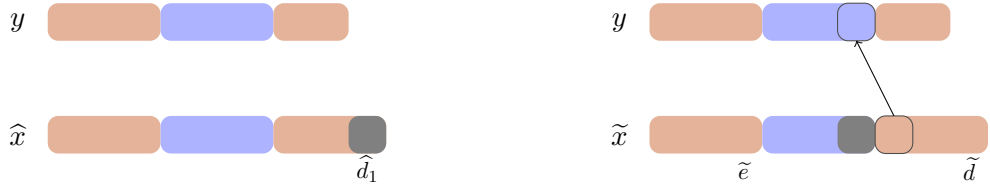
\begin{figure}[h!]        
    \centering
    \begin{tikzpicture}
    \draw (-2.4,0.24) node {$y$};
    \draw (2.25,-1.7) node [scale = 0.8] {$\widehat{d}_1$};
 
    \fill[myred!50, rounded corners] (-2, -1.5) rectangle (-0.5, -1);
    \fill[myblue!50, rounded corners] (-0.5, -1.5) rectangle (1, -1);
    \fill[myred!50, rounded corners] (1, -1.5) rectangle (2.5, -1);
    
    \fill[black!50, rounded corners] (2, -1.5) rectangle (2.5, -1);

    \draw (-2.4,-1.25) node {$\widehat{x}$};
    \fill[myred!50, rounded corners] (-2, 0) rectangle (-0.5, 0.5);
    \fill[myblue!50, rounded corners] (-0.5, 0) rectangle (1, 0.5);
    \fill[myred!50, rounded corners] (1, 0) rectangle (2, 0.5);
    
    \draw (5.6,0.24) node {$y$};
    \draw (10.25,-1.7) node [scale = 0.8] {$\widetilde{d}$};
    \draw (7.25 ,-1.73) node [scale = 0.8]{$\widetilde{e}$};
    \fill[myred!50, rounded corners] (6, -1.5) rectangle (7.5, -1);
    \fill[myblue!50, rounded corners] (7.5, -1.5) rectangle (9, -1);
    \fill[myred!50, rounded corners] (9, -1.5) rectangle (10.5, -1);
    
    \fill[black!50, rounded corners] (8.5, -1.5) rectangle (9, -1);
    \draw[black!70, rounded corners] (9, -1.5) rectangle (9.5, -1);
    \draw (5.6,-1.25) node {$\widetilde{x}$};
    \fill[myred!50, rounded corners] (6, 0) rectangle (7.5, 0.5);
    \fill[myblue!50, rounded corners] (7.5, 0) rectangle (9, 0.5);
    \fill[myred!50, rounded corners] (9, 0) rectangle (10, 0.5);
    \draw[black!70, rounded corners] (8.5, 0) rectangle (9, 0.5);
    \draw[->] (9.25,-1) -- (8.75,0);
    \end{tikzpicture}
    \caption{Transforming a single deletion into one deletion and one substitution, when the single deletion does not change the number of runs.}
    \label{fig:single_del_1}
\end{figure}

    \item If the single deletion of $b$ in the $i$-th run reduces the number of runs by two, then it is equivalent to one deletion in the $(i-1)$-th run and a substitution in the $i$-th run. See Figure~\ref{fig:single_del_2} for an example.
    
    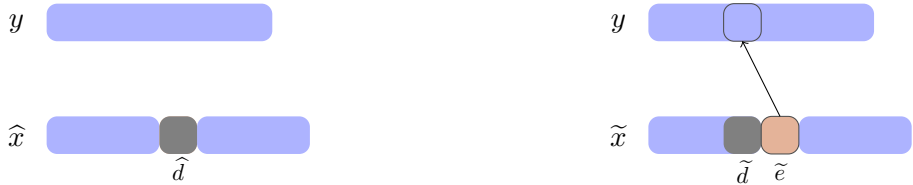
\begin{figure}[h!]
    \centering
    \begin{tikzpicture}
    \draw (-2.4,0.24) node {$y$};
    \draw (-0.25,-1.7) node [scale = 0.8] {$\widehat{d}$};
    \fill[myblue!50, rounded corners] (-2, -1.5) rectangle (-0.5, -1);
     \fill[myred!50, rounded corners] (-0.5, -1.5) rectangle (0, -1);
     \fill[myblue!50, rounded corners] (0, -1.5) rectangle (1.5, -1);
    
    \fill[black!50, rounded corners] (-0.5, -1.5) rectangle (0, -1);
    
    \draw (-2.4,-1.25) node {$\widehat{x}$};
    \fill[myblue!50, rounded corners] (-2, 0) rectangle (1, 0.5);
    
    \draw (5.6,0.24) node {$y$};
    \draw (7.25 ,-1.73) node [scale = 0.8]{$\widetilde{d}$};
    \draw (7.75 ,-1.73) node [scale = 0.8]{$\widetilde{e}$};
    \fill[myblue!50, rounded corners] (6, -1.5) rectangle (7.5, -1);
    \fill[myred!50, rounded corners] (7.5, -1.5) rectangle (8, -1);
    \fill[myblue!50, rounded corners] (8, -1.5) rectangle (9.5, -1);
    
    \fill[black!50, rounded corners] (7, -1.5) rectangle (7.5, -1);
    \draw (5.6,-1.25) node {$\widetilde{x}$};
    \fill[myblue!50, rounded corners] (6, 0) rectangle (9, 0.5);
    \draw[black!70, rounded corners] (7.5, -1.5) rectangle (8, -1);
    \draw[black!70, rounded corners] (7, 0) rectangle (7.5, 0.5);
    \draw[->] (7.75,-1) -- (7.25,0);
    
    \end{tikzpicture}
        \caption{Transforming a single deletion into one deletion and one substitution, when the single deletion reduces two runs.}
        \label{fig:single_del_2}
\end{figure}
\end{itemize}

Let $\delta$ be an indicator variable of whether $e > d$. That is, $\delta = 1$ if $e > d$, and $\delta=0$ otherwise. Then, the corrupted string $y$ can be regarded as a string obtained via one substitution at index $e-\delta$ from $x'\in\{0,1\}^{n-1}$, where $x'$ is in turn obtained via one deletion from $x$ at index $d$. 
See Figure~\ref{fig:example_delta} for an example of $\delta = 0$ or $\delta = 1$.

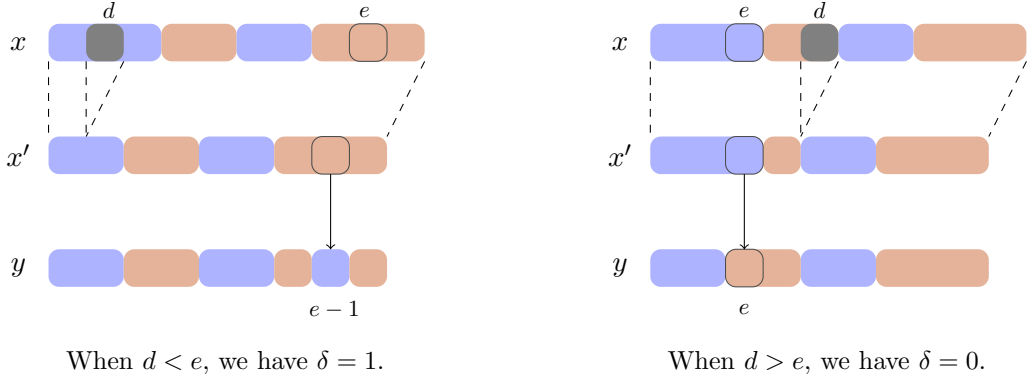
\begin{figure}[ht]
    \centering
    \begin{tikzpicture}
    \draw (-2.4,0.24) node {$x$};
    \draw (-1.2,0.69) node [scale = 0.8] {$d$};
    \draw (2.2,0.66) node [scale = 0.8]{$e$};
    \fill[myblue!50, rounded corners] (-2, 0) rectangle (-0.5, 0.5);
    \fill[myred!50, rounded corners] (-0.5, 0) rectangle (0.5, 0.5);
    \fill[myblue!50, rounded corners] (0.5, 0) rectangle (1.5, 0.5);
    \fill[myred!50, rounded corners] (1.5, 0) rectangle (3, 0.5);
    
    \fill[black!50, rounded corners] (-1.5, 0) rectangle (-1, 0.5);
    \draw[black!70, rounded corners] (2, 0) rectangle (2.5, 0.5);
    \draw (-2.4,-1.25) node {$x'$};
    \fill[myblue!50, rounded corners] (-2, -1.5) rectangle (-1, -1);
    \fill[myred!50, rounded corners] (-1, -1.5) rectangle (0, -1);
    \fill[myblue!50, rounded corners] (0, -1.5) rectangle (1, -1);
    \fill[myred!50, rounded corners] (1, -1.5) rectangle (2.5, -1);
    
    \draw[dashed] (-2,0) -- (-2,-1);
    \draw[dashed] (-1.5,0) -- (-1.5,-1);
    \draw[dashed] (-1,0) -- (-1.5,-1);
    \draw[dashed] (3,0) -- (2.5,-1);
    \draw[black!70, rounded corners] (1.5, -1.5) rectangle (2, -1);
    \draw (-2.4,-2.75) node {$y$};
    \fill[myblue!50, rounded corners] (-2, -3) rectangle (-1, -2.5);
    \fill[myred!50, rounded corners] (-1, -3) rectangle (0, -2.5);
    \fill[myblue!50, rounded corners] (0, -3) rectangle (1, -2.5);
    \fill[myred!50, rounded corners] (1, -3) rectangle (1.5, -2.5);
    \fill[myblue!50, rounded corners] (1.5, -3) rectangle (2, -2.5);
    \fill[myred!50, rounded corners] (2, -3) rectangle (2.5, -2.5);
    \draw[->] (1.75,-1.5) -- (1.75,-2.5);
    \draw (1.8,-3.3) node [scale = 0.8] {$e-1$};
    \draw (0.35,-4) node [scale = 0.9] {When $d<e$, we have $\delta = 1$.};
    
    \draw (5.6,0.24) node {$x$};
    \draw (8.25,0.69) node [scale = 0.8] {$d$};
    \draw (7.25 ,0.66) node [scale = 0.8]{$e$};
    \fill[myblue!50, rounded corners] (6, 0) rectangle (7.5, 0.5);
    \fill[myred!50, rounded corners] (7.5, 0) rectangle (8.5, 0.5);
    \fill[myblue!50, rounded corners] (8.5, 0) rectangle (9.5, 0.5);
    \fill[myred!50, rounded corners] (9.5, 0) rectangle (11, 0.5);
    
    \fill[black!50, rounded corners] (8, 0) rectangle (8.5, 0.5);
    \draw[black!70, rounded corners] (7, 0) rectangle (7.5, 0.5);
    \draw (5.6,-1.25) node {$x'$};
    \fill[myblue!50, rounded corners] (6, -1.5) rectangle (7.5, -1);
    \fill[myred!50, rounded corners] (7.5, -1.5) rectangle (8, -1);
    \fill[myblue!50, rounded corners] (8, -1.5) rectangle (9, -1);
    \fill[myred!50, rounded corners] (9, -1.5) rectangle (10.5, -1);
    
    \draw[dashed] (6,0) -- (6,-1);
    \draw[dashed] (8,0) -- (8,-1);
    \draw[dashed] (8.5,0) -- (8,-1);
    \draw[dashed] (11,0) -- (10.5,-1);
    \draw[black!70, rounded corners] (7, -1.5) rectangle (7.5, -1);
    
    \draw (5.6,-2.75) node {$y$};
    \fill[myblue!50, rounded corners] (6, -3) rectangle (7, -2.5);
    \fill[myblue!50, rounded corners] (8, -3) rectangle (9, -2.5);
    \fill[myred!50, rounded corners] (9, -3) rectangle (10.5, -2.5);
    \fill[myred!50, rounded corners] (7, -3) rectangle (8, -2.5);
    \draw[black!70, rounded corners] (7, -3) rectangle (7.5, -2.5);
    \draw[->] (7.25,-1.5) -- (7.25,-2.5);
    \draw (7.25,-3.3) node [scale = 0.8] {$e$};
    \draw (8.35,-4) node [scale = 0.9] {When $d>e$, we have $\delta = 0$.};
    \end{tikzpicture}
    \caption{Example of $d>e$ and $d<e$. The blue rectangles denote runs of 0, and the red rectangles denote runs of 1. 
    The shaded rectangle denotes the deleted bit and the bordered rectangle denotes the flipped bit.} 
    \label{fig:example_delta}
\end{figure}

The process of decoding can be thought of as inserting a bit $x_d$
before the $d$-th bit in $y$ and flipping the $(e-\delta)$-th bit in $y$. 
Roughly speaking, we will begin with a
candidate position pair $(\widetilde{d},\widetilde{e})$ with $\widetilde{d}$ is as small as possible with the property that, if $\widetilde{x}$ denotes the string obtained from $y$ by inserting $x_d$ before the $d$-th bit in $y$ and flipping the bit at position $\widetilde{e}-\widetilde{\delta}$ in $y$, where $\widetilde{\delta}$ indicates whether $\widetilde{d}<\widetilde{e}$,
then $f(\widetilde{x})=f(x)$.
Then we move $\widetilde{d}$ to the right to find the next candidate pair $\widetilde{d}$ such that $\widetilde{d}$ is as small as possible, and $\widetilde{e}$ is the unique index such that $(\widetilde{d},\widetilde{e})$ gives the correct value of $f(x)$.
During this process, either $f_1^r(x)$ determines a unique candidate position pair $(\widetilde{d},\widetilde{e})$, or a convexity-type property of $f_2^r(x)$ guarantees at most two candidate position pairs.
The convexity of $f_2^r(x)$ is a consequence of the following lemma.
\begin{lem}[\protect{\cite[Lemma 4.1]{guruswami2021explicit}}]
\label{lem:f2_convex}
Let $a_i$ and $a'_i$ be two sequences of non-negative integers such that $\sum_{i=1}^n {a_i} = \sum_{i=1}^n {a'_i}$ and there is a value $t$ such that for all $i$ satisfying $a_i < a'_i$ it holds that $a'_i \leq t$, and for all $i$ satisfying $a_i > a'_i$ it holds that $a'_i \geq t$. Then, either $a_i=a'_i$ for all $i$, or
\begin{equation*}
    \sum_{i=1}^n a_i(a_i - 1) > \sum_{i=1}^n a'_i (a'_i - 1).
\end{equation*}
\end{lem}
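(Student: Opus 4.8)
The plan is to exploit the strict convexity of the map $g(x) = x(x-1) = x^2 - x$ together with the fact that the two sequences have the same sum, which annihilates the linear part of the discrepancy. Concretely, since $\sum_{i=1}^n a_i = \sum_{i=1}^n a'_i$, I would first observe that
\[
\sum_{i=1}^n a_i(a_i-1) - \sum_{i=1}^n a'_i(a'_i-1) = \sum_{i=1}^n (a_i^2 - a_i'^2),
\]
because the linear terms $-\sum_i a_i + \sum_i a'_i$ cancel. Thus it suffices to prove $\sum_i a_i^2 > \sum_i a_i'^2$ whenever the sequences differ.

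Next I would factor each term as $a_i^2 - a_i'^2 = (a_i - a'_i)(a_i + a'_i)$ and partition the indices into $I_+ = \{i : a_i > a'_i\}$, $I_- = \{i : a_i < a'_i\}$, and $I_0 = \{i : a_i = a'_i\}$. The terms indexed by $I_0$ vanish. The hypotheses place $a'_i$ on the correct side of the threshold $t$ in each surviving set: on $I_+$ we have $a'_i \geq t$, hence $a_i + a'_i \geq 2t+1 > 2t$ by integrality, while on $I_-$ we have $a'_i \leq t$, hence $a_i + a'_i \leq 2t-1 < 2t$. The key term-wise estimate is then $(a_i - a'_i)(a_i + a'_i) > 2t\,(a_i - a'_i)$, which holds on $I_+$ because the factor $a_i - a'_i$ is positive and $a_i + a'_i > 2t$, and holds on $I_-$ because $a_i - a'_i$ is negative and $a_i + a'_i < 2t$ (so multiplying the negative factor by the smaller quantity produces a strictly larger product).

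Summing these inequalities over $I_+ \cup I_-$ and adding the vanishing $I_0$ terms gives
\[
\sum_{i=1}^n (a_i^2 - a_i'^2) > 2t \sum_{i=1}^n (a_i - a'_i) = 0,
\]
where the final equality is exactly the equal-sum hypothesis. If the sequences are not identical then $I_+ \cup I_-$ is nonempty, so at least one strict term contributes and the aggregate inequality is strict, yielding $\sum_i a_i(a_i-1) > \sum_i a'_i(a'_i-1)$ as claimed.

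I do not anticipate a genuine obstacle: this is essentially a single convexity estimate once the linear part is cancelled. The only point demanding care is the sign bookkeeping on $I_-$, where the negative factor $a_i - a'_i$ reverses the naive direction of the inequality and must be paired with $a_i + a'_i < 2t$ to recover the common lower bound $2t(a_i - a'_i)$. Keeping each of these bounds strict relies on the integrality of the entries, which forces a gap of at least $1$ around $t$; this is precisely what upgrades the conclusion from $\geq$ to the strict $>$ in the statement.
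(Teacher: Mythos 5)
Your proof is correct: the reduction to $\sum_i a_i^2 > \sum_i a_i'^2$ via the equal-sum hypothesis, the factorization $(a_i-a_i')(a_i+a_i')$, and the term-wise comparison against $2t(a_i-a_i')$ (equivalently, observing that $(a_i-a_i')(a_i+a_i'-2t)>0$ on every index where the sequences differ, using integrality to get the strict gap around $t$) all check out, and summing against $\sum_i(a_i-a_i')=0$ gives the strict inequality whenever the sequences are not identical. Note that the paper does not prove this lemma itself but imports it from \cite[Lemma 4.1]{guruswami2021explicit}; your argument is the standard convexity proof one would give for that cited statement, so there is nothing to contrast.
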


Given $h_r(x)$ and $h_r(y)$, we can easily tell how the number of runs changes; with the sketches $h(x)$ and $h(y)$, we can recover the values of $x_d$ and $x_e$.
Table~\ref{tab:h(x)} summarizes the connections between $h$ and the values of the corrupted bits.

\begin{table}[h!]
    \centering
    \begin{tabular}{|c|c|c|}
        \hline
        Difference between $h(x)$ and $h(y)$ & Bit deleted & Bit flipped\\
        \hline
         $h(x) - h(y) = -1$ & $x_d = 0$ & $x_e = 0$ \\
         \hline
         $h(x) - h(y) = \ \ 1$ & $x_d = 0$ & $x_e = 1$\\
         \hline
         $h(x) - h(y) = \ \  0$ & $x_d = 1$ & $x_e = 0$\\
         \hline
         $h(x) - h(y) = \ \ 2$ & $x_d = 1$ & $x_e = 1$\\
         \hline
    \end{tabular}
    \caption{Correspondence between $(h(x),h(y))$ and $(x_d, x_e)$.}
    \label{tab:h(x)}
\end{table}

\paragraph{Elementary moves.} 
Recall that we defined the string $\widetilde{x}$ associated with a candidate position pair $(\widetilde{d},\widetilde{e})$ to be the string obtained from $y$ by inserting $x_d$ before $y_{\widetilde{d}}$ and flipping $y_{\widetilde{e}-\widetilde{\delta}}$.
Making a parallel with $x$ and $x'$, let $\widetilde{x}'$ denote the string obtained from $\widetilde{x}$ by deleting $\widetilde{x}_{\widetilde{d}}$.
Then, we say $(\widetilde{d},\widetilde{e})$ is a \emph{valid} pair if $f(\widetilde{x})=f(x)$, $h_r(\widetilde{x})=h_r(x)$, and $h_r(x')=h_r(\widetilde{x}')$.
Intuitively, valid pairs are indistinguishable from the true error pattern $(d,e)$ from the perspective of the VT-sketch and the count sketches.

During the decoding procedure, we move from one valid pair to another as follows: 
Suppose we hold the valid pair $(\widetilde{d},\widetilde{e})$.
Then, we move $\widetilde{d}$ one index to the right, and check whether the unique $\widetilde{e}$ such that $f(\widetilde{x}) = f(x)$ forms a valid pair with $\widetilde{d}$. 
If not, we again move $\widetilde{d}$ one index to the right and repeat the process until we find the next valid pair.
We call this an \emph{elementary move}.
Note that since inserting a bit $b$ into a $b$-run at any position gives the same output, we may always move $\widetilde{d}$ to the end of the next $x_d$-run in $y$ (which may be empty).

For example, suppose that the error pattern indicates that $x_d = 1$, $x_e =1$, the deletion does not reduce the number of runs, i.e., $h_r(x') = h_r(x)$, while the substitution increases the number of runs by two, i.e., $h_r(x') - h_r(y) = -2 $. 
This means that in an elementary move, $\widetilde{d}$ is moving from the end of a $1$-run in $y$ to the end of the next $1$-run in $y$, while $\widetilde{e}$ is moving to the left accordingly so that the pair is valid. 
Moreover, since this elementary move needs to match the error pattern that the substitution increases two runs, it must satisfy that $y_{\widetilde{e}-\widetilde{\delta}+1} = y_{\widetilde{e}-\widetilde{\delta}-1} = 0$ both before and after the move. Figure~\ref{fig:elementary} shows an example of such an elementary move.

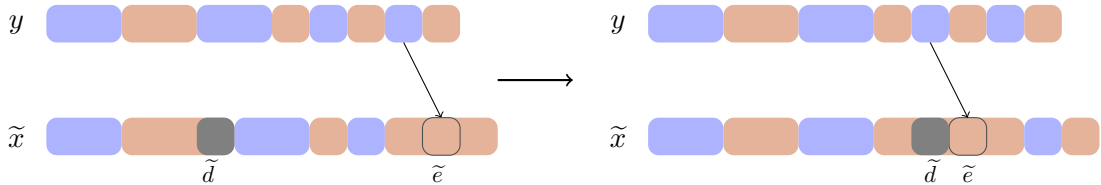
\begin{figure}[h!]
    \centering
    \begin{tikzpicture}
    \draw (0.15,-1.7) node [scale = 0.8] {$\widetilde{d}$};
    \draw (3.2,-1.75) node [scale = 0.8]{$\widetilde{e}$};
    \draw (-2.4,-1.25) node {$\widetilde{x}$};
    \fill[myblue!50, rounded corners] (-2, 0) rectangle (-1, 0.5);
    \fill[myred!50, rounded corners] (-1, 0) rectangle (0, 0.5);
    \fill[myblue!50, rounded corners] (0, 0) rectangle (1, 0.5);
    \fill[myred!50, rounded corners] (1, 0) rectangle (1.5, 0.5);
    \fill[myblue!50, rounded corners] (1.5, 0) rectangle (2, 0.5);
    \fill[myred!50, rounded corners] (2, 0) rectangle (2.5, 0.5);
    \fill[myblue!50, rounded corners] (2.5, 0) rectangle (3, 0.5);
    \fill[myred!50, rounded corners] (3, 0) rectangle (3.5, 0.5);
    \draw[->] (2.75,0) -- (3.25,-1);
    
    \draw (-2.4,0.24) node {$y$};
    \fill[myblue!50, rounded corners] (-2, -1.5) rectangle (-1, -1);
    \fill[myred!50, rounded corners] (-1, -1.5) rectangle (0.5, -1);
    \fill[myblue!50, rounded corners] (0.5, -1.5) rectangle (1.5, -1);
    \fill[myred!50, rounded corners] (1.5, -1.5) rectangle (2, -1);
    \fill[myblue!50, rounded corners] (2, -1.5) rectangle (2.5, -1);
    \fill[myred!50, rounded corners] (2.5, -1.5) rectangle (4, -1);
    \draw[black!70, rounded corners] (3, -1.5) rectangle (3.5, -1);
    \fill[black!50, rounded corners] (0, -1.5) rectangle (0.5, -1);
    
    \draw (5.6,0.24) node {$y$};
    \draw (9.75,-1.7) node [scale = 0.8] {$\widetilde{d}$};
    \draw (10.25 ,-1.75) node [scale = 0.8]{$\widetilde{e}$};
    \fill[myblue!50, rounded corners] (6, 0) rectangle (7, 0.5);
    \fill[myred!50, rounded corners] (7, 0) rectangle (8, 0.5);
    \fill[myblue!50, rounded corners] (8, 0) rectangle (9, 0.5);
    \fill[myred!50, rounded corners] (9, 0) rectangle (9.5, 0.5);
    \fill[myblue!50, rounded corners] (9.5, 0) rectangle (10, 0.5);
    \fill[myred!50, rounded corners] (10, 0) rectangle (10.5, 0.5);
    \fill[myblue!50, rounded corners] (10.5, 0) rectangle (11, 0.5);
    \fill[myred!50, rounded corners] (11, 0) rectangle (11.5, 0.5);
    
    \draw (5.6,-1.25) node {$\widetilde{x}$};
    \fill[myblue!50, rounded corners] (6, -1.5) rectangle (7, -1);
    \fill[myred!50, rounded corners] (7, -1.5) rectangle (8, -1);
    \fill[myblue!50, rounded corners] (8, -1.5) rectangle (9, -1);
    \fill[myred!50, rounded corners] (9, -1.5) rectangle (11, -1);
    \fill[myblue!50, rounded corners] (11, -1.5) rectangle (11.5, -1);
    \fill[myred!50, rounded corners] (11.5, -1.5) rectangle (12, -1);
    \draw[black!70, rounded corners] (10, -1.5) rectangle (10.5, -1);
    \draw[->] (9.75,0) -- (10.25,-1);
    \fill[black!50, rounded corners] (9.5, -1.5) rectangle (10, -1);
    
    \draw[->,thick] (4,-0.5) -- (5,-0.5);
    \end{tikzpicture}
    \caption{Example of an elementary move. Suppose that the error pattern indicates that $x_d = 1$, $x_e =1$, and the deletion does not reduce the number of runs while the substitution increases the number of runs by two. The process starts with the left figure in which a bit $1$ is inserted at position $\widetilde{d}$, the end of a $1$-run and the bit $1$ at position $\widetilde{e}-1$ is flipped. After an elementary move, $\widetilde{d}$ moves to the end of the next $1$-run, and $e$ moves to the next position that matches the error pattern $y_{\widetilde{e}-\widetilde{\delta}+1} = y_{\widetilde{e}-\widetilde{\delta}-1} = 0$.}
    \label{fig:elementary}
\end{figure}

We use different arguments according to how the number of runs changes to show that at most two valid candidate position pairs $(\widetilde{d},\widetilde{e})$ yield the correct value of $
f(x)$, $f_1^r(x)$, and $f_2^r(x)$ simultaneously. 
The following equations will be useful to determine how $\widetilde{d}$ and $\widetilde{e}$ change in each elementary move. 
Recall that we regard $y$ as a string obtained via one substitution at index $e-\delta$ from $x'\in\{0,1\}^{n-1}$, where $x'$ is obtained via one deletion from $x$ at index $d$. 
Note that
\begin{align*}
    &f(x) - f(x') = dx_d + \sum_{d}^{n-1}x'_i,
    &f(x') - f(y) = (e-\delta)[x_e - (1-x_e)].
\end{align*}
Moreover, we have
\begin{equation*}
    \sum_{d}^{n-1}x'_i=\sum_{d}^{n-1}y_i + \delta(2x_e-1).
\end{equation*}
Then, combining these three observations yields
\begin{equation}
    f(x)-f(y) = dx_d + \sum_{i=d}^{n-1} y_i + e(2x_e-1).\label{eqn:f(x)}
\end{equation}

\subsubsection{If the number of runs increases by two}
\label{sec:increase_two}

If $r(y) = r(x)+2$, then it must be that 
$r(x) = r(x')$ and $r(y) = r(x')+2$. 
This means that the deletion does not change the number of runs (and thus occurred in a run of length at least $2$ in $x$), while the substitution affects a bit in the middle of a run of length at least $3$.
In particular, we have $y_{e-\delta-1} = y_{e-\delta+1} = 1-y_{e-\delta}$. 
In this case, it follows that
\begin{align*}
    &f_1^r(x)-f_1^r(x') = r^x_d, & f_1^r(x')-f_1^r(y) = -(1+2(n-e+\delta)).
\end{align*}
Therefore, for the run-based sketch $f_1^r(x)$ it holds that
\begin{equation}
    f_1^r(x)-f_1^r(y) = r^x_d - (1 + 2(n-e+\delta)).\label{eqn:f1r_inc_2}
\end{equation}
We now proceed by case analysis on the value of $x_d$ and $x_e$.

\paragraph{If $x_e = x_d=b$:} In this case, when $\widetilde{d}$ makes an elementary move to the right, it must pass across a $(1-b)$-run of some length $\ell\geq 1$.
According to~\eqref{eqn:f(x)}, position $\widetilde{e}$ has to move to the left by $\ell$ so that
$f(\widetilde{x})=f(x)$.
If we have $\widetilde{d}<\widetilde{e}$ before one elementary move but $\widetilde{d}>\widetilde{e}$ after that move, we call it a \emph{take over step}. For each elementary move:

\begin{itemize}
    \item If the move is not a take over step: Then, $r^{\widetilde{x}}_{\widetilde{d}}$ increases by $2$ while $2(n-\widetilde{d}+\widetilde{\delta})+1$ increases by $2\ell$.
    Therefore,~\eqref{eqn:f1r_inc_2} implies that $f_1^r(\widetilde{x})$ strictly decreases after such a move whenever $\ell>1$.
    If $\ell=1$, then $\widetilde{e}$ moves by $1$ to the left and $f_1^r(\widetilde{x})$ remains unchanged. 
    However,
    since we need $1-b = y_{\widetilde{e}-\widetilde{\delta}} = 1-y_{\widetilde{e}-\widetilde{\delta}}$ it follows that $\widetilde{e}$ cannot move only $1$ position to the left, and so $\ell>1$ necessarily.
    See Figure~\ref{fig:elementary} for an example.
    
    \item If the move is a take over step: Before the move, $\widetilde{d}$ is on the left of a $(1-b)$-run of length $\ell\geq 1$ while $\widetilde{e}>\widetilde{d}$ satisfies $y_{\widetilde{e}-1}=1-b$ and $y_{\widetilde{e}-2} = y_{\widetilde{e}} = b$. 
    After the move, $\widetilde{d}$ moves to the right of the $(1-b)$-run of length $\ell$, while $\widetilde{e}$ is to the left of $\widetilde{d}$. 
    Moreover, it must be that $y_{\widetilde{e}}=1-b$ and $y_{\widetilde{e}-1} = y_{\widetilde{e}+1} = b$. 
    To match the error pattern, the only possible case is that $\ell = 1$. 
    To see why this is the case, note that when $\ell \geq 2$ the index $\widetilde{e}$ has to move to the left by at least $\ell+2$ to match the error pattern $y_{\widetilde{e}-\widetilde{\delta}-1} = y_{\widetilde{e}-\widetilde{\delta}+1} = 1-y_{\widetilde{e}-\widetilde{\delta}}$.
    However, this move leads to $f(\widetilde{x})\neq f(x)$, and thus does not yield a valid pair $(\widetilde{d}, \widetilde{e})$.
    When $\ell = 1$, let $(\widetilde{d}_1, \widetilde{e}_1)$ and $(\widetilde{d}_2, \widetilde{e}_2)$ denote the position pair before and after the move, respectively. 
    Then, these two pairs yield the same candidate solution $\widetilde{x}_1 = \widetilde{x}_2$. See Figure~\ref{fig:same_x} for an example.               
    
    \begin{figure}[h!]
        \centering
        \begin{tikzpicture}
    \draw (-0.25,-1.7) node [scale = 0.8] {$\widetilde{d}_1$};
    \draw (0.25,-1.75) node [scale = 0.8]{$\widetilde{e}_1$};
    \draw (-2.4,-1.25) node {$\widetilde{x}_1$};
    \fill[myblue!50, rounded corners] (-1.5, 0) rectangle (-0.5, 0.5);
    \fill[myred!50, rounded corners] (-0.5, 0) rectangle (0, 0.5);
    \fill[myblue!50, rounded corners] (0, 0) rectangle (1, 0.5);
    \draw[->] (-0.25,0) -- (0.25,-1);
    
    \draw (-2.4,0.24) node {$y$};
    \fill[myblue!50, rounded corners] (-1.5, -1.5) rectangle (1.5, -1);
    \draw[black!70, rounded corners] (0, -1.5) rectangle (0.5, -1);
    \fill[black!50, rounded corners] (-0.5, -1.5) rectangle (0, -1);
    
    \draw (3.6,0.24) node {$y$};
    \draw (5.75,-1.7) node [scale = 0.8] {$\widetilde{d}_2$};
    \draw (5.25,-1.75) node [scale = 0.8]{$\widetilde{e}_2$};
    \fill[myblue!50, rounded corners] (4, 0) rectangle (5, 0.5);
    \fill[myred!50, rounded corners] (5, 0) rectangle (5.5, 0.5);
    \fill[myblue!50, rounded corners] (5.5, 0) rectangle (6.5, 0.5);
    
    \draw (3.6,-1.25) node {$\widetilde{x}_2$};
    \fill[myblue!50, rounded corners] (4, -1.5) rectangle (7, -1);
    \draw[black!70, rounded corners] (5, -1.5) rectangle (5.5, -1);
    \draw[->] (5.25,0) -- (5.25,-1);
    \fill[black!50, rounded corners] (5.5, -1.5) rectangle (6, -1);
    
    \draw[->,thick] (2,-0.5) -- (3,-0.5);
    \end{tikzpicture}
        \caption{An example of a take over step. If the take over happens, it must be that $\ell = 1$. The resulting $\widetilde{x}_1$ and $\widetilde{x}_2$ are the same.}
        \label{fig:same_x}
    \end{figure}
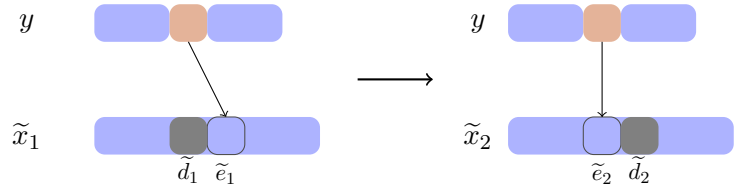

\end{itemize}

Taking into account both cases above, we see that $f_1^r(\widetilde{x})$ decreases during each elementary move, and decreases by at most $2n$ during the whole process.
Since the value of $f_1^r(x)$ is taken modulo $12n+1$, there is only a unique pair $(\widetilde{d}, \widetilde{e})$ that yields a solution such that $f_1^r(\widetilde{x}) = f_1^r(x)$.
Hence, $f(x)$ and $f_1^r(x)$ together with $y$ uniquely determine one valid pair $(\widetilde{d}, \widetilde{e})$, which in turn yields a unique candidate solution $\widetilde{x}=x$. 

\paragraph{If $x_d= 1-x_e = b$:} In this case, when $\widetilde{d}$ makes an elementary move to the right, it must pass across a
$(1-b)$-run of some length $\ell\geq 1$. 
Then, $\widetilde{e}$ has to move to the right by $\ell$ so that $f(\widetilde{x})=f(x)$. 
During each such move $f_1^r(\widetilde{x})$ strictly increases. For the whole process, $f_1^r(\widetilde{x})$ increases by at most $2n$. 
By a similar argument as above, we have that $f(x)$ and $f_1^r(x)$ together with $y$ uniquely determine one valid pair $(\widetilde{d}, \widetilde{e})$ which yields the correct solution $\widetilde{x} = x$.

\subsubsection{If the number of runs decreases by four}
\label{sec:decrease_four}
If $r(y) = r(x)-4$, then it must be that $r(x') = r(x)-2$ and $r(y) = r(x')-2$. 
This means that the error pattern must satisfy $x_{d-1} = x_{d+1} = 1-x_{d}$ and that $y_{e-\delta-1} = y_{e-\delta+1} = 1-y_{e-\delta}$.
In this case, since
\begin{align*}
    & f_1^r(x)-f_1^r(x') = r_d^x + 2(n+1-d),
    & f_1^r(x')-f_1^r(y) = 1 + 2(n-e+\delta),
\end{align*}
we have that
\begin{equation}
    f_1^r(x)- f_1^r(y) = [r_d^x + 2(n+1-d)] + [1+2(n-e+\delta)].\label{eqn:f1r_dec_4}
\end{equation}
We now proceed by case analysis on the value of $x_d$ and $x_e$.
\paragraph{If $x_d = x_e = b$:} We first find a solution of $\widetilde{d}$ and $\widetilde{e}$ such that $\widetilde{d}$ is as small as possible. 
During each elementary move, let $(\widetilde{d}_1, \widetilde{e}_1)$ denote the valid pair before the move and $(\widetilde{d}_2, \widetilde{e}_2)$ after the move. Let $\widetilde{x}_1$ and $\widetilde{x}_2$ be the resulting string, respectively. 
Recall that the rank of an index $i$ in a string $x$ is the $i$-th number in the run string $r^x$. For each elementary move,
\begin{itemize}
    \item If $\widetilde{d}_1 < \widetilde{e}_1$ and $\widetilde{d}_2 < \widetilde{e}_2$: 
    Since both the deletion and the substitution decreases the number of runs by two, for index $i$ such that $\widetilde{d}_1\leq i \leq \widetilde{d}_2$, the rank of index $i$ in $\widetilde{x}_1$ is larger than the rank of index $i$ in $\widetilde{x}_2$. 
    Similarly, for index $j$ such that $\widetilde{e}_2\leq j \leq \widetilde{e}_1$, the rank of index $j$ in $\widetilde{x}_1$ is smaller than the rank of index $j$ in $\widetilde{x}_2$. 
    By Lemma~\ref{lem:f2_convex}, we have that $f_2^r(\widetilde{x}_1) 
    \leq f_2^r(\widetilde{x}_2)$.
    Therefore, if $\widetilde{d} < \widetilde{e}$ before and after an elementary move, then $f_2^r(\widetilde{x})$ is strictly increasing.
    \item If $\widetilde{d}_1 > \widetilde{e}_1$ and $\widetilde{d}_2 > \widetilde{e}_2$: 
    For index $i$ such that $\widetilde{e}_2\leq i \leq \widetilde{e}_1$, the rank of index $i$ in $\widetilde{x}_1$ is smaller than the rank of index $i$ in $\widetilde{x}_2$. 
    Similarly, for index $j$ such that $\widetilde{d}_1\leq j \leq \widetilde{d}_2$, the rank of index $j$ in $\widetilde{x}_1$ is larger than the rank of index $j$ in $\widetilde{x}_2$. 
    By Lemma~\ref{lem:f2_convex}, we have that $f_2^r(\widetilde{x}_1) 
    \geq f_2^r(\widetilde{x}_2)$.
    Therefore, if $\widetilde{d} > \widetilde{e}$ before and after an elementary move, then $f_2^r(\widetilde{x})$ is strictly decreasing.
\end{itemize}
Since when $\widetilde{d}$ moves to the right, $\widetilde{e}$ has to move to the left accordingly to be a valid pair, we can have at most one take over step. 
Moreover, $f_2^r(\widetilde{x})$ increases by at most $4n^2$ if $\widetilde{d} < \widetilde{e}$ before and after the elementary move, and $f_2^r(\widetilde{x})$ decreases by at most $4n^2$ if $\widetilde{d} > \widetilde{e}$ before and after the elementary move. 
Since the value of $f_2^r(x)$ is taken modulo $16n^2+1$, 
this implies that we have at most two candidate position solutions $(\widetilde{d}_1,\widetilde{e}_1)$ and $(\widetilde{d}_2,\widetilde{e}_2)$, where $\widetilde{d}_1 < \widetilde{e}_1$ and $\widetilde{d}_2 > \widetilde{e}_2$ such that
$f(\widetilde{x}_1)=f(\widetilde{x}_2)=f(x)$ and $f_2^r(\widetilde{x}_1)=f_2^r(\widetilde{x}_2)=f_2^r(x)$. 
Hence, $f(x)$, $f_1^r(x)$, and $f_2^r(x)$ together with $y$ yield at most two candidate position solutions $(\widetilde{d}_1,\widetilde{e}_1)$ and $(\widetilde{d}_2,\widetilde{e}_2)$, and thus at most two candidate solutions $\widetilde{x}_1$ and $\widetilde{x}_2$.

\paragraph{If $x_d = 1-x_e = b$:} We first find a valid pair $(\widetilde{d},\widetilde{e})$ such that $\widetilde{d}$ is as small as possible. 
When $\widetilde{d}$ makes an elementary move to the right, $\widetilde{e}$ needs to move to the right such that $f(\widetilde{x}) = f(x)$. 
During such moves, if $\widetilde{d}$ moves to the right by $\ell$, then $r^{\widetilde{x}}_{\widetilde{d}}$ increases by at most $\ell$, while $2(n+1-d)$ decreases by $2\ell$. 
Moreover, $2(n-e+\delta)$ is non-increasing. 
Therefore, by~\eqref{eqn:f1r_dec_4}, $f_1^r(\widetilde{x})$ strictly decreases during the elementary moves.
For the whole process, $f_1^r(\widetilde{x})$ decreases by at most $4n$.
Hence, following a similar argument as above, $f(x)$ and $f_1^r(x)$ together with $y$ uniquely determine a position pair $(\widetilde{d}, \widetilde{e})$, and thus a unique candidate solution $\widetilde{x}$.

\subsubsection{If the number of runs decreases by two}
\label{sec:decrease_two}
If $r(y) = r(x)-2$, it might be that the substitution decreases the number of runs by two: $r(x') = r(x)$ and $r(y) = r(x')-2$; or that the deletion decreases the number of runs by two: $r(x')=r(x)-2$ and $r(y) = r(x')$. 
We will treat these two sub-cases separately, and we will show that in each of these sub-cases we can uniquely recover $x$.

\paragraph{Substitution decreases the number of runs by two.}
We consider the case in which the substitution decreases the number of runs by two, i.e.,  $r(x') = r(x)$ and $r(y) = r(x')-2$. 
Since the substitution decreases the number of runs, it must flip a bit $b$ in a $b$-run of length one, i.e., $y_{e-\delta-1} = y_{e-\delta+1} = y_{e-\delta}$, and we also have
\begin{equation}
    f_1^r(x)-f_1^r(y) = r^x_d + (1+2(n-e+\delta)).\label{eqn:f1r_dec_2_sub}
\end{equation}
We will proceed by case analysis based on the value of $x_d$ and $x_e$.

\paragraph{If $x_d = x_e = b$:} 
We first find a valid pair $(\widetilde{d},\widetilde{e})$ such that $\widetilde{d}$ is as small as possible. 
When $\widetilde{d}$ makes an elementary move to the right, it must pass across a $(1-b)$-run of length $\ell\geq 1$. 
Henceforth, in this elementary move $\widetilde{e}$ must to move to the left by $\ell$ so that $f(\widetilde{x})=f(x)$, according to~\eqref{eqn:f(x)}.
During such moves, $r_{\widetilde{d}}^{\widetilde{x}}$ strictly increases, while $2(n-e+\delta)$ is non-decreasing.
As a result, by~\eqref{eqn:f1r_dec_2_sub}, $f_1^r(\widetilde{x})$ increases during each elementary move.
For the whole process, $f_1^r(x)$ increases by at most $2n$.
Therefore, analogously to previous cases, $f(x)$ and $f_1^r(x)$ together with $y$ uniquely determine a valid position pair $(\widetilde{d}, \widetilde{e})$, and thus a unique candidate solution $\widetilde{x}=x$.

\paragraph{If $x_d = 1 - x_e = b$:} 
We first find a valid pair $(\widetilde{d},\widetilde{e})$ such that $\widetilde{d}$ is as small as possible. When $\widetilde{d}$ makes an elementary move to the right, it must pass across a $(1-b)$-run of length $\ell\geq 1$. According to ~\eqref{eqn:f(x)}, $\widetilde{e}$ needs to move to the right by $\ell$ to ensure that $f(\widetilde{x}) = f(x)$. 
\begin{itemize}
    \item If this is not a take over step: Suppose that the valid pairs are $(\widetilde{d}_1,\widetilde{e}_1)$ and $(\widetilde{d}_2, \widetilde{e}_2)$ before and after the move, respectively. 
    During each such move, the run number $r^{\widetilde{x}}_{\widetilde{d}}$ increases by $2$ while $[1+2(n-\widetilde{e}+\widetilde{\delta})]$ decreases by $2\ell$. By~\eqref{eqn:f1r_dec_2_sub}, if $\ell > 1$, then $f_1^r(\widetilde{x})$ decreases; 
    if $\ell = 1$, then $f_1^r(\widetilde{x})$ does not change.
    
    However, when $\ell = 1$, the rank of $\widetilde{d}_1$ and $\widetilde{d}_1+1$ in $\widetilde{x}_1$ is smaller than that in $\widetilde{x}_2$, while the rank of $\widetilde{e}_1$ and $\widetilde{e}_2$ in $\widetilde{x}_1$ is larger than that in $\widetilde{x}_2$.
    By Lemma~\ref{lem:f2_convex}, $f_2^r(\widetilde{x}_1) > f_2^r(\widetilde{x}_2)$. 
    This implies that during such elementary moves, either $f_1^r(\widetilde{x})$ strictly decreases, or $f_2^r(\widetilde{x})$ strictly decreases.
    \item If this is a take over step: During each such move, the run number $r^{\widetilde{x}}_{\widetilde{d}}$ increases by $2$ while $[1+2(n-\widetilde{e}+\widetilde{\delta})]$ decreases by $2\ell+2$. 
    Therefore, $f_1^r(\widetilde{x})$ decreases.
\end{itemize}
During the whole process, $f_1^r(x)$ decreases by at most $2n$, while $f_2^r(x)$ decreases by at most $4n^2$. 
Consequently, $f(x)$ and $f_1^r(x)$ together with $y$ uniquely determine a valid pair $(\widetilde{d},\widetilde{e})$, and thus a unique candidate solution $\widetilde{x}=x$.

\paragraph{Deletion decreases the number of runs by two.}
In this section we consider the case in which $r(x')=r(x)-2$ and $r(y) = r(x')$. This means that the substitution happens at the beginning or end of a run. Moreover, the deletion pattern satisfies  $x_{d-1} = x_{d+1} = 1-x_d$. Let $\gamma$ be an indicator variable such that $\gamma = 1$ if the substitution is at the end of a run in $y$, and $\gamma = -1$ if the substitution is at the beginning of a run in $y$. Then we have that 
\begin{equation}
    f_1^r(x)-f_1^r(y) = r^x_d+2(n+1-d)+\gamma.\label{eqn:f1r_dec_2_del}
\end{equation}
We now proceed by case analysis cased on the value of $x_d$ and $x_e$.

\paragraph{If $x_d = x_e = b$:} 
We first find a valid pair $(\widetilde{d},\widetilde{e})$ such that $\widetilde{d}$ is as small as possible. 
There are two kinds of elementary moves of $\widetilde{d}$. 
\begin{itemize}
    \item The index $\widetilde{d}$ moves within a $(1-b)$-run of length at least two: In this case, when $\widetilde{d}$ moves to the right by $\ell$, $\widetilde{e}$ has to move to the left by $\ell$ so that $f(\widetilde{x})=f(x)$. In such a move, $r_{\widetilde{d}}^{\widetilde{d}}$ does not change, while $2(n+1-d)$ decreases by $2\ell$. According to~\eqref{eqn:f1r_dec_2_del}, $f_1^r(\widetilde{x})$ decreases if $\ell > 1$.
    When $\ell = 1$, if $\gamma$ changes from $-1$ to $1$, the run-based sketch $f_1^r(\widetilde{x})$ remains unchanged by~\eqref{eqn:f1r_dec_2_del}.
    This implies that during this move, $\widetilde{e}$ moves from the beginning of a $b$-run in $y$ to the end of a $b$-run in $y$.
    Otherwise, the substitution will affect the number of runs.
    However, $\widetilde{e}$ has to move to the left by at least two to match the error pattern. Thus, it must be that $\ell > 1$.
    
    \item The index $\widetilde{d}$ moves across a $b$-run of length $\ell$: 
    In this case, $\widetilde{d}$ moves to the right by $\ell+\ell'$ for $\ell'\geq 2$, $\widetilde{e}$ has to move to the left by $\ell'$ such that $f(\widetilde{x}) = f(x)$. 
    During such moves, $f_1^r(\widetilde{x})$ decreases.
\end{itemize}
For the whole process, $f_1^r(x)$ decreases by at most $2n$.
Therefore, $f(x)$ and $f_1^r(x)$, together with $y$, uniquely determine one position pair $(\widetilde{d}, \widetilde{e})$, and thus a unique candidate solution $\widetilde{x}$.

\paragraph{If $x_d = 1-x_e = b$:} We first find a valid pair $(\widetilde{d}, \widetilde{e})$ such that $\widetilde{d}$ is as small as possible. There are two kinds of elementary moves of $\widetilde{d}$. 

\begin{itemize}
    \item The first one is that when $\widetilde{d}$ moves within a $(1-b)$-run of length at least two. In this case, when $\widetilde{d}$ moves to the right by $\ell$, $\widetilde{e}$ has to move to the right by $\ell$ such that $f(\widetilde{x}) = f(x)$. During such moves, $f_1^r(\widetilde{x})$ is non-increasing. Moreover, $f_1^r(\widetilde{x})$ only remains unchanged during such moves if $\ell=1$ and $\widetilde{e}$ moves from the beginning of a $b$-run of length two to the end of this $b$-run; otherwise the substitution will affect the number of runs. During such a move, $f_2^r(\widetilde{x})$ is increasing.
    \item The second one is that when $\widetilde{d}$ moves cross a $b$-run of length $\ell$. 
    In this case, $\widetilde{d}$ moves to the right by $\ell+\ell'$ for some $\ell'\geq 2$, $\widetilde{e}$ has to move to the right by $\ell'$ to match $f(x)$. 
    During such moves, $f_1^r(\widetilde{x})$ decreases.
\end{itemize}

For the whole process, $f_1^r(x)$ decreases by at most $2n$, while $f_2^r(x)$ increases by at most $4n^2$.
Therefore, $f(x)$, $f_1^r(x)$ and $f_2^r(x)$, together with $y$, uniquely determine a position pair $(\widetilde{d}, \widetilde{e})$, and thus a unique candidate solution $\widetilde{x}$.

\subsubsection{If the number of runs does not change}
\label{sec:unchange}
If $r(y) = r(x)$, it might be that both errors do not change the number of runs: $r(x') = r(x)$ and $r(y) = r(x')$; or that deletion reduces two runs while substitution increases two runs: $r(x')=r(x)-2$ and $r(y) = r(x')+2$. 

\paragraph{Both errors do not change the number of runs.}
In this case, let $\gamma$ be an indicator variable satisfying $\gamma = 1$ if the
substitution is at the end of a run in $y$, and $\gamma = -1$ if the substitution is at the beginning of a run in $y$. We have
\begin{equation}
    f_1^r(x) - f_1^r(y) = r_d^x+\gamma. \label{eqn:f1r_unchange}
\end{equation}
Moreover,
\begin{equation}
    f_2^r(x)-f_2^r(y) = 
    \begin{cases}
    r_d^x(r_d^x-1) + 2 (r^x_{e}-1), \quad\text{ if } \gamma = 1,\\
    r_d^x(r_d^x-1) - 2 r^x_{e}, \quad\text{ if } \gamma = -1.
    \end{cases}
    \label{eqn:f2r_unchange}
\end{equation}
We now proceed by case analysis cased on the value of $x_d$ and $x_e$.

\paragraph{If $x_d = x_e = b$:} We have at most two solutions, one with $\gamma = 1$ and one with $\gamma = -1$. Assume that the pair $(\widetilde{d}_1,\widetilde{e}_1)$ corresponding to $\gamma = 1$ yields codeword $\widetilde{x}_1$, and that the pair $(\widetilde{d}_2,\widetilde{e}_2)$ corresponding to $\gamma = -1$ yields codeword $\widetilde{x}_2$. We now show that if $f(\widetilde{x}_1) = f(\widetilde{x}_2)$, $f_1^r(\widetilde{x}_1) = f_1^r(\widetilde{x}_2)$, and that $f_2^r(\widetilde{x}_1) = f_2^r(\widetilde{x}_2)$, then $\widetilde{x}_1$ must be same as $\widetilde{x}_2$.

Note that since  $f_1^r(\widetilde{x}_1) = f_1^r(\widetilde{x}_2)$and   $f_2^r(\widetilde{x}_1) = f_2^r(\widetilde{x}_2)$, by~\eqref{eqn:f1r_unchange} and~\eqref{eqn:f2r_unchange}, we have
\begin{align*}
    &r_{\widetilde{d}_2}^{\widetilde{x}_2} = r_{\widetilde{d}_1}^{\widetilde{x}_1} + 2,
    &r_{\widetilde{d}_1}^{\widetilde{x}_1}(r_{\widetilde{d}_1}^{\widetilde{x}_1}-1) + 2 (r^{\widetilde{x}_1}_{\widetilde{e}_1}-1) = r_{\widetilde{d}_2}^{\widetilde{x}_2}(r_{\widetilde{d}_2}^{\widetilde{x}_2}-1) - 2 r_{\widetilde{e}_2}^{\widetilde{x}_2}.
\end{align*}
This means that $r_{\widetilde{e}_1}^{\widetilde{x}_1} + r_{\widetilde{e}_2}^{\widetilde{x}_2} = r_{\widetilde{d}_1}^{\widetilde{x}_1} + r_{\widetilde{d}_2}^{\widetilde{x}_2}$, or, equivalently,
\begin{equation}
    r_{\widetilde{e}_1-\widetilde{\delta}_1}^y + r_{\widetilde{e}_2-\widetilde{\delta}_2}^{y} = r_{\widetilde{d}_1-1}^y + r_{\widetilde{d}_2-1}^{y}, \label{eqn:unchange_equal}
\end{equation}
where $\delta_1$ ($\delta_2$) is the indicator of whether $\widetilde{d}_1$ is smaller than $\widetilde{e}_1$ ($\widetilde{d}_2$ is smaller than $\widetilde{e}_2$, respectively).
Since both $\widetilde{d}_1$ and $\widetilde{d}_2$ do not change the runs, there exists a single $(1-b)$-run of length $\ell$ between $\widetilde{d}_1$ and $\widetilde{d}_2$ in $y$, and that $\widetilde{e}_2 - \widetilde{e}_1 = \ell$. 
\begin{itemize}
    \item If $r^y_{\widetilde{e}_1-\widetilde{\delta}_1} > r_{\widetilde{d}_2-1}^{y}$: By~\eqref{eqn:unchange_equal}, it must be that $r_{\widetilde{e}_2-\widetilde{\delta}_2}^{y} < r_{\widetilde{d}_1-1}^y$. That is to say,
    \begin{equation*}
        \widetilde{e}_2-\widetilde{\delta}_2< \widetilde{d}_1-1 < \widetilde{d}_2-1 < \widetilde{e}_1-\widetilde{\delta}_1.
    \end{equation*} 
    However, this contradicts the fact that $\widetilde{e}_2 - \widetilde{e}_1 = \ell$.
    \item If $r^y_{\widetilde{e}_1-\widetilde{\delta}_1} = r_{\widetilde{d}_2-1}^{y}$: This is impossible since $y_{\widetilde{e}_1-\widetilde{\delta}_1} = 1-b$ while $y_{\widetilde{d}_2-1} = b$.

    \item If $r^y_{\widetilde{e}_1-\widetilde{\delta}_1} < r_{\widetilde{d}_2-1}^{y}$: By~\eqref{eqn:unchange_equal} and the fact that $f(\widetilde{x}_1) = f(\widetilde{x}_2)$, it must be that $r_{\widetilde{e}_1-\widetilde{\delta}_1}^y = r_{\widetilde{d}_1-1}^y+1$, and $r_{\widetilde{e}_2-\widetilde{\delta}_2}^y = r_{\widetilde{d}_2-1}^y-1$, i.e., $\widetilde{e}_1-\widetilde{\delta}_1$ is the end of the $(r^y_{\widetilde{d}_1-1}+1)$-th run in $y$, and $\widetilde{e}_2-\widetilde{\delta}_2$ is the beginning of this run. In this case, we must have $\widetilde{x}_1=\widetilde{x}_2$. 
    See Figure~\ref{fig:same_gamma} for an example.
    
    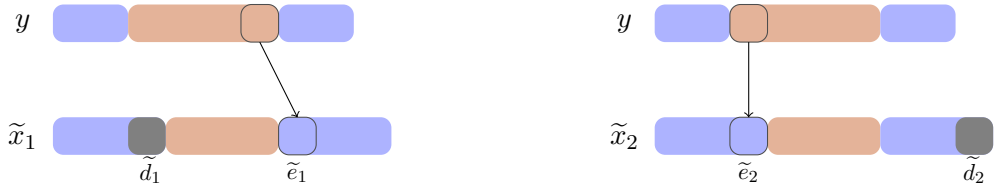
\begin{figure}[h!]
    \centering
    \begin{tikzpicture}
    \draw (-2.4,0.24) node {$y$};
    \draw (-0.7,-1.7) node [scale = 0.8] {$\widetilde{d}_1$};
    \draw (1.25,-1.73) node [scale = 0.8]{$\widetilde{e}_1$};
    \fill[myblue!50, rounded corners] (-2, -1.5) rectangle (-0.5, -1);
    \fill[myred!50, rounded corners] (-0.5, -1.5) rectangle (1, -1);
    \fill[myblue!50, rounded corners] (1, -1.5) rectangle (2.5, -1);
    
    \fill[black!50, rounded corners] (-1, -1.5) rectangle (-0.5, -1);
    \draw[black!70, rounded corners] (1, -1.5) rectangle (1.5, -1);
    
    \draw (-2.4,-1.25) node {$\widetilde{x}_1$};
    \fill[myblue!50, rounded corners] (-2, 0) rectangle (-1, 0.5);
    \fill[myred!50, rounded corners] (-1, 0) rectangle (1, 0.5);
    \fill[myblue!50, rounded corners] (1, 0) rectangle (2, 0.5);
    \draw[black!70, rounded corners] (0.5, 0) rectangle (1, 0.5);
    \draw[->] (0.75,0) -- (1.25,-1);
    
    \draw (5.6,0.24) node {$y$};
    \draw (10.25,-1.7) node [scale = 0.8] {$\widetilde{d}_2$};
    \draw (7.25 ,-1.73) node [scale = 0.8]{$\widetilde{e}_2$};
    \fill[myblue!50, rounded corners] (6, -1.5) rectangle (7.5, -1);
    \fill[myred!50, rounded corners] (7.5, -1.5) rectangle (9, -1);
    \fill[myblue!50, rounded corners] (9, -1.5) rectangle (10.5, -1);
    
    \fill[black!50, rounded corners] (10, -1.5) rectangle (10.5, -1);
    \draw[black!70, rounded corners] (7, -1.5) rectangle (7.5, -1);
    \draw (5.6,-1.25) node {$\widetilde{x}_2$};
    \fill[myblue!50, rounded corners] (6, 0) rectangle (7, 0.5);
    \fill[myred!50, rounded corners] (7, 0) rectangle (9, 0.5);
    \fill[myblue!50, rounded corners] (9, 0) rectangle (10, 0.5);
    \draw[black!70, rounded corners] (7, 0) rectangle (7.5, 0.5);
    \draw[->] (7.25,0) -- (7.25,-1);
    \end{tikzpicture}
    \caption{An example of $\widetilde{x}_1 =\widetilde{x}_2$, where $\widetilde{x}_1$ is a candidate solution with $\gamma = 1$ while $\widetilde{x}_2$ is a candidate solution with $\gamma = -1$.}
    \label{fig:same_gamma}
\end{figure}

\end{itemize}
Therefore, in this case, $f(x)$, $f_1^r(x)$, and $f_2^r(x)$ together with $y$ uniquely determine a valid pair $(\widetilde{d}, \widetilde{e})$ and thus a unique candidate solution $\widetilde{x}=x$.

\paragraph{If $x_d = 1-x_e = b$:} 
We have at most two solutions, one with $\gamma = 1$ and one with $\gamma = -1$. Assume that the pair $(\widetilde{d}_1,\widetilde{e}_1)$ corresponding to $\gamma = 1$ yields codeword $\widetilde{x}_1$, and that the pair $(\widetilde{d}_2,\widetilde{e}_2)$ corresponding to $\gamma = -1$ yields codeword $\widetilde{x}_2$.
By a similar argument as above, it must be that $r_{\widetilde{e}_1-\widetilde{\delta}_1}^y = r^y_{\widetilde{d}_1-1}$, $r_{\widetilde{e}_2-\widetilde{\delta}_2}^y = r^y_{\widetilde{d}_2-1}$. These two position pairs actually yield the same result $\widetilde{x}_1 = \widetilde{x}_2$. See Figure~\ref{fig:same_gamma_2} for an example.

    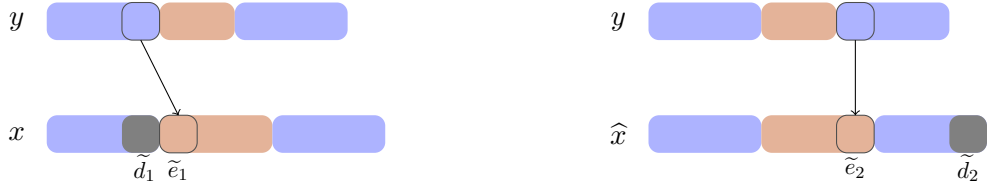
\begin{figure}[h!]
    \centering
    \begin{tikzpicture}
    \draw (-2.4,0.24) node {$y$};
    \draw (-0.7,-1.7) node [scale = 0.8] {$\widetilde{d}_1$};
    \draw (-0.25,-1.73) node [scale = 0.8]{$\widetilde{e}_1$};
    \fill[myblue!50, rounded corners] (-2, -1.5) rectangle (-0.5, -1);
    \fill[myred!50, rounded corners] (-0.5, -1.5) rectangle (1, -1);
    \fill[myblue!50, rounded corners] (1, -1.5) rectangle (2.5, -1);
    
    \fill[black!50, rounded corners] (-1, -1.5) rectangle (-0.5, -1);
    \draw[black!70, rounded corners] (-0.5, -1.5) rectangle (0, -1);
    
    \draw (-2.4,-1.25) node {$x$};
    \fill[myblue!50, rounded corners] (-2, 0) rectangle (-0.5, 0.5);
    \fill[myred!50, rounded corners] (-0.5, 0) rectangle (0.5, 0.5);
    \fill[myblue!50, rounded corners] (0.5, 0) rectangle (2, 0.5);
    \draw[black!70, rounded corners] (-1, 0) rectangle (-0.5, 0.5);
    \draw[->] (-0.75,0) -- (-0.25,-1);
    
    \draw (5.6,0.24) node {$y$};
    \draw (10.25,-1.7) node [scale = 0.8] {$\widetilde{d}_2$};
    \draw (8.75 ,-1.7) node [scale = 0.8]{$\widetilde{e}_2$};
    \fill[myblue!50, rounded corners] (6, -1.5) rectangle (7.5, -1);
    \fill[myred!50, rounded corners] (7.5, -1.5) rectangle (9, -1);
    \fill[myblue!50, rounded corners] (9, -1.5) rectangle (10.5, -1);
    
    \fill[black!50, rounded corners] (10, -1.5) rectangle (10.5, -1);
    \draw[black!70, rounded corners] (8.5, -1.5) rectangle (9, -1);
    \draw (5.6,-1.25) node {$\widehat{x}$};
    \fill[myblue!50, rounded corners] (6, 0) rectangle (7.5, 0.5);
    \fill[myred!50, rounded corners] (7.5, 0) rectangle (8.5, 0.5);
    \fill[myblue!50, rounded corners] (8.5, 0) rectangle (10, 0.5);
    \draw[black!70, rounded corners] (8.5, 0) rectangle (9, 0.5);
    \draw[->] (8.75,0) -- (8.75,-1);
    \end{tikzpicture}
    \caption{An example of $\widetilde{x}_1 = \widetilde{x}_2$, where $\widetilde{x}_1$ is a candidate solution with $\gamma = 1$ while $\widetilde{x}_2$ is a candidate solution with $\gamma = -1$.}
    \label{fig:same_gamma_2}
\end{figure}
Therefore, $f(x)$, $f_1^r(x)$, and $f_2^r(x)$ together with $y$ uniquely determine a valid pair $(\widetilde{d}, \widetilde{e})$, and thus a unique candidate solution $\widetilde{x}=x$.

\paragraph{Deletion reduces the number of runs while the substitution increases the number of runs.}

In this case, we have
\begin{align*}
    & f_1^r(x) - f_1^r(x') = r_d^x + 2(n+1-d),
    & f_1^r(x') - f_1^r(y) = -1 - 2(n-e).
\end{align*}
Therefore,
\begin{equation}
    f_1^r(x) - f_1^r(y) = [r_d^x + 2(n+1-d)]-[1+2(n-e+\delta)].\label{eqn:f1r_unchanged_2}
\end{equation}
We now proceed by case analysis cased on the value of $x_d$ and $x_e$.
\paragraph{If $x_d = x_e = b$:} We first find a valid pair of $\widetilde{d}$ and $\widetilde{e}$ such that $\widetilde{d}$ is as small as possible. When $\widetilde{d}$ makes an elementary move to the right, $\widetilde{e}$ has to move to the left to match $f(x)$. During such moves, 
$r_{\widetilde{d}}^{\widetilde{x}}+2(n+1-d)$ decreases, while $1+2(n-e+\delta)$ is non-decreasing.
By~\eqref{eqn:f1r_unchanged_2}, $f_1^r(\widetilde{x})$ decreases. 
During this process, $f_1^r(x)$ decreases by at most $4n$. By a similar argument as above, $f(x)$, $f_1^r(x)$, together with $y$ uniquely determine a position pair $(\widetilde{d}, \widetilde{e})$, and thus a unique candidate solution $\widetilde{x}$.
    
\paragraph{If $x_d = 1-x_e = d$:} We first find a valid pair $(\widetilde{d}, \widetilde{e})$ such that $\widetilde{d}$ is as small as possible. 
When $\widetilde{d}$ moves to the right, $\widetilde{e}$ has to move to the right to make sure than $f(\widetilde{x}) = f(x)$. 
During such moves, $f_2^r(\widetilde{d})$ increases if $\widetilde{e} > \widetilde{d}$, and decreases if $\widetilde{e} < \widetilde{d}$ by a similar argument in Section~\ref{sec:decrease_four}. 
Therefore, we have at most two solutions, $(\widetilde{d}_1,\widetilde{e}_1)$ and $(\widetilde{d}_2,\widetilde{e}_2)$, where $\widetilde{d}_1 < \widetilde{e}_1$, $\widetilde{d}_2 > \widetilde{e}_2$, such that $f(\widetilde{x}_1) = f(\widetilde{x}_2)=f(x)$, and meanwhile, $f_2^r(\widetilde{x}_1)= f_2^r(\widetilde{x}_2)=f_2^r(x)$.
Since $f_2^r(x)$ increases by at most $4n^2$ when $\widetilde{e} > \widetilde{d}$ and decreases by at most $4n^2$ when $\widetilde{e} < \widetilde{d}$, $f(x)$ and $f_2^r(x)$ together determine at most two possible position pairs $(\widetilde{d}_1,\widetilde{e}_1)$ and $(\widetilde{d}_2,\widetilde{e}_2)$, and thus at most two candidate solutions $\widetilde{x}_1$ and $\widetilde{x}_2$.

\subsection{A linear-time decoder}
\label{sec:delandsub_decoder}

We now proceed to describe our decoding procedure for $\cC$ based on the case analysis described in Section~\ref{sec:correct1del1sub}:

\begin{enumerate}
    \item If $|y| = n$, i.e., there is no deletion, we use the VT sketch $f(x)$ to decode directly.
    \item If $|y| = n-1$, i.e., the error includes one deletion at some index $x$ and one substitution at index $e$, we check $h_r(x)-h_r(y)$.
    By comparing $h(x)$ and $h(y)$ we can also recover the value of $x_d$ and $x_e$ according to Table~\ref{tab:h(x)}.
    \begin{enumerate}
     \item If the number of runs increases by two ($h_r(x)-h_r(y) = -2$), the analysis in Section~\ref{sec:increase_two} shows that we can recover a unique candidate solution $\widetilde{x}=x$ that matches all the sketches simultaneously in linear time.
    
    \item If the number of runs decreases by four ($h_r(x)-h_r(y) = 4$), the analysis in Section~\ref{sec:decrease_four} implies we can recover at most two candidate solutions $\widetilde{x}_1$ and $\widetilde{x}_2$ that match all the sketches simultaneously in linear time, and we are guaranteed that $x\in\{\widetilde{x}_1,\widetilde{x}_2\}$. 
    
    \item
    If the number of runs increases by two ($h_r(x)-h_r(y) = 2$), the analysis in  Section~\ref{sec:decrease_two} implies that we can uniquely recover $x$ in linear time \emph{if} we know which of the errors (deletion or substitution) reduced the number of runs by $2$. This property yields a linear time list-size $2$ decoder as follows. We run the decoder for the two different cases above on $y$.
    We are guaranteed that one of the decoders will behave correctly and output $x$.
    The other decoder may behave arbitrarily, but we know that if it outputs more than one (possibly erroneous) candidate string then it is not the correct decoder, and we can then disregard its output.
    Therefore, in the worst case we obtain a list of size $2$ containing $x$.

    \item 
    If the number of runs does not change ($h_r(x)=h_r(y)$), the analysis in  Section~\ref{sec:decrease_four} implies that we can uniquely recover $x$ in linear time \emph{if} we know whether both errors did not affect the number of runs or whether the deletion reduced the number of runs by $2$ while the substitution increased it by $2$.
    By an analogous argument to the previous item where we run both decoders, this property implies that we can recover a list of size $2$ containing $x$ in linear time.

    \end{enumerate}
\end{enumerate}

Therefore, we can list decode from one deletion and one substitution with a list of size at most two in time $O(n)$.

\subsection{A linear-time encoder}
In Section~\ref{sec:correct1del1sub} we gave a list decoding procedure that corrects one deletion and at most one substitution given knowledge of the VT sketch~\eqref{eq:vt1del1sub}, the run-based sketches~\eqref{eq:run1} and~\eqref{eq:run2}, and the count sketches~\eqref{eq:count1delsub} and~\eqref{eq:count2delsub}.
In this section, we describe a linear-time encoding procedure for a slightly modified version of the code $\cC$ defined in~\eqref{eq:codedelsub} with redundancy $4\log n + O(\log\log n)$ which inherits the same list decoding procedure and properties from Section~\ref{sec:delandsub_decoder}.
This approach is standard and very similar to Section~\ref{sec:linenc}.

Consider an arbitrary input string $x\in\bits^m$ for some fixed message length $m$.
Let $(\overline{\Enc}, \overline{\Dec})$ denote the efficient encoding and decoding procedures of a code for messages of length 
\[\ell = |f(x)\|f_1^r(x)\|f_2^r(x)\|h(x)\|h_r(x)|\]
correcting one deletion and one substitution (here, we represent the sketches via their binary representations).
For example, we may take the code from~\cite{SWWY20}, which has redundancy $6\log \ell + 8 = O(\log \log m)$.
Let 
\[u= \overline{\Enc}(f(x)\|f_1^r(x)\|f_2^r(x)\|h(x)\|h_r(x)).\] 
Then, we take final encoding procedure $\Enc$ to be
\begin{equation*}
    \Enc(x)= x\|u\in\bits^n,
\end{equation*}
which runs in time $O(m)=O(n)$ with overall redundancy $|u|=4\log n+O(\log\log n)$.

We now describe a linear-time decoding procedure.
Suppose that $\Enc(x)$ is corrupted into a string $y$ via at most one deletion and one substitution.
First, note that we can recover $u$ by running $\overline{\Dec}$ on the last $|u|-1$ bits of $y$.
Then, using the linear-time decoding procedure described in Section~\ref{sec:delandsub_decoder}, we can recover a list of size at most two containing $x$ from $u$ (which encodes the necessary sketches) and $y'=y[1:m-1]$.
This yields Theorem~\ref{thm:delandsub}.

\section{Binary codes correcting one deletion or one transposition}\label{sec:deltrans}

We prove Theorem~\ref{thm:deltrans} in this section.
Our starting point is a marker-based segmentation approach considered by Lenz and Polyanskii~\cite{LP20} to correct bursts of deletions.
We then introduce several new ideas.
Roughly speaking, our idea is to partition a string $x\in\bits^n$ into consecutive short substrings $z^x_1,\dots,z^x_\ell$ for some $\ell$ according to the occurrences of a special marker string in $x$.
Then, by carefully embedding hashes of each segment $z^x_i$ into a VT-type sketch and exploiting specific structural properties of deletions and adjacent transpositions, we are able to determine a short interval containing the position where the error occurred.
Once this is done, a standard technique allows us to recover the true position of the error by slightly increasing the redundancy.

\subsection{Code construction}

We now describe the code construction in detail.
For a given integer $n>0$, let $\Delta=50+1000\log n$ and $m=1000\Delta^2=O(\log^2 n)$.
For the sake of readability,
we have made no efforts to optimize constants, and assume $n$ is a power of two to avoid using ceilings and floors.
Given a string $x\in\bits^n$, we divide it into substrings split according to occurrences of the marker $0011$.
To avoid edge cases, assume that $x$ ends in $0011$ -- this will only add $4$ bits to the overall redundancy.
Then, this marker-based segmentation induces a vector
\begin{equation*}
    z^x = (z^x_1,\dots,z^x_{\ell_x}),
\end{equation*}
where $1\leq \ell_x\leq n$, and each string $z^x_i$ has length at least $4$, ends with $0011$, and $0011$ only occurs once in each such string.
We may assume that $|z^x_i|\leq \Delta$ for all $i$. This will only add $1$ bit to the overall redundancy, as captured in the following simple lemma.
\begin{lem}\label{lem:shortsegment}
    Suppose $X$ is uniformly random over $\bits^n$.
    Then, we have
    \begin{equation*}
        \Pr[|z^X_i|\leq \Delta, i=1,\dots,\ell_X]\geq \frac{1}{2}.
    \end{equation*}
\end{lem}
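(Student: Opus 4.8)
The plan is to prove the complementary bound $\Pr[\exists\, i : |z^X_i| > \Delta] \leq 1/2$ via a union bound over ``marker-free'' windows. The key structural observation is that a long segment forces a long window of $X$ in which the marker $0011$ never occurs. Indeed, each segment $z^X_i$ ends with $0011$ and contains no other occurrence of the marker, so if some segment has length strictly greater than $\Delta$ and begins at position $s$ in $X$, then the length-$\Delta$ window $X[s:s+\Delta-1]$ is a proper prefix of that segment and hence contains no occurrence of $0011$. Since the segment terminates at a marker position at most $n$ and has length exceeding $\Delta$, we have $s+\Delta-1 < n$, so this window genuinely fits inside $X$. Thus it suffices to bound the probability that $X$ has some length-$\Delta$ window free of the pattern $0011$.

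Concretely, I would introduce for each start $s\in\{1,\dots,n-\Delta+1\}$ the event $E_s$ that $X[s:s+\Delta-1]$ contains no occurrence of $0011$, and record via the observation above the inclusion $\{\exists\, i : |z^X_i| > \Delta\} \subseteq \bigcup_s E_s$. I would then bound a single $\Pr[E_s]$ by a disjointness trick: partition the window $X[s:s+\Delta-1]$ into $\lfloor \Delta/4\rfloor$ consecutive \emph{disjoint} blocks of length $4$. Because the blocks are disjoint and $X$ is uniform, they are mutually independent, and each equals $0011$ with probability exactly $1/16$. If $0011$ is absent from the whole window then in particular no block equals $0011$, so $E_s$ is contained in the event that all blocks differ from $0011$, giving $\Pr[E_s] \leq (15/16)^{\lfloor \Delta/4\rfloor}$.

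Finally, a union bound yields $\Pr[\exists\, i : |z^X_i| > \Delta] \leq n\cdot (15/16)^{\lfloor \Delta/4\rfloor}$. Plugging in $\Delta = 50 + 1000\log n$ gives $\lfloor \Delta/4\rfloor \geq 250\log n$, and since $\log_2(15/16) < -1/12$ we obtain $(15/16)^{250\log n} = 2^{250\log n\cdot \log_2(15/16)} \leq n^{-250/12} \leq n^{-20}$, so the whole bound is at most $n^{-19} \leq 1/2$. Taking complements finishes the proof.

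I expect no serious obstacle here: the only steps requiring care are the structural claim that a length-$(>\Delta)$ segment produces a genuinely-interior marker-free window, and the bookkeeping confirming that the large slack built into $\Delta$ is more than enough. The disjoint-block bound deliberately discards marker occurrences straddling block boundaries, which only over-counts $E_s$ in our favor, so the generous choice of constants makes the final numerical check routine.
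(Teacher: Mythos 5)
Your proof is correct and follows essentially the same approach as the paper: bound the probability that a length-$\Delta$ stretch avoids $0011$ by $(15/16)^{\Theta(\Delta)}$ using disjoint, independent length-$4$ blocks, then union bound over at most $n$ events. The only difference is that you union bound over deterministic window start positions rather than over segment indices, which if anything makes the reduction slightly cleaner since it avoids reasoning directly about the randomly located segments.
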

\begin{proof}
Since the probability that a fixed length-$4$ substring of $X$ equals $0011$ is $1/16$, it follows that the probability that $|z^X_i|>\Delta$ for any fixed $i$ is at most
\begin{equation*}
    \left(\frac{15}{16}\right)^{\Delta/4-1}\leq \frac{1}{2n^3}.
\end{equation*}
A union bound over all $1\leq i\leq n$ yields the desired statement.
\end{proof}

Our goal now will be to impose constraints on $z^x$ so that (i) We only introduce $\log n+O(\log\log n)$ bits of redundancy, and (ii) If $x$ is corrupted by a deletion or transposition in $z^x_i$, we can then locate a window $W\subseteq[n]$ of size $|W|=O(\log^4 n)$ such that $z^x_i\subseteq W$.
This will then allows us to correct the error later on by adding $O(\log\log n)$ bits of redundancy.

Since each $z^x_i$ has length at most $\Delta=O(\log n)$, we will exploit the fact that there exists a hash function $h$ with short output that allows us to correct a deletion, substitution, or transposition in all strings of length at most $3\Delta$.
This is guaranteed by the following lemma.
\begin{lem}\label{lem:hash}
    There exists a hash function $h:\bits^{\leq 3\Delta}\to [m]$ with the following property: \emph{If $z'$ is obtained from $z$ by at most two transpositions, two substitutions, or at most a deletion and an insertion, then $h(z)\neq h(z')$.}
\end{lem}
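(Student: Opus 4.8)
The statement asks for a hash $h:\bits^{\leq 3\Delta}\to[m]$ with $m=1000\Delta^2=O(\log^2 n)$ such that any two distinct strings $z,z'$ of length at most $3\Delta$ that are ``two-error close'' (related by at most two transpositions, two substitutions, or a deletion plus an insertion) receive distinct hash values. The plan is to prove existence by a counting/probabilistic argument rather than by exhibiting an explicit hash, since $m$ is polynomially large in $\Delta$ and the classes of close strings are small.

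\emph{Step 1: Bound the number of ``dangerous'' pairs.} First I would count, for a fixed string $z\in\bits^{\leq 3\Delta}$, how many strings $z'$ are reachable from $z$ under the allowed error patterns. Each allowed operation (transposition, substitution, deletion+insertion) is localized: a single transposition has at most $3\Delta$ choices of location, a single substitution likewise at most $3\Delta$, and a deletion followed by an insertion has at most $(3\Delta)\cdot(3\Delta)\cdot 2 = O(\Delta^2)$ choices. Composing at most two such operations yields at most $O(\Delta^4)$ reachable strings $z'$. Hence each $z$ participates in at most $O(\Delta^4)$ dangerous pairs, and the total number of unordered dangerous pairs over all $z$ of length $\le 3\Delta$ is at most $2^{3\Delta+1}\cdot O(\Delta^4)$.

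\emph{Step 2: Probabilistic construction.} Next I would take $h$ to be a random function drawn from a suitable family --- concretely, define $h(z) = g(z)\bmod p$ where $g$ is, say, an integer-valued VT-type or polynomial fingerprint of $z$ and $p$ is a random prime in the range $[m/2,m]$; alternatively one can simply take $h$ uniformly random in $[m]$ and apply the Lov\'asz-style first-moment bound. For a single fixed dangerous pair $(z,z')$, the collision probability $\Pr[h(z)=h(z')]$ is $O(1/m)=O(1/\Delta^2)$. By a union bound over all $O(2^{3\Delta}\Delta^4)$ dangerous pairs, the expected number of colliding dangerous pairs is $O(2^{3\Delta}\Delta^4/\Delta^2) = O(2^{3\Delta}\Delta^2)$, which does \emph{not} immediately go below $1$ with the naive uniform construction --- this reveals that $m=O(\Delta^2)$ is too small for a purely uniform random hash into $[m]$.

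\emph{Main obstacle and fix.} This is the crux: the target range $m=O(\Delta^2)=O(\log^2 n)$ is far too small to avoid all collisions among $2^{\Theta(\Delta)}$ strings by brute force, so the argument cannot be a plain union bound over all pairs. The resolution must exploit structure: the differences $g(z)-g(z')$ for dangerous pairs take values in a \emph{small bounded range} (of magnitude $\poly(\Delta)$, not $2^{\Delta}$), because each allowed error perturbs a position-weighted fingerprint by a bounded amount. The right approach is therefore to design $g$ so that for every dangerous pair the difference $g(z)-g(z')$ is a nonzero integer of absolute value at most some $M=\poly(\Delta)$, and then choose the modulus $p$ (or the parameters of a VT-type sketch) so that no such nonzero difference vanishes modulo $p$. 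Since a nonzero integer of magnitude at most $M$ has at most $\log M = O(\log\Delta)$ distinct prime factors, a random prime $p$ in a range of size $\Theta(\Delta^2/\log\Delta)$ avoids dividing any of the $O(\Delta^4)$ differences associated to a single base string; combining a bounded-difference fingerprint with such a modulus gives the claim with $m=O(\Delta^2)$. Thus the real work is Step~1 refined: showing that a carefully chosen fingerprint (tracking both symbol values and positions, robust to the specific two-error patterns) produces differences that are simultaneously \emph{nonzero} and \emph{polynomially bounded} on every dangerous pair. I expect verifying nonzero-ness of the difference across all enumerated two-error compositions --- especially the deletion+insertion case, where positions shift --- to be the delicate part of the argument.
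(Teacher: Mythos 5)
There is a genuine gap: your plan never actually delivers the object it depends on. Everything hinges on ``the real work'' of Step~1 refined --- constructing a fingerprint $g$ whose differences $g(z)-g(z')$ are simultaneously nonzero and $\poly(\Delta)$-bounded across every two-error pattern --- and you explicitly defer this without doing it. That task is essentially as hard as building an explicit VT-type sketch correcting two transpositions, two substitutions, and a deletion-plus-insertion at once, which is a harder problem than the lemma itself (and note the chosen modulus must avoid all differences arising from \emph{all} $2^{3\Delta}$ base strings simultaneously, not just the $O(\Delta^2)$ differences of a single $z$, so the per-string prime-counting step does not close the argument either). A secondary issue: your count of $O(\Delta^4)$ reachable strings overcounts, since the lemma allows at most two transpositions, \emph{or} two substitutions, \emph{or} one deletion and one insertion --- not a composition of two deletion-plus-insertion operations --- so the correct bound is $|A(z)|=O(\Delta^2)$, and with the constants in the paper one checks $|A(z)|<m=1000\Delta^2$.

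The observation you are missing is that the problem is local, not global: the lemma only requires $h(z)\neq h(z')$ when $z'$ lies in the small neighborhood $A(z)$, and it does not require $h$ to be explicit or efficiently computable. Since the conflict relation is symmetric and each string has fewer than $m$ neighbors, the conflict graph on $\bits^{\leq 3\Delta}$ has maximum degree less than $m$ and can be greedily colored with $m$ colors: process the strings in any order and assign $h(z)$ any value in $[m]$ not already used by a previously processed element of $A(z)$. This two-line greedy argument is the paper's entire proof. Your instinct that a uniformly random $h$ into $[m]$ fails under a union bound over all pairs is correct, but the right response is to choose $h(z)$ adaptively per string rather than to upgrade to a structured fingerprint.
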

\begin{proof}
We can construct such a hash function $h$ greedily.
Let $A(z)$ denote the set of such strings obtained from $z\in\bits^{\leq 3\Delta}$.
Since $|A(z)|<m$, we can set $h(z)$ so that $h(z)\neq h(z')$ for all $z'\in A(z)\setminus\{z\}$.
\end{proof}

With the intuition above and the hash function $h$ guaranteed by Lemma~\ref{lem:hash} in mind, we consider the VT-type sketch
\begin{equation*}
    f(x) = \sum_{j=1}^{\ell_x} j(|z^x_j|\cdot m+h(z^x_j)) \mod (L=10n\cdot \Delta\cdot m +1)
\end{equation*}
along with the count sketches
\begin{align*}
    &g_1(x)=\ell_x\mod 5,\\
    &g_2(x)=\sum_{i=1}^n \overline{x}_i\mod 3,
\end{align*}
where $\overline{x}_i=\sum_{j=1}^i x_j\mod 2$.
At a high level, the sketch $f(x)$ is the main tool we use to approximately locate the error in $x$.
The count sketches $g_1(x)$ and $g_2(x)$ are added to allow us to detect how many markers are created or destroyed by the error, and to distinguish between the cases where there is no error or a transposition occurs.

With the above in mind, we define the preliminary code
\begin{equation*}
    \cC'=\left\{x\in\bits^n | (x_{n-3},\dots,x_n)=(0,0,1,1),f(x)=s_0,g_1(x)=s_1,g_2(x)=s_2,\forall i\in[\ell_x]: |z^x_i|\leq \Delta \right\}
\end{equation*}
for appropriate choices of $s_0,s_1,s_2$.
Taking into account all constraints, the choice of $\Delta$ and $m$, and Lemma~\ref{lem:shortsegment}, the pigeonhole principle implies that we can choose $s_0,s_1,s_2$ so that this code has at most
\begin{equation}\label{eq:redprelim}
    4+\log(10n\cdot \Delta\cdot m+1)+1+2+2+1=\log n + O(\log\log n)
\end{equation}
bits of redundancy.

However, it turns out that the constraints imposed in $\cC'$ are not enough to handle a deletion or a transposition.
Intuitively, the reason for this is that, in order to make use of the sketch $f(x)$ when decoding, we will need additional information both about the hashes of the segments of $x$ that were affected by the error and the hashes of the corresponding corrupted segments in the corrupted string $y$.
Therefore, given a vector $z^x$ and the hash function $h$ guaranteed by Lemma~\ref{lem:hash}, we will be interested in the associated \emph{hash multiset}
\begin{equation*}
    H_x=\{\{h(z^x_1),\dots,h(z^x_{\ell_x})\}\}
\end{equation*}
over $[m]$.
As we shall see, a deletion or transposition will change this multiset by at most $4$ elements.
Therefore, we will expurgate $\cC'$ so that any pair of remaining codewords $x$ and $x'$ satisfy either $H_x=H_{x'}$ or $|H_x\triangle H_{x'}|\geq 10$.
This will allow us to recover the true hash multiset of $x$ from the hash multiset of the corrupted string.
The following lemma shows that this expurgation adds only an extra $O(\log m)=O(\log\log n)$ bits of redundancy.
\begin{lem}\label{lem:multiset}
    There exists a code $\cC\subseteq \cC'$ of size
    \begin{equation*}
        |\cC|\geq \frac{|\cC'|}{m^{10}}
    \end{equation*}
    such that for any $x,x'\in\cC$ we either have $H_x=H_{x'}$ or $|H_x\triangle H_{x'}|\geq 10$.
\end{lem}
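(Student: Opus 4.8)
The plan is to obtain $\cC$ from $\cC'$ by a pigeonhole expurgation: I will design a \emph{sketch} $\sigma$ of the hash multiset $H_x$ that takes at most $m^{10}$ distinct values and separates any two multisets whose symmetric difference has size between $1$ and $9$. Concretely, if $\sigma$ has this separating property, then grouping the codewords of $\cC'$ according to the value of $\sigma(H_x)$ and keeping the most populous group yields, by pigeonhole, a subcode $\cC$ with $|\cC|\ge |\cC'|/m^{10}$ in which any two codewords $x,x'$ satisfy $\sigma(H_x)=\sigma(H_{x'})$, and hence either $H_x=H_{x'}$ or $|H_x\triangle H_{x'}|\ge 10$, exactly as required.

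To build $\sigma$, I would use power sums modulo a prime. Fix a prime $q$ with $m<q\le 2m$ (which exists by Bertrand's postulate), and for a multiset $H$ over $[m]$ define the power sums $p_k(H)=\sum_{a\in H} a^k \bmod q$ for $k=0,1,\dots,8$, setting $\sigma(H)=(p_0(H),\dots,p_8(H))\in\Z_q^{9}$. Writing each multiset through its multiplicity vector $c\in\Z_{\ge 0}^m$ (so $c_j$ counts occurrences of $j\in[m]$), we have $p_k(H)=\sum_{j} c_j\, j^k \bmod q$ and, crucially, $|H_x\triangle H_{x'}|=\sum_j |c_j-c'_j|=\|d\|_1$, where $d=c-c'$ is the difference of the multiplicity vectors of $H_x$ and $H_{x'}$.

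The separating property then follows from Vandermonde invertibility. Suppose $\sigma(H_x)=\sigma(H_{x'})$ and $1\le \|d\|_1\le 9$; let $S=\{j:d_j\neq 0\}$, so $s:=|S|\le 9$. Agreement of the sketches gives $\sum_{j\in S} d_j\, j^k\equiv 0 \pmod q$ for $k=0,\dots,s-1$, i.e.\ $V\,d_S\equiv 0$, where $V=(j^k)_{0\le k\le s-1,\,j\in S}$ is an $s\times s$ Vandermonde matrix over $\Z_q$ and $d_S$ is the subvector of nonzero entries. Since $q>m$, the nodes in $S\subseteq\{0,\dots,m-1\}$ are distinct in $\Z_q$, so $V$ is invertible and $d_j\equiv 0\pmod q$ for all $j\in S$; but $|d_j|\le \|d\|_1\le 9<q$ forces $d_j=0$, contradicting $j\in S$. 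Including the index $k=0$ is essential, since otherwise a single-element change at the symbol $0\in[m]$ (where all positive powers vanish) would go undetected. Finally, $\sigma$ takes at most $q^{9}\le (2m)^{9}=512\,m^{9}\le m^{10}$ values, using $m=1000\Delta^2\ge 512$, which gives the claimed size bound. The argument is essentially a syndrome computation, so the only real subtlety is calibrating the number of power sums ($9$, to cover supports up to size $9$) against the modulus ($q>m$ and $q>9$) so that both the separation and the counting bound $m^{10}$ hold simultaneously.
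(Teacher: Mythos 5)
Your proof is correct, but it takes a genuinely different route from the paper. The paper performs a greedy expurgation: it orders the hash multisets by the number of codewords of $\cC'$ mapping to them, repeatedly keeps the most popular surviving multiset $S$, and deletes all codewords whose multiset $S'$ satisfies $S'\neq S$ and $|S\triangle S'|<10$; since each $S$ has at most $m^{10}$ such conflicting neighbours, each with no larger codeword count, at least a $m^{-10}$ fraction survives. You instead construct an explicit separating invariant: power-sum syndromes $p_k(H)=\sum_{a\in H}a^k \bmod q$ for $k=0,\dots,8$ with a prime $m<q\le 2m$, show via Vandermonde invertibility (correctly including $k=0$ to handle the symbol $0$, and using $\|d\|_\infty\le\|d\|_1\le 9<q$ to lift from $\Z_q$ to $\Z$) that two multisets with equal sketch and nonzero symmetric difference must differ in at least $10$ elements, and then keep the largest sketch class, of size at least $|\cC'|/q^9\ge|\cC'|/m^{10}$. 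Both arguments are sound and yield the same bound. The paper's approach is more elementary and needs no algebra; yours has the advantage that the retained subcode is cut out by an explicitly computable hash of $H_x$ (a fixed syndrome value) rather than by an iterative global process, which is structurally cleaner and potentially more amenable to efficient encoding, at the cost of the (mild) number-theoretic setup.
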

\begin{proof}
Let $\cS$ be the family of multisets over $[m]$ with at most $n$ elements.
Order the multisets $S$ in $\cS$ in decreasing order according to the number $N(S)$ of codewords $x\in\cC'$ such that $H_{x'}=S$.
The expurgation procedure works iteratively by considering the surviving multiset $S$ with the largest $N(S)$, removing all codewords $x\in\cC'$ associated to $S'\in\cS$ such that $S'\neq S$ and $|S\triangle S'|<10$, and updating the values $N(S)$ for $S\in \cS$.
Since there are at most $m^{10}$ multisets $S'$ satisfying the conditions above and $N(S)\geq N(S')$ for all such $S'$, we are guaranteed to keep at least a $\frac{1}{m^{10}}$-fraction of every subset of codewords considered in each round of expurgation.
This implies the desired result.
\end{proof}

We will take our \emph{error-locating} code to be the expurgated code $\cC$ guaranteed by Lemma~\ref{lem:multiset}.
By~\eqref{eq:redprelim} and the choice of $m$, it follows that there exists a choice of $s_0$ and $s_1$ such that $\cC$ has $\log n+O(\log\log n)$ bits of redundancy.
We prove the following result in Section~\ref{sec:errorlocation}, which states that, given a corrupted version of $x\in\cC$, we can identify a small interval containing the position where the error occurred.
\begin{thm}\label{thm:locate}
If $x\in\cC$ is corrupted into $y$ via one deletion or transposition, we can recover from $y$ a window $W\subseteq [n]$ of size $|W|\leq 10^{10}\log^4 n$ that contains the position where the error occurred (in the case of a transposition, we take the error location to be the smallest of the two affected indices).
\end{thm}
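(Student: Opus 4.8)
The plan is to first classify the error and its coarse effect on the marker segmentation using the count sketches, then recover the hashes of the affected segments via the multiset expurgation, and finally pin down the location of the affected block using the VT-type sketch $f$ through a monotonicity argument.

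\textbf{Classifying the error.} First I would read off $|y|$: if $|y|=n-1$ a deletion occurred, while if $|y|=n$ then either no error occurred or a transposition did. In the latter case $g_2$ resolves the ambiguity, since a transposition at position $k$ (with $x_k\neq x_{k+1}$) flips exactly the single prefix parity $\overline{x}_k$ and leaves all others intact, changing $g_2$ by $\pm 1\not\equiv 0\pmod 3$, whereas no error changes nothing. The key structural observation is that a single edit acts locally on the markers: since two occurrences of $0011$ must be at distance at least $4$, the edit lies inside at most two markers, so it destroys at most two and creates at most two, and therefore it modifies only a bounded block of consecutive segments, leaving every other segment unchanged apart from an index shift. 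In particular $s:=\ell_x-\ell_y\in\{-2,-1,0,1,2\}$, and $g_1(x)=\ell_x\bmod 5$ together with the observable $\ell_y$ recovers $s$ exactly. Since the merged blocks have length at most $3\Delta$, this is also where the choice of hash domain $\bits^{\le 3\Delta}$ in Lemma~\ref{lem:hash} is used.

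\textbf{Recovering the local picture.} Because the edit touches only a bounded block of segments, $|H_x\triangle H_y|\le 4$. By Lemma~\ref{lem:multiset} any two codewords have hash multisets that either coincide or are at distance at least $10$; hence if some codeword has hash multiset $H'$ with $|H'\triangle H_y|\le 4$, then $|H_x\triangle H'|\le 8<10$ forces $H'=H_x$. This recovers $H_x$, and with it the multiset $H_x\setminus H_y$ of hashes of the destroyed $x$-segments and the multiset $H_y\setminus H_x$ of hashes of the created $y$-segments. Combining these with length conservation (a transposition preserves total length and a deletion removes one bit) lets me compute the sums of the ``super-symbols'' $c^x_j:=|z^x_j|\,m+h(z^x_j)$ over the affected segments, even without yet knowing their position.

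\textbf{Locating the block.} Writing $T_y(i)=\sum_{j>i}c^y_j$ and using that all segments outside the affected block are common to $x$ and $y$ up to the shift $s$, I would expand $f(x)-f(y)=D\cdot i+s\,T_y(i)+R$, where $i$ indexes the affected block, $D$ is the coefficient determined in the previous step (so known and independent of $i$), $T_y(i)$ is computable from $y$, and $R$ is an additive residual coming from the unknown internal split of the affected block, bounded by $O(\Delta m)=O(\log^3 n)$. Since the whole expression is at most $O(n\Delta m)<L$ as an integer, $f(x)-f(y)$ determines the right-hand side exactly over $\Z$, not merely modulo $L$. The decisive point is monotonicity: as $i$ increases by one the tail term changes by $-s\,c^y_{i+1}$, of magnitude at least $4m$ whenever $s\neq 0$ because every segment satisfies $c^y_{i+1}\ge 4m$, which dominates the bounded change $|D|=O(m)$ of the linear term; hence $\psi(i):=D i+s\,T_y(i)$ is strictly monotone with steps of magnitude $\ge m$. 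The bounded residual $R$ therefore confines $i$ to $O(\log^3 n/m)=O(\log n)$ consecutive candidate blocks, and when $s=0$ with a single changed segment the difference is exactly $i(c^x_i-c^y_i)$ with nonzero slope (nonzero since a deletion changes the encoded length and a transposition changes the hash by Lemma~\ref{lem:hash}), pinning $i$ uniquely. Taking $W$ to be the union of these $O(\log n)$ candidate blocks, each of length at most $3\Delta$, mapped into $x$-coordinates by summing the lengths of the shared left segments, yields a window of polylogarithmic size, comfortably within $10^{10}\log^4 n$.

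\textbf{Main obstacle.} The crux is the uniform case analysis behind the last step: for each combination of error type and number of destroyed/created markers consistent with $s$, I must verify that $D$ and $R$ are correctly read off from $H_x\triangle H_y$ and the length bookkeeping, and that $\psi$ is genuinely monotone, including the degenerate configurations in which the linear coefficient $D$ is small or the affected block straddles several runs. The domination $c^y_j\ge 4m$ is what drives monotonicity in the generic cases, but controlling the residual $R$ and the degenerate low-slope cases is what forces a polylogarithmic rather than $O(\log n)$ window; the generous bound $10^{10}\log^4 n$ is chosen precisely to absorb these $O(\log n)$ candidate blocks together with all the unoptimized slack in the estimates.
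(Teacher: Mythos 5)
Your proposal is correct and follows essentially the same route as the paper: classify the error and the marker-count change via $|y|$, $g_1$, $g_2$; recover $H_x$ from $H_y$ using the expurgation distance $\geq 10$; and locate the affected block by scanning a potential function whose tail term moves in steps of magnitude $\geq m$ (your unified expression $Di+sT_y(i)+R$ is exactly the paper's case-by-case potentials $\Phi(i')$). The case analysis you defer to the end is carried out in the paper as Lemmas~\ref{lem:casesdel} and~\ref{lem:casestrans} together with the enumeration in Sections~\ref{sec:del} and~\ref{sec:trans}, and it confirms the bounds you assume ($|D|\leq 3m$, tail steps of size at least $4|s|m$, residual $O(\Delta\cdot m)$, and a single changed segment in the $s=0$ case).
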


\subsection{Error correction from approximate error location}\label{sec:correctfromapprox}

In this section, we argue how we can leverage Theorem~\ref{thm:locate} to correct one deletion or one transposition by adding $O(\log\log n)$ bits of redundancy to $\cC$, thus proving Theorem~\ref{thm:deltrans}.

Let $L=10^{10}\log^4 n$.
We partition $[n]$ into consecutive disjoint intervals $B^{(1)}_1,B^{(1)}_2,\dots,B^{(1)}_t$ of length $2L+1$.
Moreover, we define a family of shifted intervals $B^{(2)}_1,\dots,B^{(2)}_{t-1}$ where $B^{(2)}_i=[a+L,b+L]$ if $B^{(1)}_i=[a,b]$.
For a given string $x\in\bits^n$, let $x^{(1,i)}$ denote its substring corresponding to $B^{(1)}_i$ and $x^{(2,i)}$ its substring corresponding to $B^{(2)}_i$.

The key property of these families of intervals we exploit is the fact that the window $W$ of length at most $L$ guaranteed by Theorem~\ref{thm:locate} satisfies either $W\subseteq B^{(1)}_i$ or $W\subseteq B^{(2)}_i$ for some $i$.
As a result, we are able to recover $x^{(1,j)}$ (resp.\ $x^{(2,j)}$) for all $j\neq i$ from $y$ if $W\subseteq B^{(1)}_i$ (resp.\ $W\subseteq B^{(2)}_i$).
Moreover, we can also recover a string $y^{(i)}$ that is obtained from $x^{(1,i)}$ or $x^{(2,i)}$ via at most one deletion or one transposition.
Therefore, it suffices to reveal an additional sketch which allows us to correct a deletion or a transposition in strings of length $2L+1=O(\log^4 n)$ for each interval.
Crucially, since we can already correctly recover all bits of $x$ except for those in the corrupted interval, we may XOR all these sketches together and only pay the price of one such sketch.
We proceed to discuss this more concretely.

Suppose $\widehat{f}:\bits^{2L+1}\to \bits^\ell$ is a sketch with the following property: \emph{If $z\in\bits^{2L+1}$ is transformed into $y$ via at most one deletion or one transposition, then knowledge of $y$ and $\widehat{f}(z)$ is sufficient to recover $z$ uniquely.}
It is easy to construct such a sketch with $\ell=O(\log L)=O(\log\log n)$~\cite{GYM18}.
For completeness, we provide an instantiation in Appendix~\ref{app:sketchdeltrans}.
Armed with $\widehat{f}$, we define the full sketches
\begin{equation*}
    \widehat{g}_1(x)=\bigoplus_{i=1}^t \widehat{f}(x^{(1,i)})
\end{equation*}
and
\begin{equation*}
    \widehat{g}_2(x)=\bigoplus_{i=1}^{t-1} \widehat{f}(x^{(2,i)})
\end{equation*}
Note that $\widehat{g}_b(x)$ has length $\ell=O(\log\log n)$ for $b\in\bits$.
Then, we take our final code to be
\begin{equation*}
    \widehat{\cC}=\{x\in\cC:\widehat{g}_1(x)=s_3,\widehat{g}_2(x)=s_4\}
\end{equation*}
which has redundancy $\log n+O(\log\log n)$ for some choice of $s_3$ and $s_4$.
To see that $\widehat{\cC}$ indeed corrects one deletion or one transposition, note that, by the discussion above, if the window $W$ guaranteed by Theorem~\ref{thm:locate} satisfies $W\subseteq B^{(1)}_i$, then we can recover $\widehat{f}(x^{(1,i)})$ from $\widehat{g}_1(x)$ and $y$, along with a string $y^{(i)}$ obtained from $x^{(1,i)}$ by at most one deletion or one transposition.
Then, the properties of $\widehat{f}$ ensure that we can uniquely recover $x^{(1,i)}$ from $y^{(i)}$ and the sketch $\widehat{f}(x^{(1,i)})$.
The reasoning for when $W\subseteq B^{(2)}_i$ is analogous.
This yields Theorem~\ref{thm:deltrans}.

\subsection{Proof of Theorem~\ref{thm:locate}}\label{sec:errorlocation}

We prove Theorem~\ref{thm:locate} in this section, which concludes our argument.
Fix $x\in\cC$ and suppose $y$ is obtained from $x$ via one deletion or one transposition.
We consider several independent cases based on the fact that a marker cannot overlap with itself, that we can identify whether a deletion occurred by computing $|y|$, and that we can identify whether a transposition occurred by comparing $g_2(x)$ and $g_2(y)$.

\subsubsection{Locating one deletion}\label{sec:del}

In this section, we show how we can localize one deletion appropriately.
Fix $x\in\cC$ and suppose that a deletion is applied to $z^x_i$.
The following lemma holds due to the marker structure.
\begin{lem}\label{lem:casesdel}
    A deletion either (i) Creates a new marker and does not delete any existing markers, in which case $\ell_y=\ell_x+1$, (ii) Deletes an existing marker and does not create any new markers, in which case $\ell_y=\ell_x-1$, or (iii) Neither deletes existing markers nor creates new markers, in which case $\ell_y=\ell_x$.
\end{lem}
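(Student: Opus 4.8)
The plan is to work directly with occurrences of the marker string $0011$ rather than with the segmentation itself. By construction $\ell_x$ equals the number of occurrences of $0011$ in $x$ (each segment ends in exactly one marker, $0011$ occurs only once per segment, and $x$ ends in $0011$), and the same holds for $\ell_y$. The crucial structural input, already noted in the setup, is that a marker cannot overlap with itself: no proper suffix of $0011$ equals a proper prefix of it. The first thing I would record is the consequence that in \emph{any} binary string the occurrences of $0011$ are pairwise disjoint as index sets.

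Next I would fix the deleted position $p$ and split the occurrences of $0011$ in $x$ into those whose window contains $p$ and those avoiding $p$. By disjointness there is \emph{at most one} occurrence containing $p$, and this is the only existing marker the deletion can destroy. For the occurrences avoiding $p$ I would set up the length-preserving correspondence with $y$: windows lying entirely to the left of $p$ are copied verbatim into $y$, while windows lying entirely to the right of $p$ are shifted left by one but keep the same bits. This gives a bijection between the occurrences of $0011$ in $x$ that avoid $p$ and the occurrences of $0011$ in $y$ whose window does not \emph{straddle} the junction created by the deletion. Hence every occurrence in $y$ is either matched by this bijection or straddles the junction, i.e.\ uses bits from both sides of the deleted position.

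The main step, which I expect to be the crux, is to bound the number of straddling occurrences in $y$ by one. A straddling length-$4$ window must begin at one of the three indices $\{p-3,p-2,p-1\}$ of $y$, so any two distinct straddling windows share at least one position. If two of them both equalled $0011$, we would obtain two overlapping occurrences of $0011$ in $y$, contradicting the non-self-overlapping property. Therefore at most one new marker is created, and the somewhat delicate part of the argument is precisely this appeal to non-overlap to rule out simultaneous creation of two markers.

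Finally I would combine the counts. Writing $|D|\in\{0,1\}$ for the number of destroyed markers (occurrences of $0011$ in $x$ containing $p$) and $|S|\in\{0,1\}$ for the number of newly created straddling markers, the bijection yields $\ell_y=(\ell_x-|D|)+|S|$, so $\ell_y-\ell_x=|S|-|D|\in\{-1,0,+1\}$. These three values give exactly the three cases of the statement: $|S|=1,|D|=0$ is case (i) with $\ell_y=\ell_x+1$; $|S|=0,|D|=1$ is case (ii) with $\ell_y=\ell_x-1$; and $|S|=|D|$ is case (iii) with $\ell_y=\ell_x$, where either no marker is affected or a single marker is destroyed while a shifted copy is simultaneously recreated, leaving the marker count unchanged.
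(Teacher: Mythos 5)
Your argument is correct, but it proceeds quite differently from the paper's proof. The paper normalizes the deletion position (WLOG the first bit of a $0$-run or the last bit of a $1$-run) and then does a three-way case analysis on the length of the run containing the deleted bit: length $\geq 3$ means no marker is created or destroyed, length $2$ means a marker may be destroyed but none created, and length $1$ means a marker may be created but none destroyed. You instead count occurrences of $0011$ as index windows, use the non-self-overlap of the marker to get pairwise disjointness, and conclude that at most one occurrence of $x$ contains the deleted position while at most one straddling occurrence can appear in $y$; this yields $\ell_y-\ell_x=|S|-|D|\in\{-1,0,+1\}$ cleanly and arguably more transparently than the run-length casework. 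The one point you should tighten is the $|D|=|S|=1$ situation, which does not literally match any of the three labels in the statement: you assert that the created straddling marker is "a shifted copy" of the destroyed one, and this is true but deserves a short verification (checking the four possible deleted positions inside the window $0011$ shows a straddling occurrence can only arise at offset $\pm 1$, when the marker's $0$-run or $1$-run has length $\geq 3$ in $x$, so the segment boundary merely shifts by one). That check is exactly what the paper's WLOG normalization buys for free — under its convention the deletion is moved outside the marker window and this case is reclassified as "neither creates nor destroys" — and it also matters downstream, since the paper's case $\ell_y=\ell_x$ assumes $z^y$ differs from $z^x$ only in that one segment lost a bit.
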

\begin{proof}
Without loss of generality, we may assume that the deletion is applied to the first bit of a $0$-run or to the last bit of a $1$-run in $x\in\cC$.
The desired result is implied by the following three observations:
First, if the deletion is applied to a run of length at least $3$, then no marker is created nor destroyed.
Second, if the deletion is applied to a run of length $2$, then a marker may be destroyed, but no marker is created.
Finally, if the deletion is applied to a run of length $1$, then a marker may be created, but no marker is destroyed.
\end{proof}

Note that we can distinguish between the cases detailed in Lemma~\ref{lem:casesdel} by comparing $g_1(x)$ and $g_1(y)$.
Thus, we analyze each case separately:
\begin{enumerate}
    \item $\ell_y=\ell_x$: In this case, we have
    \begin{equation}\label{eq:zstruct1}
        z^y=(z^x_1,\dots,z^x_{i-1},z'_i,z^x_{i+1},\dots,z^x_{\ell_x}),
    \end{equation}
    where $z'_i$ is obtained from $z^x_i$ by a deletion (in particular, $|z'_i|=|z^x_i|-1$).
    Therefore, it holds that
    \begin{align*}
        f(x)-f(y)&=\sum_{j=1}^{\ell_x} j(|z^x_j|\cdot m+h(z^x_j))-\sum_{j=1}^{\ell_y} j(|z^y_j|\cdot m+h(z^y_j)) \mod L \\
        &=i(|z^x_i|\cdot m + h(z^x_i)-|z'_i|\cdot m-h(z'_i)) \\
        &=i(m+h(z^x_i)-h(z'_i)),
    \end{align*}
    where the second equality uses~\eqref{eq:zstruct1} and $\ell_y=\ell_x$.
    Let $H_y$ denote the hash multiset of $y$.
    Then, we know that $|H_x\triangle H_y|\leq 2$.
    Therefore, we can recover $H_x$ from $H_y$, which means that we can recover $h(z^x_i)-h(z'_i)$.
    Indeed, if $h(z^x_i)-h(z'_i)=0$ then $H_x=H_y$.
    On the other hand, if $h(z^x_i)-h(z'_i)\neq 0$ then $|H_x\triangle H_y|= 2$ and we recover both $h(z^x_i)$ (the element in $H_x$ but not in $H_y$) and $h(z'_i)$ (the element in $H_y$ but not in $H_x$).
    As a result, we know $m+h(z^x_i)-h(z'_i)$.
    Since it also holds that $m+h(z^x_i)-h(z'_i)\neq 0$ (because $|h(z^x_i)-h(z'_i)|<m$), we can recover $i$ from $f(x)-f(y)$.
    This gives a window $W$ of length at most $\Delta=O(\log n)$.
    
    \item $\ell_y=\ell_x-1$: In this case, the marker at the end of $z^x_i$ is destroyed, merging $z^x_i$ and $z^x_{i+1}$.
    Observe that if $i=\ell_x$ then we can simply detect that the last marker in $x$ was destroyed.
    Therefore, we assume that $i<\ell_x$, in which case we have
    \begin{equation}\label{eq:zstruct2}
        z^y=(z^x_1,\dots,z^x_{i-1},z'_i,z^x_{i+2},\dots,z^x_{\ell_x}),
    \end{equation}
    where $|z'_i|=|z^x_i|+|z^x_{i+1}|-1$.
    Consequently, it holds that
    \begin{align*}
        f(x)-f(y)&=\sum_{j=1}^{\ell_x} j(|z^x_j|\cdot m+h(z^x_j))-\sum_{j=1}^{\ell_y} j(|z^y_j|\cdot m+h(z^y_j)) \mod L \\
        &=i(|z^x_i|\cdot m+h(z^x_i))+(i+1)(|z^x_{i+1}|\cdot m+h(z^x_{i+1}))-i(|z'_i|\cdot m+h(z'_i))\\
        &+\sum_{j=i+2}^{\ell_x}(|z^x_j|\cdot m+h(z^x_j))\\
        &=\sum_{j=i+2}^{\ell_x}(|z^x_j|\cdot m+h(z^x_j))+i(m+h(z^x_i)+h(z^x_{i+1})-h(z'_i))\\
        &+(|z^x_{i+1}|\cdot m+h(z^x_{i+1})).
    \end{align*}
    Note that, since $|H_x\triangle H(y)|\leq 3$, we can recover $H_x$ from $H_y$.
In particular, this means that we know $h(z^x_i)+h(z^x_{i+1})-h(z'_i)$.
Therefore, for $i'=\ell_y-1,\ell_y-2,\dots,i$ we can compute the ``potential function''
\begin{align*}
    \ap(i')&=\sum_{j=i'+1}^{\ell_y}(|z^y_j|\cdot m+h(z^y_j))+i'(m+h(z^x_i)+h(z^x_{i+1})-h(z'_i))\\
    &=\sum_{j=i'+2}^{\ell_x}(|z^x_j|\cdot m+h(z^x_j))+i'(m+h(z^x_i)+h(z^x_{i+1})-h(z'_i)).
\end{align*}
Note that
\begin{equation}\label{eq:approx1}
    |\ap(i)-(f(x)-f(y))|=||z^x_{i+1}|\cdot m+h(z^x_{i+1})|\leq \Delta\cdot m+m\leq 10^7\log^2 n.
\end{equation}
Moreover, we also have
\begin{multline}\label{eq:monotone1}
    \ap(i'-1)-\ap(i')=|z^x_{i'+1}|\cdot m+h(z^x_{i'+1})-(m+h(z^x_i)+h(z^x_{i+1})-h(z'_i))\\\geq 4m-3m=m.
\end{multline}
This suggests the following procedure for recovering the window $W$.
Sequentially compute $\ap(i')$ for $i'$ starting at $\ell_y-1$ until we find $i^\star\geq i$ such that $|\ap(i')-(f(x)-f(y))|\leq 10^6\log^2 n$.
This is guaranteed to exist since $i'=i$ satisfies this property.
We claim that $i^\star-i\leq 10^7\log n$.
In fact, if this is not the case then the monotonicity property in~\eqref{eq:monotone1} implies that 
\begin{equation*}
    |\ap(i)-(f(x)-f(y))|> m\cdot 10^7\log n>10^7\log^2 n,
\end{equation*}
contradicting~\eqref{eq:approx1}.
Since $|z^x_j|\leq \Delta$ for every $j$, recovering $i^\star$ also yields a window $W\subseteq [n]$ of size
\begin{equation*}
    |W|=10^6\log n\cdot\Delta=10^9\log^2 n
\end{equation*}
containing the error position, as desired.

    \item $\ell_y=\ell_x+1$: This case is similar to the previous one.
    We present it for completeness.
    The deletion causes the segment $z^x_i$ to be split into two consecutive segments $z'_i$ and $z''_i$ such that $|z'_i|+|z''_i|=|z^x_i|-1$.
    Therefore, we have
    \begin{equation}\label{eq:zstruct3}
        z^y=(z^x_1,\dots,z^x_{i-1},z'_i,z''_i,z^x_{i+1},\dots,z^x_{\ell_x}).
    \end{equation}
    We may compute
    \begin{align*}
        f(x)-f(y)&=\sum_{j=1}^{\ell_x} j(|z^x_j|\cdot m+h(z^x_j))-\sum_{j=1}^{\ell_y} j(|z^y_j|\cdot m+h(z^y_j)) \mod L \\
        &=i(|z^x_i|\cdot m+h(z^x_i))-i(|z'_i|\cdot m+h(z'_i))-(i+1)(|z''_i|\cdot m+h(z''_i))\\
        &-\sum_{j=i+1}^{\ell_x}(|z^x_j|\cdot m+h(z^x_j))\\
        &=-\sum_{j=i+1}^{\ell_x}(|z^x_j|\cdot m+h(z^x_j))+i(m+h(z^x_i)-h(z'_i)-h(z''_i))\\
        &-(|z''_i|\cdot m+h(z''_i)).
    \end{align*}
    As in the previous case, we can recover $H_x$ from $H_y$, and this implies we can also recover $h(z^x_i)-h(z'_i)-h(z''_i)$.
    Therefore, for $i'\geq i$ we can compute
    \begin{align*}
        \ap(i')&=-\sum_{j=i'+2}^{\ell_y}(|z^y_j|\cdot m+h(z^y_j))+i'(m+h(z^x_i)-h(z'_i)-h(z''_i))\\
        &=-\sum_{j=i'+1}^{\ell_x}(|z^x_j|\cdot m+h(z^x_j))+i'(m+h(z^x_i)-h(z'_i)-h(z''_i)).
    \end{align*}
    Then,
    \begin{equation}\label{eq:approx2}
        |\ap(i)-(f(x)-f(y))|=||z''_i|\cdot m+h(z''_i)|\leq \Delta\cdot m+m\leq 10^7\log^2 n,
    \end{equation}
    since $|z''_i|\leq |z^x_i|\leq \Delta$.
    Furthermore, for $i'>i$ we have
    \begin{multline}\label{eq:monotone2}
        \ap(i')-\ap(i'-1)=|z^x_{i'}|\cdot m+h(z^x_{i'})+(m+h(z^x_i)-h(z'_i)-h(z''_i))\\\geq 4m-2m=2m.
    \end{multline}
    As in the previous case, we can exploit~\eqref{eq:approx2} and~\eqref{eq:monotone2} to recover an appropriate window $W\subseteq [n]$ of size at most $10^9\log^2 n$.
\end{enumerate}

\subsubsection{Locating one transposition}\label{sec:trans}

In this section, we show how we can localize one transposition appropriately.
Fix $x\in\cC$ and suppose that a transposition is applied with the left bit in $z^x_i$ (note the right bit may be in $z^x_{i+1}$).
Then, the following lemma holds.
\begin{lem}\label{lem:casestrans}
    A transposition either (i) Creates a new marker and does not delete any existing markers, in which case $\ell_y=\ell_x+1$, (ii) Deletes an existing marker and does not create any new markers, in which case $\ell_y=\ell_x-1$, (iii) Neither deletes existing markers nor creates new markers, in which case $\ell_y=\ell_x$, (iv) Deletes two existing consecutive markers and does not create any new markers, in which case $\ell_y=\ell_x-2$, or (v) Creates two consecutive new markers but does not delete any existing markers, in which case $\ell_y=\ell_x+2$.
\end{lem}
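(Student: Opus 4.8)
The plan is to follow the template of the proof of Lemma~\ref{lem:casesdel}, but to track marker creation and destruction more carefully, since an adjacent transposition can perturb the local structure more than a single deletion. The starting observation is that a transposition at position $k$ alters only the two bits $x_k$ and $x_{k+1}$ (and is nontrivial only when $x_k\neq x_{k+1}$). Consequently, the only occurrences of the marker $0011$ that can be created or destroyed are those whose length-$4$ window contains $k$ or $k+1$, namely the windows starting at positions $j\in\{k-3,k-2,k-1,k,k+1\}$. Every other marker occurrence is untouched, so $\ell_y-\ell_x$ equals the net change in the number of markers among these five windows.

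First I would invoke the non-overlap property of the marker: since $0011$ cannot overlap a shifted copy of itself (checking offsets $1,2,3$ each forces a contradiction on some bit), any two marker occurrences must start at positions differing by at least $4$. In particular, among the five candidate windows at most two markers can be present at once, and if two are present they must start exactly at $k-3$ and $k+1$, hence form a consecutive pair $0011\,0011$. This already yields $|\ell_y-\ell_x|\leq 2$ and pins the $\pm 2$ cases to these two adjacent windows.

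The core of the argument is a case split on the transposition type, $01\mapsto 10$ or $10\mapsto 01$. For each type I would determine precisely which of the five windows can lose a marker (consistent with the bits \emph{before} the swap) and which can gain one (consistent with the bits \emph{after} the swap). For $01\mapsto 10$ a short check shows the only window that can lose a marker starts at $k-1$, while the only windows that can gain one start at $k-3$ or $k+1$; the case $10\mapsto 01$ is the time-reversal and behaves symmetrically, so destruction and creation swap roles.

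The main obstacle, and the step that makes the clean five-way classification possible, is ruling out that a single transposition simultaneously creates \emph{and} destroys markers (which would produce a net change not of the stated form). I would dispatch this by showing the relevant bit constraints are incompatible: for $01\mapsto 10$, destroying the marker at $k-1$ forces $x_{k-1}=0$ and $x_{k+2}=1$ before the swap, whereas creating a marker at $k-3$ forces $x_{k-1}=1$ and creating one at $k+1$ forces $x_{k+2}=0$, each a contradiction. Hence any transposition either only creates markers (one or two) or only destroys them (one or two), with a change of magnitude $2$ arising exactly from the consecutive windows at $k-3$ and $k+1$. Matching each possibility to the sign and magnitude of $\ell_y-\ell_x$ produces the five cases (i)--(v).
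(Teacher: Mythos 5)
Your proof is correct, and it is organized differently from the paper's. The paper proves this lemma by a case analysis on the run containing the leftmost transposed bit (run of length at least $3$, a $0$-run or $1$-run of length $2$, a $0$-run or $1$-run of length $1$), arguing in each case that either no marker is created or no marker is destroyed, and exhibiting the extremal patterns $001\underline{10}011$ and $001\underline{01}011$ for the $\pm 2$ cases. You instead localize the analysis to the five length-$4$ windows starting at $k-3,\dots,k+1$ that can touch the swapped positions, use the non-self-overlap of $0011$ to cap the count at two and to pin the two-marker configurations to the windows $k-3$ and $k+1$, and then split on the transposition direction to show via explicit bit constraints on $x_{k-1}$ and $x_{k+2}$ that creation and destruction are mutually exclusive. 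The two routes are equivalent in substance (the transposition direction is exactly the value of the leftmost bit, which drives the paper's run cases), but yours makes two things fully explicit that the paper leaves implicit in its examples: the impossibility of simultaneous creation and destruction, and the reason the two affected markers in cases (iv) and (v) must be consecutive. The paper's version is shorter and matches the run-based intuition used elsewhere in Section 5; yours is more mechanical and easier to verify exhaustively.
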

\begin{proof}
We obtain the desired statement via case analysis.
If the leftmost bit of the adjacent transposition belongs to a run of length at least $3$ in $x$, then no marker is created and at most one marker is destroyed, and likewise for the case where the leftmost bit belongs to some $0$-run of length $2$.
On the other hand, the leftmost bit belongs to a $1$-run of length $2$, then no marker is created, but at most two markers may be destroyed (consider applying one transposition to the underlined bits in $001\underline{10}011$).
If the leftmost bit belongs to a $0$-run of length $1$, then no marker is destroyed and at most two consecutive markers may be created (consider applying one transposition to the underlined bits in $001\underline{01}011$).
Finally, if the leftmost bit belongs to a $1$-run of length $1$, then no marker is destroyed and at most one marker is created (consider applying one transposition to the underlined bits in $0\underline{10}1$).
\end{proof}

As before, we can distinguish between the cases detailed in Lemma~\ref{lem:casestrans} by comparing $g_1(x)$ and $g_1(y)$.
Cases (i), (ii), and (iii) in Lemma~\ref{lem:casestrans} are analogous to the respective cases considered for a deletion in Section~\ref{sec:del}.
Therefore, we focus on cases (iv) and (v).

\begin{enumerate}

    \item $\ell_y=\ell_x$: In this case, we have
    \begin{equation*}
        f(x)-f(y) = i(h(z^x_i)-h(z'_i)),
    \end{equation*}
    and we can recover $i$ by first recovering $h(z^x_i)-h(z'_i)\neq 0$, which holds because $z'_i$ is obtained from $z^x_i$ via one transposition.
    
    \item $\ell_y=\ell_x-1$: In this case, we have
    \begin{equation*}
        f(x)-f(y)=\sum_{j=i+2}^{\ell_x}(|z^x_j|\cdot m+h(z^x_j))+i(h(z^x_i)+h(z^x_{i+1})-h(z'_i))
        +(|z^x_{i+1}|\cdot m+h(z^x_{i+1})),
    \end{equation*}
    and we can then use the exact same approach as in Case (ii) from Section~\ref{sec:del}.
    
    \item $\ell_y=\ell_x+1$: In this case, we have
    \begin{equation*}
        f(x)-f(y)=-\sum_{j=i+1}^{\ell_x}(|z^x_j|\cdot m+h(z^x_j))+i(h(z^x_i)-h(z'_i)-h(z''_i))-(|z''_i|\cdot m+h(z''_i)),
    \end{equation*}
    and we can then use the exact same approach as in Case (iii) from Section~\ref{sec:del}.

    \item $\ell_y=\ell_x-2$: In this case, two consecutive markers are deleted and no new markers are created, so $z^x_i$, $z^x_{i+1}$, and $z^x_{i+2}$ are merged into a corrupted segment $z'_i$ satisfying $|z'_i|=|z^x_i|+|z^x_{i+1}|+|z^x_{i+2}|$.
    In general, we have
    \begin{equation*}
        z^y=(z^x_1,\dots,z^x_{i-1},z'_i,z^x_{i+3},\dots,z^x_{\ell_x}),
    \end{equation*}
    and so
    \begin{align*}
        f(x)-f(y)&=\sum_{j=1}^{\ell_x} j(|z^x_j|\cdot m+h(z^x_j))-\sum_{j=1}^{\ell_y} j(|z^y_j|\cdot m+h(z^y_j)) \mod L\\
        &=2\sum_{j=i+3}^{\ell_x}(|z^x_j|\cdot m+h(z^x_j))+i(h(z^x_i)+h(z^x_{i+1})+h(z^x_{i+2})-h(z'_i))\\
        &+(|z^x_{i+1}|\cdot m+h(z^x_{i+1}))+2(|z^x_{i+2}|\cdot m+h(z^x_{i+2})).
    \end{align*}
    Since $|H_x\triangle H_y|\leq 4$, we can recover $H_x$ from $H_y$, which implies that we can recover $h(z^x_i)+h(z^x_{i+1})+h(z^x_{i+2})-h(z'_i)$.
    As before, this means that for $i'\geq i$ we can compute the potential function
    \begin{align*}
        \ap(i')&=2\sum_{j=i'+1}^{\ell_y}(|z^y_j|\cdot m+h(z^y_j))+i'(h(z^x_i)+h(z^x_{i+1})+h(z^x_{i+2})-h(z'_i))\\
        &=2\sum_{j=i'+3}^{\ell_x}(|z^x_j|\cdot m+h(z^x_j))+i'(h(z^x_i)+h(z^x_{i+1})+h(z^x_{i+2})-h(z'_i)).
    \end{align*}
    Exploiting the fact that
    \begin{align*}
        |\ap(i)-(f(x)-f(y))|&= (|z^x_{i+1}|\cdot m+h(z^x_{i+1}))+2(|z^x_{i+2}|\cdot m+h(z^x_{i+2}))\\
        &\leq 3(\Delta\cdot m+m)\\
        &\leq 10^{10}\log^2 n
    \end{align*}
    and
    \begin{equation*}
        \ap(i'-1)-\ap(i')\geq 8m - 3m=5m,
    \end{equation*}
    we can use the approach from Section~\ref{sec:del} to recover the relevant window $W\subseteq [n]$ of size at most $10^{10}\log^3 n$ containing the error position in $y$.
    
    \item $\ell_y=\ell_x+2$: This case is similar to the previous one.
    Two consecutive markers are created and none are deleted, meaning that $z^x_i$ is transformed into two consecutive corrupted segments $z'_i$, $z''_i$, and $z'''_i$.
    Therefore,
    \begin{equation*}
        z^y=(z^x_1,\dots,z^x_{i-1},z'_i,z''_i,z'''_i,z^x_{i+1},\dots,z^x_{\ell_x})
    \end{equation*}
    with $|z^x_i|=|z'_i|+|z''_i|+|z'''_i|$.
    We have
    \begin{align*}
        f(x)-f(y)&=\sum_{j=1}^{\ell_x} j(|z^x_j|\cdot m+h(z^x_j))-\sum_{j=1}^{\ell_y} j(|z^y_j|\cdot m+h(z^y_j)) \mod L\\
        &=-2\sum_{j=i+1}^{\ell_x}(|z^x_j|\cdot m+h(z^x_j))+i(h(z^x_i)-h(z'_i)-h(z''_i)-h(z'''_i))\\
        &-(|z''_i|\cdot m+h(z''_i))-2(|z'''_i|\cdot m+h(z'''_i)).
    \end{align*}
    As above, we can recover $H_x$ from $H_y$ and thus also recover $h(z^x_i)-h(z'_i)-h(z''_i)-h(z'''_i)$.
    Consequently, for $i'\geq i$ we can compute the potential function
    \begin{align*}
        \ap(i')&=-2\sum_{j=i'+3}^{\ell_y}(|z^y_j|\cdot m+h(z^y_j))+i'(h(z^x_i)-h(z'_i)-h(z''_i)-h(z'''_i))\\
        &=-2\sum_{j=i'+1}^{\ell_x}(|z^x_j|\cdot m+h(z^x_j))+i'(h(z^x_i)-h(z'_i)-h(z''_i)-h(z'''_i)).
    \end{align*}
    Since
    \begin{align*}
        |\ap(i)-(f(x)-f(y))|&= (|z''_i|\cdot m+h(z''_i))+2(|z'''_i|\cdot m+h(z'''_i))\\
        &\leq 3(\Delta\cdot m+m)\\
        &\leq 10^{10}\log^2 n
    \end{align*}
    and
    \begin{equation*}
        \ap(i')-\ap(i'-1)\geq 8m - 3m=5m,
    \end{equation*}
    we can follow the previous approach to recover a window $W\subseteq [n]$ of size at most $10^{10}\log^3 n$ containing the error position.
\end{enumerate}

\section{Open problems}

Our work leaves open several natural avenues for future research.
We highlight a few of them here:
\begin{itemize}

    \item Given the effectiveness of weighted VT sketches in the construction of nearly optimal non-binary single-edit correcting codes in Section~\ref{sec:nonbinedit} with fast encoding and decoding, it would be interesting to find further applications of this notion.
    
    \item We believe that the code we introduce and analyze in Section~\ref{sec:delandsub2} is actually uniquely decodable under one deletion and one substitution.
    Proving this would be quite interesting, since then we would also have explicit uniquely decodable single-deletion single-substitution correcting codes with redundancy matching the existential bound, analogous to what is known for two-deletion correcting codes~\cite{guruswami2021explicit}.
    
    \item The code we designed in Section~\ref{sec:deltrans} fails to correct an arbitrary substitution.
    Roughly speaking, the reason behind this is that one substitution may simultaneously destroy and create a marker with a different starting point.
    As the clear next step, it would be interesting to show the existence of a binary code correcting one \emph{edit} error or one transposition with redundancy $\log n+O(\log\log n)$.
\end{itemize}

\bibliographystyle{alpha}
\bibliography{reference}

\appendix

\section{Naive sketch for one deletion or one transposition}\label{app:sketchdeltrans}

In this section, we provide a concrete instantiation of the sketch $\widehat{f}$ used in Section~\ref{sec:correctfromapprox} which is implicit in~\cite{GYM18}.
Let $L'=2L+1$.
We claim that we may take $\widehat{f}:\bits^{L'}\to\bits^\ell$ of the form
\begin{equation*}
    \widehat{f}(z)=\mathsf{bin}\left(\sum_{i=1}^n iz_i \mod (L'+1), \sum_{i=1}^n i\overline{z}_i \mod (2L'+1)\right),
\end{equation*}
where $\mathsf{bin}$ denotes binary expansion up to $\lceil \log(2L'+1)\rceil$ bits and $\overline{z}_i=\sum_{j=1}^i z_j\mod 2$.
Note that in this case $\ell = O(\log L)$, as desired.

It remains to see that $\widehat{f}$ above satisfies the desired property.
Suppose that $y$ is obtained from $z\in\bits^{L'}$ via at most one deletion or one transposition.
Our goal is to show that we can determine $z$ uniquely from $y$ and $\widehat{f}(z)$.
First, note that we can detect if a deletion occurred by computing $|y|$.
If $|y|=L'-1$, then, as shown by Levenshtein~\cite{Lev65}, we can use $y$ and the first part of $\widehat{f}(z)$ to recover $z$.
Else, if $|y|=L'$, then we observe that an adjacent transposition in $z$ is equivalent to a substitution in $\overline{z}$.
Therefore, as shown as well by Levenshtein~\cite{Lev65}, we can use $y$ and the second part of $\widehat{f}(z)$ to recover $z$ since there is a unique correspondence between $z$ and $\overline{z}$.

\end{document}